\tikzstyle{every node} = [draw, circle, fill = black, minimum size = 4pt, inner sep = 0pt]
\tikzstyle{normal} = [draw=none, fill = none]
\title{Variants of  Plane Diameter Completion\thanks{The first author was supported by the European Research Council under the European Union's Seventh Framework Programme (FP/2007-2013) / ERC Grant Agreement n. 267959.
The second author was supported by the FP7-PEOPLE-2013-CIG project CountGraph (ref. 630749), the collateral PROCOPE-DAAD project RanConGraph (ref. 57134837), and the Berlin Mathematical School.
The research of the third author was co-financed by the European Union (European Social Fund ESF) and Greek national funds through the Operational Program ``Education and Lifelong Learning'' of the National Strategic Reference Framework (NSRF), Research Funding Program: ARISTEIA~II.}~\thanks{\footnotesize  Emails: \texttt{Petr.Golovach@ii.uib.no}, \texttt{requile@math.fu-berlin.de}, \texttt{sedthilk@thilikos.info}.}}
\author{Petr A. Golovach\thanks{Department of Informatics, University of Bergen, Bergen, Norway.
} \and  
Clément Requilé\thanks{Freie Universit\"at Berlin, Institut f\"ur Mathematik  und Informatik, Berlin, Germany.
}
\and Dimitrios M. Thilikos\thanks{AlGCo project team, CNRS, LIRMM, France, Department of Mathematics,
        University of Athens, 
Athens,        Greece, and Computer Technology Institute \& Press  ``Diophantus'',  Patras, Greece.}
}
\newcommand{\intv}[2]{\left \llbracket #1, #2 \right \rrbracket}
\renewcommand{\int}{{\bf int}}
\newcommand{\dist}{{\bf dist}}
\newcommand{\bw}{{\bf bw}}
\newcommand{\diam}{{\bf diam}}
\newcommand{\yes}{\mbox{\sc yes}}
\newcommand{\tw}{{\bf tw}}
\renewcommand{\deg}{\mathrm{deg}}
\newtheorem{theorem}{\bf Theorem}
\newtheorem{lemma}{\bf Lemma}
\newtheorem{corollary}{\bf Corollary}
\newtheorem{proposition}{\bf Proposition}
\date{\empty}
\begin{document}	
%\date{\today}
\maketitle

\begin{abstract}
\noindent The {\sc Plane Diameter Completion} problem asks, given a plane graph $G$ and a positive integer $d$, if it is a spanning subgraph of a plane graph $H$ that has diameter at most $d$.
We examine two variants of this problem where the input comes with another parameter $k$. In the first variant,  called BPDC, $k$ upper bounds the total number of edges to be added and in the second, called BFPDC, $k$ upper bounds the
number of additional edges per face.
We prove that both problems are {\sf NP}-complete, the first even for 3-connected graphs of face-degree at most 4 and the second even when $k=1$ on 3-connected graphs of face-degree at most 5.
In this paper we give parameterized algorithms for both problems that run in $O(n^{3})+2^{2^{O((kd)^2\log d)}}\cdot n$ steps. 
\end{abstract}

\section{Introduction}

In 1987, Chung \cite[Problem 5]{Chung87} introduced the following problem: find the optimum way to add $q$ edges to a given graph $G$ so that the resulting graph has minimum diameter. (Notice that in all problems defined in this paper we can directly assume that $G$ is a simple graph as loops do not contribute 
to the diameter of a graph and the same holds if we take simple edges instead of multiple ones.) This problem was proved to be {\sf NP}-hard if the aim is to obtain a graph of diameter at most 3~\cite{SchooneBL87}, and later the {\sf NP}-hardness was shown even for the {\sc Diameter-2 Completion} problem~\cite{ChungLMCTSL92}. It is also know that {\sc Diameter-2 Completion} is {\sf W[2]}-hard when parameterized by $q$~\cite{GaoHN13}.

For planar graphs,  Dejter and Fellows introduced in~\cite{DejterF93impr} the {\sc Planar Diameter Completion} problem that asks whether it is possible to obtain a planar graph of diameter at most $d$ from a given planar graph by edge additions. It is not known whether  {\sc Planar Diameter Completion} admits a polynomial time algorithm, but  
 Dejter and Fellows showed that, when parameterized by $d$, {\sc Planar Diameter Completion} is fixed parameter tractable~\cite{DejterF93impr}. The proof is based on the fact that the \yes-instances of the problem are closed under taking minors. Because of the Robertson and Seymour theorem~\cite{RobertsonS-XX} and the algorithm in~\cite{RobertsonS95b}, this implies that, for each $d$, the set of graphs $G$ for which $(G,d)$ is a \yes-instance can be characterized by a {\sl finite} set of forbidden
minors. This fact, along with the minor-checking algorithm in~\cite{RobertsonS95-XIII} implies that there 
exists an $O(f(d)\cdot n^{3})$-step algorithm (i.e. an {\sf FPT}-algorithm) deciding 
whether a plane graph  $G$ has a plane completion of  
diameter at most $d$.  Using the parameterized complexity, this means that 
{\sc Planar Diameter Completion} is {\sf FPT}, when parameterized by $d$.
To make this result constructive, one requires the set of forbidden minors for each $d$,
which is unknown. To find a constructive {\sf FPT}-algorithm for 
this parameterized problem remains a major open problem in parameterized algorithm design.

\medskip\noindent{\bf Our results.}
We denote by $\Bbb{S}_{0}$ the $3$-dimensional  sphere. By a \emph{plane} graph $G$ we
mean a simple planar graph $G$ with the vertex set $V(G)$ and the edge
set $E(G)$ drawn in $\Bbb{S}_0$ such that 
no two edges of this embedding intersect. 
A plane graph $H$ is a {\em a plane completion} (or, simply {\em completion}) of another 
 plane graph $G$ if $H$ is a spanning subgraph of $G$.
A {\em $q$-edge completion} of a plane graph $G$ is a completion $H$ of $G$ where $|E(H)|-|E(G)|\leq q$.
A  {\em $k$-face completion} of a plane graph $G$ is a completion $H$ of $G$ where at most $k$ edges are added in each face of $G$.

In this paper we consider the variants of the {\sc Plane Diameter Completion} problem:

\begin{center}
\fbox{\small\begin{minipage}{12.7cm}
\noindent{\sc Plane Diameter Completion (PDC)}\\
{\sl Input}: a plane graph $G$ and $d\in \mathbb{N}_{\geq 1}$.\\
{\sl Output}: is there a completion of $G$ with diameter  at most $d$?
\end{minipage}}
\end{center}

\noindent
Notice that the important difference between {\sc PDC} and the aforementioned problems is that we consider plane graphs, i.e., the aim is to reduce the diameter of a given embedding of a planar graph preserving the embedding.
In particular we are interested in the following variants:

\begin{center}
\fbox{\small\begin{minipage}{12.7cm}
\noindent{\sc Bounded Budget PDC (BPDC)}\\
{\sl Input}: a plane graph $G$ and $q\in \mathbb{N}, d\in\Bbb{N}_{\geq 1}$ \\
{\sl Question}: is there a completion $H$ of $G$ of diameter  at most $d$ that is also a $q$-edge completion?
\end{minipage}}
\end{center}

\begin{center}
\fbox{\small\begin{minipage}{12.7cm}
\noindent{\sc Bounded Budget/Face PDC (BFPDC)}\\
{\sl Input}: a plane graph $G$ and  $k\in \mathbb{N}, d\in\Bbb{N}_{\geq 1}$. \\
{\sl Question}: is there a completion $H$ of $G$ of diameter  at most $d$ that is also a $k$-face completion?
\end{minipage}}
\end{center}

We examine the complexity of the two above problems. Our hardness results are the following.

\begin{theorem}
\label{thm:NP-c}
Both {\sc BPDC} and {\sc BFPDC}  are {\sf NP}-complete. Moreover,
{\sc BPDC} is {\sc NP}-complete even for 3-connected graphs of face-degree at most 4, and
{\sc BFPDC} is {\sc NP}-complete even for $k=1$ on 3-connected graphs of face-degree at most 5.
\end{theorem}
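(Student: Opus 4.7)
The plan is, first, to place both problems in \textsf{NP} by guessing the set $F$ of added edges together with, for each $e\in F$, a face of $G$ hosting it; verification that chords assigned to a common face are pairwise non-crossing, that $|F|\le q$ (respectively, that no face receives more than $k$ chords), and that the resulting plane graph has diameter at most $d$ is straightforward in polynomial time via a batch of breadth-first searches.

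For the \textsf{NP}-hardness of \textsc{BPDC} on $3$-connected graphs of face-degree at most $4$, I would give a polynomial-time reduction from a canonical planar \textsf{NP}-hard problem (for instance \textsc{Planar $3$-SAT} or \textsc{Planar Vertex Cover} on cubic graphs), producing a quadrangulation whose $4$-faces are grouped into variable gadgets, clause gadgets, and a rigid backbone. A variable gadget would contain a distinguished $4$-face whose two possible diagonals encode the two truth values, with the budget $q$ calibrated so that exactly one diagonal per variable can be purchased. A clause gadget would contain a designated pair of vertices that is at distance greater than $d$ in $G$ and is brought within distance $d$ in $H$ only if at least one of its incident literal-diagonals has been selected in a way consistent with a satisfying assignment. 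The backbone keeps all remaining pairwise distances already below $d$ and, being itself face-quadrangulated and $3$-connected, transmits these properties to the whole construction.

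For \textsc{BFPDC} with $k=1$ on $3$-connected graphs of face-degree at most $5$, I would implement the same scheme around pentagonal faces. Each pentagon offers up to five candidate chords, but only one may be added; specific choices encode truth values, and clause gadgets again translate diameter inequalities into logical implications. The extra flexibility of pentagons versus quadrilaterals seems necessary to simultaneously achieve $3$-connectivity, keep the face-degree bounded by $5$, and use a per-face budget of precisely $1$ in place of the global budget~$q$.

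The chief obstacle is to coordinate three constraints at once: $3$-connectivity of the host graph, tightness of the face-degree bound, and a diameter threshold $d$ that forces exactly the intended combinatorial behaviour and nothing more. The delicate technical point is the interface between gadgets and backbone: no unintended shortcut should arise from chords placed in shared faces, while conversely the chords prescribed by any satisfying assignment of the source instance must really bring the diameter down to $d$. Once such a reduction is in hand, the unqualified \textsf{NP}-completeness of both problems follows as a corollary, since any instance of the restricted versions is in particular an instance of the unrestricted ones.
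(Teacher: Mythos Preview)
Your proposal sketches the right overall architecture---reduce from a planar satisfiability problem, encode truth values as choices of diagonals in small faces, and calibrate $d$ so that the diameter bound forces a satisfying assignment---and this is indeed the route the paper takes. However, what you have written is a plan rather than a proof: you explicitly name the chief obstacles (coordinating $3$-connectivity, the face-degree bound, and the diameter threshold; preventing unintended shortcuts at gadget interfaces) without resolving any of them, and the ``rigid backbone'' remains entirely unspecified. In particular, nothing in the sketch explains how a quadrangulation can be made $3$-connected while leaving precisely the variable $4$-faces as the only places where a new edge helps, nor how the ``designated far pair'' in a clause gadget is produced without introducing faces of large degree.

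The paper fills these gaps with concrete devices that are not evident from your sketch. First, it passes through an intermediate variant of planar SAT (\textsc{Plane Satisfiability with Connectivity of Variables}) in which a certain bipartite graph linking the variable-edges $\{x_i,\overline{x}_i\}$ to the faces of $G'_\phi$ is connected; this connectivity yields a BFS tree $T$ that organises the whole construction and lets all relevant distances be expressed in terms of tree-depths. Second, instead of a vague backbone it attaches a \emph{web} gadget $W_r$ inside every non-special face: this simultaneously triangulates the face (bringing its degree down to~$3$), enforces $3$-connectivity of the host, and provably creates no shortcuts between boundary vertices. Third, the far pairs are realised by \emph{mast} gadgets $M_h$ of carefully chosen heights attached at a root vertex $v_r$, at each clause vertex $C_j$, and at auxiliary leaf vertices $w_i$; their poles are the only vertices whose pairwise distances can exceed $d$, and the heights are tuned so that reaching a clause pole from the root pole within $d$ forces a chain of diagonal choices along the $(r,e)$-path in $T$, hence a consistent truth assignment. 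The budget $q=n$ then forces exactly one diagonal per variable. The \textsc{BFPDC} version is obtained by merging the two $4$-faces $f_i,\overline{f}_i$ of each variable into a single $5$-face, so that the per-face budget $k=1$ replaces the global budget. Without these mechanisms---or concrete equivalents that you would still have to invent and verify---your proposal does not yet constitute a proof.
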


The  hardness results are proved 
%PG: For the arxiv version
in Section~\ref{npouiwe3d}
using a series of reductions departing  from the \textsc{Planar $3$-Satisfiability} problem that was shown to be {\sf NP}-hard  by Lichtenstein in~\cite{Lichtenstein82}.

The  results of Theorem~\ref{thm:NP-c} prompt us to examine the parameterized complexity\footnote{For more  on parameterized complexity, we refer the reader 
to~\cite{DowneyF13fund}.} of the above problems.
For this, we consider the following  general problem:

\begin{center}
\fbox{\small\begin{minipage}{12.7cm}
\noindent{\sc Bounded Budget  and Budget/Face BDC (BBFPDC)}\\
{\sl Input}: a plane graph $G$,  $q\in\Bbb{N}\cup\{\infty\}$, $k\in \mathbb{N}$, and  $d\in\Bbb{N}_{\geq 1}$. \\
{\sl Question}:  is there a completion $H$ of $G$ of diameter  at most $d$ that is also a $q$-edge completion and a $k$-face completion?
\end{minipage}}
\end{center}
 
Notice that when $q=\infty$ BBFPDC yields BFPDC   and when $q=k$ BBFPDC yields BPDC.
Our main result is that BBFPDC is fixed parameter tractable (belongs in the parameterized class {\sf FPT}) when parameterized by $k$ and $d$.

\begin{theorem}
\label{one}
It is possible to construct an  $O(n^{3})+2^{2^{O((kd)\log d)}} \cdot (\alpha(q))^{2}\cdot n$-step
algorithm for {\sc BBFPDC}.
\end{theorem}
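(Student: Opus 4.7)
The plan is to combine a structural argument bounding the branchwidth of $G$ with a dynamic programming algorithm over a sphere-cut decomposition. First, observe that if $(G,q,k,d)$ is a yes-instance, the target completion $H$ is planar of diameter (hence radius) at most $d$, so by the classical planar radius/treewidth bound its branchwidth is $O(d)$; since $G$ is a spanning subgraph of $H$, $\bw(G)=O(d)$ as well. The algorithm therefore starts by running, in $O(n^{3})$ steps, the branch-decomposition algorithm of Seymour and Thomas on $G$; if the obtained width exceeds the explicit constant times $d$, it safely outputs \no. Otherwise we refine the output into a sphere-cut decomposition $(T,\mu)$ of width $w=O(d)$, where every edge $e$ of $T$ is witnessed by a noose $N_{e}\subseteq\Bbb{S}_{0}$ that meets $G$ exactly in a middle set $Z_{e}$ of size $O(d)$ and splits $G$ into two plane subgraphs.

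For every edge $e$ of $T$ we then design a signature that records all information about a partial completion inside the disk bounded by $N_{e}$ which is needed to extend it consistently outside: (i) the $d$-capped distance matrix between the vertices of $Z_{e}$ within the partial completion; (ii) for every face $f$ of $G$ crossing $N_{e}$, the number of additional edges already placed inside $f$ together with their cyclic arrangement on $N_{e}$; and (iii) a token recording how much of the global budget $q$ has been consumed, encoded in $O(\alpha(q))$ buckets. Each of the $O(d^{2})$ distance entries takes $O(\log d)$ bits, each boundary face contributes at most $d^{O(k)}$ cyclic patterns, and $O(d)$ boundary faces can be incident to $N_{e}$, so the total number of signatures stored at $e$ is $2^{2^{O((kd)\log d)}}\cdot\alpha(q)$.

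At each internal node of $T$, the tables of its two child edges are combined by iterating over all pairs of signatures that are locally compatible: distance matrices are composed via the min-plus product and reclamped at $d+1$, per-face edge counts are summed and must remain at most $k$, cyclic patterns of faces glued together along the noose are concatenated and checked for planarity, and the $q$-buckets are added. We accept at the root iff some signature has all pairwise distances at most $d$, total edge count at most $q$, and per-face count at most $k$. Each of the $O(n)$ merges costs the square of the number of signatures together with the $(\alpha(q))^{2}$ term needed to combine buckets, yielding in total $O(n^{3})+2^{2^{O((kd)\log d)}}\cdot(\alpha(q))^{2}\cdot n$ steps, as claimed.

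The main obstacle is establishing a \emph{representative-set lemma} for step (ii): one must show that the cyclic face patterns on $N_{e}$ encode exactly those planarity constraints that can still be violated by edges placed on the opposite side, and that among all partial completions sharing a signature a single representative suffices. Proving that the number of admissible cyclic patterns per boundary face is indeed $d^{O(k)}$, and that compatibility under noose gluing is decidable locally from the signatures alone, is where the bulk of the technical effort will go. A secondary subtlety is handling the global budget $q$ coherently through the bucketing scheme responsible for the $\alpha(q)$ factor, especially in the degenerate case $q=\infty$ where the bucket count collapses to one and BBFPDC reduces to BFPDC.
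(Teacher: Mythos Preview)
Your proposal diverges from the paper's approach in one essential methodological point and contains a genuine gap in the dynamic programming.

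\textbf{Methodological difference.} The paper does \emph{not} run dynamic programming on a sphere-cut decomposition of $G$. Instead it first builds the \emph{cylindrical enhancement} $H=G^{(\max\{3,k\})}$, a 3-connected supergraph in which grid-annuli are inserted in every face of $G$. Lemma~\ref{equiv} then reduces {\sc BBFPDC} to the question whether $G$ admits a $(q,k,d)$-extension in $H$, i.e.\ a suitable 3-partition $\{E^{0},E^{1},E^{\infty}\}$ of the \emph{existing} edges $E(H)\setminus E(G)$. The point of this detour is precisely to avoid what you attempt directly: encoding non-existing completion edges and their planar arrangement inside faces. In the enhanced graph the candidate ``new'' edges are already present, so the DP only has to guess a colouring of existing edges; planarity is automatic and no representative-set lemma for cyclic face patterns is needed. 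The price is that the decomposition is taken over $H$, whose branchwidth is $O(k\cdot d)$ (Lemma~\ref{lowke}), not $O(d)$; this is where the factor $k$ enters the exponent.

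\textbf{The actual gap.} Your signature stores only the $d$-capped distance matrix between the vertices of $Z_{e}$. This is not sufficient to certify the diameter bound: two vertices $u,v$ lying strictly inside the disk bounded by $N_{e}$ may have ${\sf w}$-distance larger than $d$ inside the partial solution, and whether a path through the complementary disk can fix this depends on the distances of $u$ and $v$ to \emph{each} vertex of $Z_{e}$, not on the pairwise $Z_{e}$-distances. The paper handles this with the \emph{$(e,{\bf p})$-dependency graph}: vertices of $V_{e}$ are grouped by their distance vector $\phi_{v}\colon X_{e}\to\{0,\ldots,d+1\}$, and an edge (or loop) is stored between two profiles whenever some pair of vertices with those profiles is still farther than $d$ apart. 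It is exactly this structure that forces the double exponential $2^{2^{O(b\log d)}}$ in the table size; your distance matrix alone accounts only for the single-exponential $2^{O(b^{2}\log d)}$ term. Without the dependency graph, your acceptance test at the root (``all pairwise distances at most $d$'') is vacuous, since $Z_{e_{r}}=\emptyset$, and the algorithm would wrongly accept instances whose interior vertices are far apart.

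Finally, the part you flag as the ``main obstacle'' --- bounding and locally merging cyclic face patterns for completion edges --- is precisely what the cylindrical-enhancement trick was designed to bypass; the paper gives no direct argument for it, and neither do you.
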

(In the above statement and in the rest of this paper   we use the function $\alpha:\Bbb{N}\cup\{\infty\}\rightarrow\Bbb{N}$ such that 
 if $q=\infty$, then $\alpha(q)=1$, otherwise $\alpha(q)= q$.)

The main ideas of the algorithm of Theorem~\ref{one} are the following. 
We first observe that {\sc yes}-instances of {\sc PDC} and all its variants 
have bounded branchwidth (for the definition of branchwidth, see Section~\ref{defls5t}). 
The typical approach in this case is to derive an {\sf FPT}-algorithm by
either expressing the problem in Monadic Second 
Order Logic -- MSOL (using Courcelle's theorem~\cite{Courcelle97}) or to design a dynamic programming 
algorithm for this problem.
However, for completion problems, this is not really plausible as this logic can quantify 
on {\sl existing} edges or vertices of the graph and not on the ``non-existing'' completion edges. 
This also indicates that to  
design a dynamic programming algorithm for such problems is, in general, not
an easy task. 
In this paper we show how to tackle this problem 
for {\sc BBFPDC} (and its special cases {\sc BPDC} 
and {\sc BFPDC}). 
%PG:
Our approach is to deal with 
the input  $G$ as a part of a more 
complicated graph with $O(k^{2}\cdot n)$ additional edges, 
namely its {\em cylindrical  enhancement} $G'$ (see Section~\ref{rdekfgment} for the definition).
Informally, sufficiently large cylindrical grids are placed inside the faces of $G$ and then internally vertex disjoint paths in these grids can be used to emulate the edges of a solution of the original problem placed inside the corresponding faces. 
Thus, by the enhancement  we reduce  
{\sc BBFPDC} to a new problem  on $G'$ certified by a suitable 
3-partition of the additional edges. 
Roughly, this partition consists of the 1-weighted 
edges that should be added in the completion, the 0-weighted edges that should link these edges 
to the boundary of the face of $G$ where they will be inserted,  and the $\infty$-weighted edges 
that will be the (useless) 
rest of the additional edges. The new problem asks for such 
a partition that simulates a bounded diameter completion. 
%Our approach is to deal with 
%the input  $G$ as a part of a more 
%complicated graph with $O(k^{2}\cdot n)$ additional edges, 
%namely its {\em cylindrical  enhancement} $G'$ (see Section~\ref{rdekfgment} for the definition) and we reduce  
%{\sc BBFPDC} to a new problem  on $G'$ certified by a suitable 
%3-partition of the additional edges. 
%Roughly, this partition consists of the 1-weighted 
%edges that should be added in the completion, the 0-weighted edges that should link these edges 
%to the boundary of the face of $G$ where they will be inserted,  and the $\infty$-weighted edges 
%that will be the (useless) 
%rest of the additional edges. The new problem asks for such 
%a partition that simulates a bounded diameter completion. 
The good news is that, as long as the number of edges per face to be added is bounded, which is the case 
for  {\sc BBFPDC}, the new graph $G'$ has still bounded branchwidth and it is possible, in the new instance,
to quantify this 3-partition of the graph $G'$. However, even under these circumstances, 
to express the new  problem in Monadic Second Order Logic is not easy.
%, because the graph is plane (i.e. embedded) and not just planar 
%and the required conditions involve queries on the face of the given embedding.
%We believe that this should be indeed possible with some appropriate, but not trivial,
%enhancements of the input graph that would make it possible
%to translate the topological embedding of the input to an equivalent,
%purely combinatorial, question. All these together would  
%yield an $O(f(q,k,d)\cdot n)$-step algorithm for {\sc BBFPDC} that would classify it 
%in {\sf FPT}, when parameterized by $q,k,$ and $d$, and  without any explicit 
%estimation of the function $f$.
%
For these reasons we decided to follow the more technical 
approach of designing 
a dynamic programming algorithm that leads to the (better) complexity bounds of Theorem~\ref{one}.
This algorithm is quite involved 
due to the technicalities of the translation of the {\sc BBFPDC} to the new problem.
It runs on a sphere-cut decomposition of the plane embedding of $G'$ and its tables encode 
how a partial solution is behaving inside a closed disk whose boundary meets only 
(a few of) the edges of $G'$. We stress that this encoding
takes into account the topological embedding and not just 
the combinatorial structure of $G'$.
Sphere-cut decompositions as well as some 
necessary combinatorial structures for this encoding are 
presented in Section~\ref{jhf5tty}. The dynamic programming  algorithms is presented 
in Section~\ref{dmpop} and  is the most technical part of this paper.

\section{Definitions and preliminaries}
\label{defls5t}

Given a graph~$G,$ we denote by $V(G)$ (respectively $E(G)$)  the \emph{set
of vertices} (respectively \emph{edges}) of~$G$.
A graph~$G'$ is a \emph{subgraph} of a graph~$G$ if $V(G')
\subseteq V(G)$ and $E(G') \subseteq E(G),$ and we
denote this by~$G' \subseteq G$. Also, in case $V(G)=V(G')$, we say that $H$ is a {\em spanning subgraph} of $G$.
If $S$ is a set of vertices or a set of  edges of a graph $G,$
the graph $G \setminus S$ is the graph obtained from $G$ after
the removal of the elements of $S.$ If $S$ is a set of edges, we define $G[E]$ as the graph whose vertex 
set consists of the endpoints of the edges of $E$ and whose edge set of $E$.

\medskip\noindent{\bf Distance and diameter.}
Let $G$ be a graph and let ${\sf w}: E(G)\rightarrow \Bbb{N}\cup\{\infty\}$ (${\sf w}$ is a {\em  weighting 
of the edges of $G$}). Given two vertices $x,x'\in V(G)$ we call {\em $(x,x')$-path} every path of $G$ with $x$ and $x'$
as endpoints. We also 
define ${\sf w}$-\dist$_{G}(x,x')=\min\{{\sf w}(E(P))\mid P\mbox{\ is an $(x,x')$-path in $G$}\}$
and  ${\sf w}$-$\diam(G)=\max\{\mbox{\sf {\sf w}-}\dist_{G}(x,y)\mid x,y\in V(G)\}$ (if $G$ is not connected then $\mbox{\sf {\sf w}-}\diam(G)$ is infinite). When the graph is unweighted then we use \dist$_{G}$ and $\diam$ instead 
of ${\sf w}$-\dist$_{G}$ and ${\sf w}$-\diam.

\medskip\noindent{\bf Plane graphs.}  
To simplify notations on plane graphs, we 
 consider a plane graph $G$ as the union of the points of $\Bbb{S}_0$ in its embedding
 corresponding to its vertices
and edges. That way, a subgraph $H$ of $G$ can be seen as a graph
$H$ where $H\subseteq G$. The {\em faces} of a plane graph $G$, are the connected 
components of the set $\Bbb{S}_{0}\setminus G$. A vertex $v$ (an edge $e$ resp.)  of a plane graph $G$ is \emph{incident} to a face $f$ and, vice-versa, $f$ is incident to $v$ (resp. $e$) if $v$ (resp., $e$) lies on the boundary of $f$. Two faces $f_1,f_2$ are \emph{adjacent} if they have a common incident edges.
We denote by $F(G)$ the set of all faces of $G$.
The {\em degree} of a face $f\in F(G)$ is the number of edges incident to $f$ where bridges 
of $G$ count double in this number. The 
{\em face-degree} of $G$ is the maximum degree of a face in $F(G)$.
Given a face $f$ of $G$, we define $B_{G}(f)$ as  the graph 
whose set of points is the boundary of $f$
and whose vertices are the vertices incident to $f$. 

A set $\Delta \subseteq \Bbb{S}_0$ is an open disc
if it is homeomorphic to $\{(x,y):x^2 +y^2<1\}$.
Also, $\Delta$ is a {\em closed disk} of $\Bbb{S}_0$ if it is the closure of some open disk of $\Bbb{S}_0$.

\medskip\noindent{\bf Branch decomposition.}
Given a graph $H$ with $n$ vertices, a branch decomposition of $H$ is a pair $(T,\mu)$, where $T$ is a 
%binary tree (every internal vertex is of degree three) 
tree with all internal vertices of degree three
and $\mu : L \rightarrow E(H)$ is a bijection from the set of leaves of $T$ to the edges of $H$. For every edge $e$ of $T$, we define the middle set {\bf mid}($e)\subseteq V(H)$ as follows: if 
	 $T\setminus \{e\}$ has two connected components $T_1$ and $T_2$,
and 	 
	 for $i\in \{1,2\}$, let $H_i^e=H[\{\mu(f) : f\in L\cap V(T_i)]$,
	 and set {\bf mid}($e$) = $V(H_1^e)\cap V(H_2^e)$.
	
The width of $(T,\mu)$ is the maximum order of the middle sets over all edges of $T$, i.e. $\max \{|${\bf mid}$(e)| : e\in T\}$. The {\em branchwidth} of $H$ is the minimum width of 
a branch decomposition of $H$ and is denoted by  \bw($H$).

A {\em grid annulus} $\Gamma_{k,h}$ is the graph obtained by the cartesian product 
of a cycle of $k$ vertices and a path of $h$ vertices. We need the following result.

\begin{proposition}[\cite{GuT10impr}]
\label{gompl}
Let $G$ be a planar graph and $k,h$ be integers with $k \geq 3$ and $h \geq 1$. Then $G$ has either a minor isomorphic to  $\Gamma_{k,h}$ or a branch decomposition of width at most $k+2h-2$.
\end{proposition}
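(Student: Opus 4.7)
The plan is to prove the contrapositive: assume $\bw(G) \geq k+2h-1$ and exhibit $\Gamma_{k,h}$ as a minor of $G$. By the Robertson--Seymour duality between branchwidth and tangles, this assumption yields a tangle $\mathcal{T}$ of order $w := k+2h-1$ in $G$. Using the fixed planar embedding of $G$ into $\Bbb{S}_{0}$, we view closed curves in $\Bbb{S}_{0}$ that meet $G$ only at vertices as topological ``noses''; each such nose of length less than $w$ acquires a well-defined ``small'' side from $\mathcal{T}$, and noses of length at least $w$ behave as genuine separators of the tangle.

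The first combinatorial step is to produce $h$ pairwise nested, vertex-disjoint, noncontractible closed curves $\gamma_1 \supset \gamma_2 \supset \cdots \supset \gamma_h$ by an inward sweep argument. Starting from a minimum-length nose that is not $\mathcal{T}$-trivial, we iteratively peel off an outer annular shell, each time selecting a shortest nose that nests inside the previously chosen ones and still separates the remaining tangle nontrivially. The tangle axioms (in particular, no three small sides covering all edges) guarantee that, as long as fewer than $2h-1$ units of the budget $w$ have been consumed by previous curves, another such nose exists, and that the annulus between any two consecutive curves is crossed by at least $k$ pairwise vertex-disjoint paths of $G$.

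Realizing each $\gamma_i$ as a cycle $C_i$ in $G$ by perturbing it along incident edges, and collecting the guaranteed $k$ vertex-disjoint ``radial'' paths $P_1, \ldots, P_k$ that pierce every $C_i$ transversally, we obtain a subdivision of $\Gamma_{k,h}$ after deleting edges outside $\bigcup_i C_i \cup \bigcup_j P_j$ and contracting arcs within each annular region between consecutive $C_i$'s. The delicate step, and the one on which the precise $k+2h-2$ bound rests, is the sweep in the previous paragraph: maintaining nestedness and individual minimality of the $\gamma_i$ without destroying the $k$ vertex-disjoint crossing paths requires careful rerouting arguments (swapping subpaths of $P_j$ across some $\gamma_i$) that must respect the tangle's orientation on every nose encountered. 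This topological bookkeeping is the main obstacle and constitutes the technical heart of the Gu--Tamaki argument.
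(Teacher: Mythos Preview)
The paper does not prove this proposition at all: it is stated with the citation \cite{GuT10impr} and used as a black box, with no accompanying proof in the text. There is therefore nothing in the paper to compare your attempt against.

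As for your sketch on its own terms, it is in the right spirit (tangle of large order $\Rightarrow$ system of nested cycles plus many radial paths $\Rightarrow$ cylindrical grid minor), but several points are off or too loose to count as a proof. First, on $\Bbb{S}_{0}$ every simple closed curve is contractible, so ``noncontractible'' is not the property you want; what you need is that each noose separates the tangle nontrivially, i.e., neither side is small. Second, a noose meets $G$ only in vertices, so ``perturbing it along incident edges'' to get a cycle of $G$ is not automatic: the vertices on a noose need not even be pairwise adjacent, and producing genuine graph cycles $C_1,\ldots,C_h$ that are nested and vertex-disjoint is one of the actual pieces of work, not a formality. Third, your accounting for the bound is hand-wavy: you assert that ``as long as fewer than $2h-1$ units of the budget $w$ have been consumed \ldots\ another such nose exists'' and that ``the annulus between any two consecutive curves is crossed by at least $k$ pairwise vertex-disjoint paths'', but neither claim is justified, and getting the sharp constant $k+2h-2$ (rather than, say, $O(k+h)$ or $k+2h$) hinges on exactly this bookkeeping. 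If you want to supply a proof here, you should either reproduce the Gu--Tamaki argument faithfully or, more appropriately for this paper, simply cite the result as the authors do.
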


An central feature of the {\sc PDC} problem and its variants is that its {\sc yes}-instances have bounded branchwidth.

\begin{lemma}
\label{btw}
There exists a constant $c_{1}$ such that if  $(G,d)$ is 
a {\sc yes}-instance of {\sc  PDC},
then $\bw(G)\leq c_{1}\cdot d$. The same holds for the graphs in the {\sc yes}-instances of  {\sc BPDC}, {\sc BFPDC},
and {\sc BBFPDC}.
\end{lemma}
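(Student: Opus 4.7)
The plan is to show that every plane graph $H$ with $\diam(H)\le d$ satisfies $\bw(H)=O(d)$, from which the lemma follows at once. First, I would note that in a yes-instance of any of {\sc PDC}, {\sc BPDC}, {\sc BFPDC} or {\sc BBFPDC}, the graph $G$ is a spanning subgraph of a plane graph $H$ with $\diam(H)\le d$. Since branchwidth is minor-monotone, and in particular monotone under taking subgraphs, it is enough to prove $\bw(H)\le c_1\cdot d$ for every such $H$.

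To bound $\bw(H)$, I would argue by contrapositive and use Proposition~\ref{gompl} with $k=3$ and $h=d+2$. If $\bw(H)>k+2h-2=2d+5$, then $H$ contains $\Gamma_{3,d+2}$ as a minor. A minor model supplies $3(d+2)$ pairwise vertex-disjoint connected branch sets $B_{i,j}$, $(i,j)\in \intv{1}{3}\times\intv{1}{d+2}$, together with the usual branch edges, and from the $d+2$ triangles of $\Gamma_{3,d+2}$ one extracts $d+2$ pairwise vertex-disjoint cycles $\mathcal{C}_1,\dots,\mathcal{C}_{d+2}$ in $H$. Because $\Gamma_{3,d+2}$ is $3$-connected, Whitney's theorem makes its plane embedding unique up to reflection; hence the plane embedding of $\Gamma_{3,d+2}$ inherited from that of $H$ realises these triangles as nested concentric cycles. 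Lifting this back to $H$, the cycles $\mathcal{C}_1,\dots,\mathcal{C}_{d+2}$ form nested Jordan curves in $\Bbb{S}_{0}$, with each $\mathcal{C}_{j+1}$ enclosed by $\mathcal{C}_j$.

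Next I would pick any $u\in V(\mathcal{C}_1)$ and $w\in V(\mathcal{C}_{d+2})$ and count crossings. For every $j\in\intv{2}{d+1}$, the vertex $u$ lies in the outer open region bounded by the Jordan curve $\mathcal{C}_j$ while $w$ lies in the inner one, so any $u$-$w$ walk in the plane graph $H$ must meet $\mathcal{C}_j$ at one of its vertices. Since $\mathcal{C}_2,\dots,\mathcal{C}_{d+1}$ are pairwise vertex-disjoint and disjoint from $\{u,w\}$, any $u$-$w$ path picks up at least $d$ distinct intermediate vertices, and so has length at least $d+1$. This contradicts $\diam(H)\le d$, so $\bw(H)\le 2d+5\le 7d$ for $d\ge 1$, and the lemma holds with $c_1=7$.

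The main technical obstacle is the nesting step: turning the combinatorial statement ``$H$ has $\Gamma_{3,d+2}$ as a minor'' into the topological statement ``$H$ contains $d+2$ vertex-disjoint, nested cycles in its plane embedding''. This relies on $\Gamma_{k,h}$ being $3$-connected for $k\ge 3$, $h\ge 2$, together with the observation that minor operations on a plane graph produce a plane embedding of the resulting minor, so that the essentially unique embedding of $\Gamma_{3,d+2}$ is the inherited one. Once this ingredient is set up, the Jordan-curve counting argument and the final arithmetic are immediate.
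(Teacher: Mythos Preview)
Your proof is correct and close in spirit to the paper's, but organised differently. The paper first observes that the property ``$(G,d)$ is a {\sc yes}-instance of {\sc PDC}'' is closed under taking minors of $G$, then asserts (without details) that every plane completion of $\Gamma_{d+2,d+2}$ has diameter greater than $d$; combining these, $G$ cannot contain $\Gamma_{d+2,d+2}$ as a minor, and Proposition~\ref{gompl} with $k=h=d+2$ gives $\bw(G)\le 3d+4$. You instead pass to the completion $H$ and bound $\bw(H)$ directly, using Proposition~\ref{gompl} with $k=3$, $h=d+2$ together with the nested-cycles Jordan-curve argument to exclude $\Gamma_{3,d+2}$; subgraph-monotonicity of branchwidth then transfers the bound back to $G$. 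Your approach makes explicit the topological step (nesting via Whitney's theorem and lifting the minor model) that the paper leaves implicit in its claim about completions of $\Gamma_{d+2,d+2}$, and it yields a slightly sharper bound ($2d+5$ versus $3d+4$). The paper's route, on the other hand, sidesteps the need to reason about a minor model inside an arbitrary $H$ by working with the concrete graph $\Gamma_{d+2,d+2}$, where the nested cycles are literally present rather than inherited.
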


\begin{proof}
We examine only the case of {\sc PDC} as 
a {\sc yes}-instance of  {\sc BPDC}, {\sc BFPDC},
and {\sc BBFPDC} is also a {\sc yes}-instance of  {\sc PDC}.

 Notice first that 
if  $G$ has a completion of  diameter at most $d$ and  $G'$ is a minor\footnote{A graph $G'$ is a minor of a graph $G$ if it can obtained applying edge contractions to some subgraph of $G$.} of some $G$, then also $G'$ has a completion $H$ of  diameter at most $d$. Notice also that every completion of the grid 
annulus $\Gamma_{r+2,r+2}$ has diameter $>r$, therefore, if $(G,d)$ is a {\sc yes}-instance of {\sc PDC}, then $G$ cannot contain a $\Gamma_{r+2,r+2}$ as a minor. From Proposition~\ref{gompl}, $G$ has branchwidth bounded by a linear function of $d$ and the lemma follows.
\end{proof}

\section{The reduction}
\label{rdekfgment}

\subsection{cylindrical  enhancements}
 \label{y7yu}

\medskip\noindent{\bf Grid-annulus.}
Let $k$ and $r$ be positive integers where $k\in\Bbb{N}_{\geq 3},r\in\Bbb{N}_{\geq 3}$. We define the graph $\Gamma_{k,r}$ as  the
\emph{$(k\times r)$-grid annulus}, which is the Cartesian product of a path of
$k$ vertices and a cycle of  $r$ vertices.
Notice that $\Gamma_{k,r}$ is uniquely embeddable (up to homeomorphism)
 in the plane and  
has exactly two non-square faces (i.e., faces incident to 4 edges)
$f_{1}$ and $f_{2}$ that are incident only with vertices of degree 3.
We call one of the faces $f_{1}$ and $f_{2}$ the {\em interior} of $\Gamma_{k,r}$ 
and 
the other the {\em exterior} of $\Gamma_{k,r}$. We call the  vertices incident to the interior (exterior) of $\Gamma_{k,r}$ {\em base}  ({\em roof\,}) of $\Gamma_{k,r}$.
Given an edge $e$ in the base of $\Gamma_{k,r}$, we define its {\em ceilings} as 
the set of edges of $\Gamma_{k,r}$ that contains $e$ 
and whose dual edges in  $\Gamma^*_{k,r}$ form a minimum length path between 
the duals of the interior and the exterior face of $\Gamma^{*}_{k,r}$.

\medskip\noindent{\bf Cylindrical  enhancement of a plane graph.}
Let $G$ be a plane graph.
We next give the definition of  the graph 
$G^{(k)}$ for $k\in\Bbb{N}_{\geq 3}$.
Let $f_{i}\in F(G)$  and let $C_{1}^{i},\ldots,C_{\rho_i}^{i}$ be the connected components  of $B_{G}(f_{i})$. 
For each  $C_{j}^{i}$, we denote by $\sigma_{j}^{i}$ 
the number of its edges, agreeing that, in this number, bridge edges count twice
and that if $C_{j}^{i}$ consists of only one vertex, then $\sigma_{j}^{i}=1$.
We then add a copy $\Gamma_{j}^{i}$ of $\Gamma_{k,k\cdot\sigma_{j}^{i}}$ in the embedding of $G$
such that $C_{j}^{i}$ is contained in the interior of $\Gamma_{j}^{i}$ and  all $C_{1}^{i},\ldots,C^{i}_{j-1},\ldots,C_{j+1}^{i}\ldots,C_{\rho_i}^{i}$ are contained in the exterior of $\Gamma_{j}^{i}$ (In Figure~\ref{fig:aopoid} the edges of each $\Gamma_{j}^{i}$ are colored red). 
We then add, for each $v\in C_{j}^{i}$, $\kappa(v)\cdot k$ edges (those around the disks $C_{1},\ldots,C_{4}$ in Figure~\ref{fig:aopoid}) from $v$ to the 
base of $\Gamma_{j}^{i},$ where $\kappa(v)$ is the number of connected components in $C_{j}^{i}\setminus v$ (in the trivial case where $C_{j}^{i}$ consists of only one vertex $v$, then $\kappa(v)=1$).
We add these edges in a way that the resulting embedding remains 
plane and no more than a set $V_{v,i,j}$
$ $ of $k$ consecutive vertices of the  base of $C_{j}^{i}$ are connected with the same vertex $v$ of $C_{j}^{i}$; observe that
there is only one way to add edges so to fulfill these restrictions.  
Notice that the set $V_{v,i,j}$ always induces a path $P_{v,i,j}$ in the resulting graph except  in the case where $C^{i}_{j}$ consists of a single vertex $v$ where $V_{v,i,j}$ induces a cycle. In the later case we pick a maximal path in this cycle and we denote it by $P_{v,i,j}$.
In the example of Figure~\ref{fig:aopoid} the $P_{v,i,j}$'s are the bold paths of the innermost cycle of each $\Gamma_{j}^{i}$. 
We apply this enhancement for each 
connected component of the  boundary of each face of $G$ and 
we denote the resulting graph by $R_{G}^{(k)}$.  

We call a face $f_{i}$ of $R_{G}^{(k)}$ non-trivial if $B_{R_{G}^{(k)}}(f_i)$  has more than one connected components $C_{1}^{i},\ldots,C_{\rho_i}^{i}$. Notice that if $f_{i}$ is non-trivial, each $C_{j}^{i}$ is the roof
of some previously added grid-annulus. For each such grid-annulus, let $J_{j}$ be $k$ consecutive vertices of its roof. We 
add inside $f_{i}$ a copy of $\Gamma_{k,k\cdot \rho_{i}}$
such that its base is a subset of $f_{i}$ and let $\{I_{1},\ldots,I_{\rho_{i}}\}$ be a 
partition of its roof in $\rho_{i}$ parts, each consisting of $k$ consecutive base vertices.
In the example of Figure~\ref{fig:aopoid}, the annulus  $\Gamma_{k,k\cdot \rho_{i}}$
is the one with the edges in the middle of the figure and its base is its innermost cycle.
For each $j\in\{1,\ldots,r_{i}\}$ we add $k$ edges  (depicted as the ``interconnecting'' edges in Figure~\ref{fig:aopoid}) each connecting a vertex  of $J_{j}$ with some  
vertex of $I_{j}$ in a way that the resulting embedding remains plane (again, there is a unique way 
for this to be done). We apply this enhancement for each non-trivial face of $R_{G}^{(k)}$
and we denote the resulting graph  by  $G^{(k)}$. Notice that $G^{(k)}$ is not uniquely defined 
as its definition depends on the choice of the sets $J_{j}$. From now on, we 
always consider an arbitrary choice for $G^{(k)}$ and we call $G^{(k)}$ {\em the $k$-th cylindrical  enhancement of $G$}.
By the construction of $G^{(k)}$, it directly follows that  $|V(G^{(k)})|=O(k^2\cdot n)$.
We say that an edge of $G^{(k)}$ is an {\em expansion edge} if it is an edge of $P_{v,i,j}$ for some $i,j,$ and $v\in V(C_{i,j})$.
Also we denote by $\bar{G}^{(k)}$ the graph created by $G^{(k)}$ if we contract all its expansion edges and all their ceilings of the grid-annuli that were added during the construction of $R_{G}^{(k)}$.

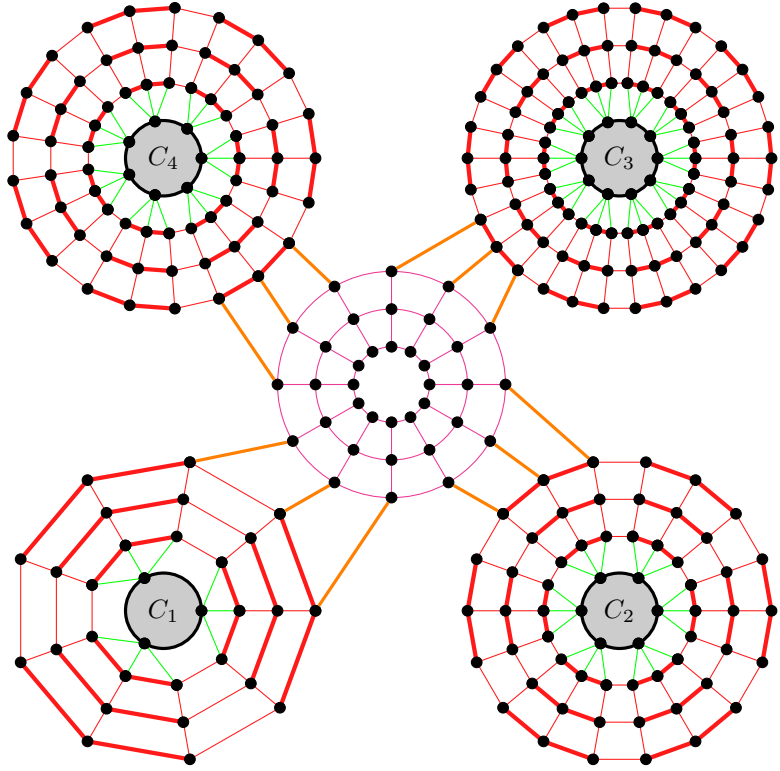
\begin{figure}
\begin{center}
 \begin{tikzpicture}
 %1
\begin{scope}[shift={(0,0)}]
%variables
\pgfmathsetmacro{\a}{3}
\pgfmathsetmacro{\n}{360/\a}
\pgfmathsetmacro{\nn}{\n/3}
\pgfmathsetmacro{\ne}{\n-\nn}

%centerl cycle

  \draw[very thick,fill=black!20] (0,0) circle (.5); 
  
  %lines
 \foreach \a in {0,\n,...,359} {
\foreach \b in {0,\nn,...,\ne}
{
\draw[green!100] (\a:.5) -- (\a-\nn+\b:1);

}

\foreach \b in {\nn,\n,...,\ne}
{
\draw[ultra thick,red!90]  (\a-\nn+\b:1) -- (\a-\nn+\b+\nn:1);
\draw[ultra thick,red!90]  (\a-\nn+\b:1.5) -- (\a-\nn+\b+\nn:1.5);
\draw[ultra thick,red!90]  (\a-\nn+\b:2) -- (\a-\nn+\b+\nn:2);
}
\draw[red!90]  (\a-\nn+\ne:1) -- (\a-\nn+\ne+\nn:1);
\draw[red!90]  (\a-\nn+\ne:1.5) -- (\a-\nn+\ne+\nn:1.5);
\draw[red!90]  (\a-\nn+\ne:2) -- (\a-\nn+\ne+\nn:2);
}

 \foreach \a in {0,\nn,...,359}
  \draw[red!90] (\a:1) -- (\a:2);  
  
  %nodes
 \foreach \a in {0,\n,...,359}
 {
\foreach \b in {0,\nn,...,24}
{\draw (\a+\b:1)  node{};
}
\draw (\a:.5)  node{};
}
 \foreach \a in {0,\nn,...,359}
{
\draw (\a:1.5)  node{};
}
 \foreach \a in {0,\nn,...,359}
 {
\draw (\a:2)  node{};
}
\foreach \a in {0,\n,...,359}
 {
\draw (\a:.5)  node{};
\foreach \b in {0,\nn,...,\ne}
{
\draw (\a+\b:1)  node{};
}
}

\node (a1) at (\nn:2) {};
\node (b1) at (\nn+\n/3:2) {};
\node (c1) at (\nn-\n/3:2) {};

\node[draw=none,fill=none] {$C_{1}$};

\end{scope}

%2
\begin{scope}[shift={(6,0)}]
%variables
\pgfmathsetmacro{\a}{6}
\pgfmathsetmacro{\n}{360/\a}
\pgfmathsetmacro{\nn}{\n/3}
\pgfmathsetmacro{\ne}{\n-\nn}

%centerl cycle
  \draw[very thick,fill=black!20] (0,0) circle (.5); 
  
  %lines
 \foreach \a in {0,\n,...,359} {
\foreach \b in {0,\nn,...,\ne}
{
\draw[green!100] (\a:.5) -- (\a-\nn+\b:1);

}

\foreach \b in {\nn,\n,...,\ne}
{
\draw[ultra thick,red!90]  (\a-\nn+\b:1) -- (\a-\nn+\b+\nn:1);
\draw[ultra thick,red!90]  (\a-\nn+\b:1.5) -- (\a-\nn+\b+\nn:1.5);
\draw[ultra thick,red!90]  (\a-\nn+\b:2) -- (\a-\nn+\b+\nn:2);
}
\draw[red!90]  (\a-\nn+\ne:1) -- (\a-\nn+\ne+\nn:1);
\draw[red!90]  (\a-\nn+\ne:1.5) -- (\a-\nn+\ne+\nn:1.5);
\draw[red!90]  (\a-\nn+\ne:2) -- (\a-\nn+\ne+\nn:2);
}

 \foreach \a in {0,\nn,...,359}
  \draw[red!90] (\a:1) -- (\a:2);  
  
  %nodes
 \foreach \a in {0,\n,...,359}
 {
\foreach \b in {0,\nn,...,24}
{\draw (\a+\b:1)  node{};
}
\draw (\a:.5)  node{};
}
 \foreach \a in {0,\nn,...,359}
{
\draw (\a:1.5)  node{};
}
 \foreach \a in {0,\nn,...,359}
 {
\draw (\a:2)  node{};
}
\foreach \a in {0,\n,...,359}
 {
\draw (\a:.5)  node{};
\foreach \b in {0,\nn,...,\ne}
{
\draw (\a+\b:1)  node{};
}

}

\node (a2) at (5*\nn+\nn:2) {};
\node (b2) at (5*\nn+\nn+\n/3:2) {};
\node (c2) at (5*\nn+\nn-\n/3:2) {};

\node[draw=none,fill=none] {$C_{2}$};
\end{scope}

%3
\begin{scope}[shift={(6,6)}]
%variables
\pgfmathsetmacro{\a}{10}
\pgfmathsetmacro{\n}{360/\a}
\pgfmathsetmacro{\nn}{\n/3}
\pgfmathsetmacro{\ne}{\n-\nn}

%centerl cycle
  \draw[very thick,fill=black!20] (0,0) circle (.5); 
  
  %lines
 \foreach \a in {0,\n,...,359} {
\foreach \b in {0,\nn,...,\ne}
{
\draw[green!100] (\a:.5) -- (\a-\nn+\b:1);

}

\foreach \b in {\nn,\n,...,\ne}
{
\draw[ultra thick,red!90]  (\a-\nn+\b:1) -- (\a-\nn+\b+\nn:1);
\draw[ultra thick,red!90]  (\a-\nn+\b:1.5) -- (\a-\nn+\b+\nn:1.5);
\draw[ultra thick,red!90]  (\a-\nn+\b:2) -- (\a-\nn+\b+\nn:2);
}
\draw[red!90]  (\a-\nn+\ne:1) -- (\a-\nn+\ne+\nn:1);
\draw[red!90]  (\a-\nn+\ne:1.5) -- (\a-\nn+\ne+\nn:1.5);
\draw[red!90]  (\a-\nn+\ne:2) -- (\a-\nn+\ne+\nn:2);
}

 \foreach \a in {0,\nn,...,359}
  \draw[red!90] (\a:1) -- (\a:2);  
  
  %nodes
 \foreach \a in {0,\n,...,359}
 {
\foreach \b in {0,\nn,...,24}
{\draw (\a+\b:1)  node{};
}
\draw (\a:.5)  node{};
}
 \foreach \a in {0,\nn,...,359}
{
\draw (\a:1.5)  node{};
}
 \foreach \a in {0,\nn,...,359}
 {
\draw (\a:2)  node{};
}
\foreach \a in {0,\n,...,359}
 {
\draw (\a:.5)  node{};
\foreach \b in {0,\nn,...,\ne}
{
\draw (\a+\b:1)  node{};
}
}

\node (a3) at (17*\nn+\nn:2) {};
\node (b3) at (17*\nn+\nn+\n/3:2) {};
\node (c3) at (17*\nn+\nn-\n/3:2) {};

\node[draw=none,fill=none] {$C_{3}$};

\end{scope}

%4 
\begin{scope}[shift={(0,6)}]
%variables
\pgfmathsetmacro{\a}{7}
\pgfmathsetmacro{\n}{360/\a}
\pgfmathsetmacro{\nn}{\n/3}
\pgfmathsetmacro{\ne}{\n-\nn}

%centerl cycle
  \draw[very thick,fill=black!20] (0,0) circle (.5); 
  
  %lines
 \foreach \a in {0,\n,...,359} {
\foreach \b in {0,\nn,...,\ne}
{
\draw[green!100] (\a:.5) -- (\a-\nn+\b:1);

}

\foreach \b in {\nn,\n,...,\ne}
{
\draw[ultra thick,red!90]  (\a-\nn+\b:1) -- (\a-\nn+\b+\nn:1);
\draw[ultra thick,red!90]  (\a-\nn+\b:1.5) -- (\a-\nn+\b+\nn:1.5);
\draw[ultra thick,red!90]  (\a-\nn+\b:2) -- (\a-\nn+\b+\nn:2);
}
\draw[red!90]  (\a-\nn+\ne:1) -- (\a-\nn+\ne+\nn:1);
\draw[red!90]  (\a-\nn+\ne:1.5) -- (\a-\nn+\ne+\nn:1.5);
\draw[red!90]  (\a-\nn+\ne:2) -- (\a-\nn+\ne+\nn:2);
}

 \foreach \a in {0,\nn,...,359}
  \draw[red!90] (\a:1) -- (\a:2);  
  
  %nodes
 \foreach \a in {0,\n,...,359}
 {
\foreach \b in {0,\nn,...,24}
{\draw (\a+\b:1)  node{};
}
\draw (\a:.5)  node{};
}
 \foreach \a in {0,\nn,...,359}
{
\draw (\a:1.5)  node{};
}
 \foreach \a in {0,\nn,...,359}
 {
\draw (\a:2)  node{};
}
\foreach \a in {0,\n,...,359}
 {
\draw (\a:.5)  node{};
\foreach \b in {0,\nn,...,\ne}
{
\draw (\a+\b:1)  node{};
}
}

\node (a4) at (17*\nn+\nn:2) {};
\node (b4) at (17*\nn+\nn+\n/3:2) {};
\node (c4) at (17*\nn+\nn-\n/3:2) {};

\node[draw=none,fill=none] {$C_{4}$};

\end{scope}

\begin{scope}[shift={(3,3)}]
\pgfmathsetmacro{\a}{9}
\pgfmathsetmacro{\n}{360/\a}
\pgfmathsetmacro{\nn}{\n/3}
\pgfmathsetmacro{\ne}{\n-\nn}

\draw[magenta!90]  (0,0) circle (.5); 
\draw[magenta!90] (0,0) circle (1); 
\draw[magenta!90] (0,0) circle (1.5);

\foreach \b in {0,30,...,359}
{\draw[magenta!90] (\b:.5) --  (\b:1.5) ;
}

\foreach \b in {0,30,...,359}
{\draw (\b:1)  node{};
\draw (\b:.5)  node{};

}

%links

\node (cc2) at (0:1.5) {};

\node (bb3) at (30:1.5) {};
\node (aa3) at (60:1.5) {};
\node (cc3) at (90:1.5) {};

\node (bb4) at (120:1.5) {};
\node (aa4) at (150:1.5) {};
\node (cc4) at (180:1.5) {};

\node (bb1) at (210:1.5) {};
\node (aa1) at (240:1.5) {};
\node (cc1) at (-90:1.5) {};

\node (bb2) at (-60:1.5) {};
\node (aa2) at (-30:1.5) {};

\draw[orange,very thick] (a1) -- (aa1);
\draw[orange,very thick] (a2) -- (aa2);
\draw[orange,very thick] (a3) -- (aa3);
\draw[orange,very thick] (a4) -- (aa4);

\draw[orange,very thick] (b1) -- (bb1);
\draw[orange,very thick] (b2) -- (bb2);
\draw[orange,very thick] (b3) -- (bb3);
\draw[orange,very thick] (b4) -- (bb4);

\draw[orange,very thick] (c1) -- (cc1);
\draw[orange,very thick] (c2) -- (cc2);
\draw[orange,very thick] (c3) -- (cc3);
\draw[orange,very thick] (c4) -- (cc4);

\end{scope}
  \end{tikzpicture}
\end{center}
\caption{An example of a cylindrical  enhancement for $k=3$ inside a non-trivial face of a graph with $4$ connected components (i.e., the boundaries of the disks $C_{1},\ldots,C_{4}$).
}
\label{fig:aopoid}
\end{figure}

\medskip\noindent{\bf Primal-dual drawings.} Let $G$ be a connected plane graph. We denote by $D(G)$ the graph obtained 
if we draw $G$ together with its dual so that dual edges are intersecting to a single point and 
then introduce a vertex to each of these intersection points. We recursively define $D^{(k)}(G)$
such that $D^{(0)}(G)=G$ and $D^{(k)}(G)=D^{k-1}(D(G))$ for every $k\geq 1$.  
The next proposition is a direct consequence of \cite[Lemma~4]{KoutsonasT10plan}.
 \begin{proposition}[]
 There exists some constant $c$ such that 
 for every connected plane graph $G$, it holds 
that $\bw(D(G)) \leq 2\cdot \bw(G)$.
 \end{proposition}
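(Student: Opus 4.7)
The plan is to convert a width-optimal branch decomposition $(T,\mu)$ of $G$ into a branch decomposition of $D(G)$ of width at most $2\cdot\bw(G)$ by local surgery at the leaves. Recall that each edge $e\in E(G)$ is replaced in $D(G)$ by a subdivision vertex $x_{e}$ together with four half-edges: two halves of $e$ (towards its two endpoints in $V(G)$) and two halves of the dual edge $e^{*}$ (towards the dual vertices $F_{1}^{*},F_{2}^{*}$ corresponding to the two faces of $G$ incident to $e$).

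First I would fix an optimal $(T,\mu)$ and replace every leaf $\ell$ of $T$ with a small tree $T_{\mu(\ell)}$ having four leaves, each of which is mapped by a new bijection $\mu'$ to one of the four edges of $D(G)$ produced from $\mu(\ell)$. This yields a branch decomposition $(T',\mu')$ of $D(G)$ whose underlying tree agrees with $T$ away from the leaves.

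Next I would bound the middle sets of $(T',\mu')$. A direct inspection shows that the edges internal to the leaf-gadgets $T_{\mu(\ell)}$ have middle sets of constant size (at most four, namely $x_{\mu(\ell)}$, the two primal endpoints of $\mu(\ell)$, and one of the dual vertices $F_1^{*}, F_2^{*}$). For an edge $f$ of $T$ inherited from the original decomposition, the middle set of $(T',\mu')$ at $f$ has the form $(\mbox{{\bf mid}}(f)\cap V(G))\cup S_{f}$, where $S_{f}$ is the set of dual vertices (faces of $G$) whose incident edges are split by the cut at $f$. The planar-embedding argument of \cite{KoutsonasT10plan} then shows that $|S_{f}|\leq|\mbox{{\bf mid}}(f)|$: around each $v\in\mbox{{\bf mid}}(f)$ the incident faces form a cyclic sequence, the cut at $f$ partitions this sequence into at most two arcs, and a global injectivity argument matches each face of $S_{f}$ to a distinct primal middle-set vertex.

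The main obstacle will be the topological bookkeeping that certifies this injection $S_{f}\hookrightarrow\mbox{{\bf mid}}(f)$; this is exactly the content of \cite[Lemma~4]{KoutsonasT10plan}. Once that injection is in hand, every middle set of $(T',\mu')$ has size at most $2\cdot\bw(G)$, yielding $\bw(D(G))\leq 2\cdot\bw(G)$.
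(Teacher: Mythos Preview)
Your approach coincides with the paper's: the paper gives no argument beyond citing \cite[Lemma~4]{KoutsonasT10plan}, and you defer the crucial bound $|S_f|\leq|\mbox{{\bf mid}}(f)|$ to that same lemma. One caveat on your paraphrase of its proof: the claim that the cut at $f$ splits the cyclic order of edges around each $v\in\mbox{{\bf mid}}(f)$ into at most two arcs fails for arbitrary branch decompositions (alternate the spokes of a wheel between the two sides and the hub sees arbitrarily many arcs, giving $|S_f|>|\mbox{{\bf mid}}(f)|$), so either start from an optimal \emph{sphere-cut} decomposition of $G$---where the noose guarantees exactly two arcs at every middle-set vertex---or invoke the cited lemma as a black box rather than sketching its internals.
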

 
 \begin{corollary}
 \label{jkjhtijht}
For every connected plane graph $G$  and $k\in\Bbb{N}_{\geq 1}$, it holds 
that $\bw(D^{(k)}(G)) \leq 2^k\cdot \bw(G)$.
\end{corollary}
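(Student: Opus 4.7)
The proof plan is a straightforward induction on $k$, taking the previous Proposition as the base ingredient.

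First I would handle the base case $k=1$: by definition $D^{(1)}(G)=D(G)$, so the inequality $\bw(D(G))\leq 2\cdot \bw(G)$ given by the preceding Proposition is exactly the statement $\bw(D^{(1)}(G))\leq 2^{1}\cdot \bw(G)$. For the inductive step, assume the claim holds for $k-1$, i.e., $\bw(D^{(k-1)}(G))\leq 2^{k-1}\cdot \bw(G)$. By the recursive definition, $D^{(k)}(G)=D(D^{(k-1)}(G))$, so applying the Proposition to $D^{(k-1)}(G)$ yields
\[
\bw(D^{(k)}(G))\;=\;\bw(D(D^{(k-1)}(G)))\;\leq\; 2\cdot \bw(D^{(k-1)}(G))\;\leq\; 2\cdot 2^{k-1}\cdot \bw(G)\;=\;2^{k}\cdot \bw(G),
\]
which is the desired bound.

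The only non-trivial checkpoint in this argument is that the Proposition is indeed applicable at each level of the recursion, namely that $D^{(k-1)}(G)$ is itself a \emph{connected plane graph} whenever $G$ is. This I would justify by a short observation on the construction of $D(\cdot)$: superimposing the dual and subdividing each primal-dual crossing produces a graph that remains drawn in $\Bbb{S}_0$ without crossings, so planarity is preserved; connectedness is preserved as well, since every new (subdivision) vertex sits on a primal edge and is thus reachable from the primal vertex set, which is connected by hypothesis. Therefore the inductive hypothesis can be fed into the Proposition at every step, and the induction goes through.

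The main (and essentially only) obstacle is this bookkeeping of ``connected plane graph'' along the iteration; once it is observed, the rest is an immediate telescoping of the factor $2$ coming from the Proposition, yielding the factor $2^{k}$ in the statement.
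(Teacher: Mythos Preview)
Your proposal is correct and is precisely the intended argument: the paper states this as a corollary with no proof, implicitly relying on the obvious induction on $k$ that telescopes the factor $2$ from the preceding Proposition. Your additional remark that $D(\cdot)$ preserves the property of being a connected plane graph is the right thing to check to make the induction go through, and your justification for it is adequate.
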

  
\begin{lemma}  
\label{i0o9o0opi9o}
If $G$ is a connected plane graph and $k\in\Bbb{N}_{\geq 3}$, then $\bar{G}^{(k)}$ is a minor of $D^{(\lceil \log(k+1)\rceil+1)}(G)$. 
\end{lemma}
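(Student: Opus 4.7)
The plan is to prove, by induction on $m \geq 1$, the following structural claim: for every face $f$ of the connected plane graph $G$, $D^{(m)}(G)$ contains inside $f$ at least $r(m) := 2^{m-1} - 1$ pairwise vertex-disjoint cycles that separate $f^*$ from $B_G(f)$, together with radial paths linking $f^*$ to the innermost such cycle, consecutive cycles to each other, and the outermost one to $B_G(f)$. The base case $m = 1$ holds trivially since $D(G)$ contains only $f^*$ inside $f$ and $r(1) = 0$.

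For the inductive step, I would first observe that every face of $D^{(m)}(G)$ is a quadrilateral for $m \geq 1$ (this is immediate for $D(G)$ and preserved under each subsequent application of $D$). Passing to $D^{(m+1)}(G)$, every edge of $D^{(m)}(G)$ is subdivided once by a new edge-vertex and every quadrilateral face receives a new face-vertex adjacent to the four edge-vertices on its boundary. The $r(m)$ pre-existing concentric cycles all survive with their lengths doubled, and in each of the $r(m) + 1$ annular regions bounded by two consecutive ``structures'' (where $f^*$ and $B_G(f)$ count as the innermost and outermost structures) a new separating cycle appears that alternates between the newly introduced face-vertices of the quadrilaterals tiling the region and the new edge-vertices lying on the radial edges that cross it. This gives the recurrence $r(m+1) = 2 r(m) + 1$, so $r(m) = 2^{m-1} - 1$, and in particular $r(\lceil \log(k+1) \rceil + 1) \geq k$.

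Since $G$ is connected, each face $f_i$ of $G$ has connected boundary, so $\rho_i = 1$ for every $i$, every face of $R_G^{(k)}$ is trivial, $G^{(k)} = R_G^{(k)}$, and $\bar G^{(k)}$ is simply $G$ together with one grid annulus $\Gamma_{k, \sigma_i}$ placed around each boundary $C_i$, whose base vertices are joined to the corners of $C_i$ by spoke edges. Taking $m = \lceil \log(k+1) \rceil + 1$ and applying the structural claim, inside each $f_i$ I would select $k$ of the $r(m) \geq k$ concentric cycles to play the role of the $k$ rings of $\Gamma_{k, \sigma_i}$, take the radial paths supplied by the claim as the radial edges of that annulus, and realise the spokes from the outermost chosen cycle to the corners of $C_i$ by contracting short paths through the subdivided primal boundary. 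Primal vertices and edges of $G$ are left untouched, so this assembles into a minor model of $\bar G^{(k)}$ in $D^{(m)}(G)$.

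The hardest part will be the bookkeeping in the last step: the chosen concentric cycles in $D^{(m)}(G)$ generally have different lengths, all much larger than $\sigma_i$, so each must be contracted down to exactly $\sigma_i$ vertices in a coordinated way, so that the radial connections become the radial edges of the target grid annulus and so that exactly $\kappa(v)$ spokes reach each primal vertex $v$ on $C_i$. This is resolved by choosing the contractions on each cycle to respect the nested cyclic refinement that the iterated $D$ induces on $B_G(f_i)$, so that the grouping of each cycle into $\sigma_i$ arcs is compatible both with the radial connections between consecutive cycles and with the cyclic order of corners along $C_i$.
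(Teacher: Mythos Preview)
Your proof is correct and follows essentially the same inductive strategy as the paper, whose entire argument is that $\bar{G}^{(3)}$ is a minor of $D^{(3)}(G)$ and that $\bar{G}^{(i)} \preceq D^{(i)}(G)$ implies $\bar{G}^{(2i+1)} \preceq D^{(i+1)}(G)$. Your version simply makes the doubling mechanism explicit via the concentric-cycle count $r(m)=2^{m-1}-1$ with recurrence $r(m+1)=2r(m)+1$, which is precisely the content behind the paper's terse inductive step.
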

      
\begin{proof}
Notice first that $\bar{G}^{(3)}$ is a minor of $D^{(3)}(G)$.
It is then enough to observe  that for every $i\geq 3$,
if  $\bar{G}^{(i)}$ is a minor of $D^{(i)}(G)$, then 
$\bar{G}^{(2i+1)}$ is a minor of $D^{(i+1)}(G)$.
\end{proof}

The following lemma indicates that cylindrical  enhancements do not considerably increase the 
branchwidth of a graph.

\begin{lemma}
\label{lowke}
There is a constant $c_2$ such that 
if $G$ is  an $n$-vertex  plane graph and $k\in\Bbb{N}_{\geq 3}$, 
then $G^{(k)}$ is $3$-connected, $\bw(G^{(k)})\leq c_2\cdot k\cdot \bw(G)$. 
\end{lemma}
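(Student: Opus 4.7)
I prove the two assertions separately.

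\emph{Three-connectivity.} I verify by case analysis that no pair $\{u,v\}\subseteq V(G^{(k)})$ can be a cut set. The ingredients are: every grid-annulus $\Gamma_{k,r}$ with $k,r\geq 3$ is itself $3$-connected; each vertex $v\in V(G)$ is linked to the base of every surrounding $\Gamma_j^i$ by $\kappa(v)\cdot k\geq k\geq 3$ parallel expansion edges (grouped in the paths $P_{v,i,j}$); and inside each non-trivial face, the surrounding $\Gamma_j^i$ is linked to the middle annulus $\Gamma_{k,k\rho_i}$ by $k\geq 3$ parallel interconnect edges. A case split on whether $u,v$ lie in $V(G)$, inside some $\Gamma_j^i$, or inside a middle annulus shows that, in every configuration, at least three internally vertex-disjoint routes remain between any pair of vertices that $\{u,v\}$ might try to separate, so $G^{(k)}$ is $3$-connected.

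\emph{Branchwidth bound.} I work with sphere-cut decompositions (which for plane graphs witness branchwidth) and translate one of $G$ into one of $G^{(k)}$. Take a sphere-cut decomposition of $G$ of width $w=\bw(G)$: each noose is a simple closed curve on $\Bbb{S}_0$ meeting at most $w$ edges of $G$, and inside every face $f$ of $G$ the noose restricts to a collection of arcs, each joining two points of $\partial f$. I perturb each such arc into a curve inside the cylindrical enhancement: if the arc goes between boundary components $C_j^i$ and $C_{j'}^i$ of $f$ (possibly $j=j'$), the perturbed curve is routed radially across $\Gamma_j^i$ (one edge per row, $O(k)$ edges in total), then---if $j\neq j'$---radially across the middle annulus $\Gamma_{k,k\rho_i}$ ($O(k)$ edges), and finally across $\Gamma_{j'}^i$ ($O(k)$ edges). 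Each arc thus contributes $O(k)$ new crossings. Since the total number of arcs across all faces is at most $w$, each perturbed noose meets $w+O(k)\cdot w=O(kw)$ edges of $G^{(k)}$; performing this coherently over the whole family yields a sphere-cut decomposition of $G^{(k)}$ of width $O(k\cdot\bw(G))$, which is the required bound.

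\emph{Main obstacle.} The delicate point is to always choose radial rather than angular rerouting across the grid-annuli. A row of $\Gamma_j^i$ has length $k\cdot\sigma_j^i$, which can be arbitrarily large, so any curve that travels along a row would spend $\Theta(k\cdot\sigma_j^i)$ crossings and only yield an $O(k^2\cdot\bw(G))$ bound; the planarity of the enhancement is used to realize every rerouting as a radial cut of cost $O(k)$, independent of the $\sigma_j^i$'s. A secondary subtlety is to carry out all perturbations coherently so that the resulting curves on $\Bbb{S}_0$ remain a laminar family, thereby defining a valid sphere-cut decomposition (as opposed to merely a collection of low-weight cuts). The machinery of Lemma~\ref{i0o9o0opi9o} and Corollary~\ref{jkjhtijht} provides an alternative route via the auxiliary graph $\bar{G}^{(k)}$, but the sphere-cut perturbation argument gives the cleanest way to absorb the blow-up into a single factor of $k$.
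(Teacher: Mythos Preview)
Your three-connectivity sketch is adequate; the paper itself does not spell out that part of the claim.

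For the branchwidth bound the paper takes a quite different and more robust route. It never manipulates nooses. Instead it builds an auxiliary connected plane graph $H$ from $G$ (one new vertex $v_f$ per non-trivial face, joined to each boundary component of that face by a path of $2k$ internal vertices), observes that $\bw(H)=\bw(G)$, notes that $G^{(k)}$ is a minor of $\bar{H}^{(k)}$, and then applies Lemma~\ref{i0o9o0opi9o} and Corollary~\ref{jkjhtijht} to obtain $\bw(G^{(k)})\leq\bw(\bar{H}^{(k)})\leq\bw\bigl(D^{(\lceil\log(k+1)\rceil+1)}(H)\bigr)=O(k\cdot\bw(G))$. Everything is packaged as minor containment, so no decomposition of $G^{(k)}$ is ever constructed explicitly and no connectivity hypothesis on $G$ is needed.

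Your perturbation argument, by contrast, has a genuine gap. Perturbing the nooses of a sphere-cut decomposition $(T,\mu,\pi)$ of $G$ produces one low-width noose of $G^{(k)}$ per edge of $T$, but the tree $T$ still has exactly $|E(G)|$ leaves, whereas a branch decomposition of $G^{(k)}$ must come with a bijection from its leaves to $E(G^{(k)})$, a set of size $\Theta(k^{2}n)$. At every original leaf the enclosed disk now contains one edge of $G$ together with a substantial portion of the enhancement, and you give no argument for refining inside those disks while keeping the width $O(k)$; without that refinement you have a family of low-order separations, not a branch decomposition. Two smaller points: nooses in a sphere-cut decomposition meet the graph in \emph{vertices}, not edges, so your counts are phrased for the wrong objects; and an arbitrary plane graph $G$ need not admit a sphere-cut decomposition at all (Proposition~\ref{makesphere} assumes $3$-connectivity of its input), so some preprocessing of $G$ would be required before your perturbation can even start. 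These are exactly the nuisances that the paper's minor-based route through $\bar{H}^{(k)}$ and $D^{(r)}$ is designed to absorb in one line---which is why that route, which you mention only in passing, is the one actually taken.
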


\begin{proof}
Let $H$ be the graph  created from $G$ if we add a vertex $v_{f}$ to each non trivial face $f$
and for each of the connected components of $B_{G}(f)$, we arbitrarily pick a vertex and make it
adjacent to  $v_{f}$ by a path of $2k$ internal vertices. As the branchwidth of a non-acyclic 
graph is the maximum branchwidth of its connected components, it follows that $\bw(H)=\bw(G)$.
It is also easy to see that $G^{(k)}$ is a minor of $\bar{H}^{(k)}$. From Lemma~\ref{i0o9o0opi9o},  $\bar{H}^{(k)}$ is a minor of $D^{(r)}(H)$, where $r=\lceil \log(k+1)\rceil+1$.
By Corollary~\ref{jkjhtijht},  it follows that $\bw(D^{(r)}(H))\leq 2^{r}\cdot \bw(H)=O( k\cdot \bw(G))$.
\end{proof}

\subsection{\bf Edge colorings of new edges.}

Let $G$ and $H$  be two plane graphs such that $G$ is a subgraph of $H$ and let $q\in\Bbb{N}\cup\{\infty\}$,  $k\in \mathbb{N}$, and $d\in\Bbb{N}_{\geq 1}$.
Given a 3-partition 
${\bf p}=\{E^{0},E^{1},E^{\infty}\}$ of  $E(H)\setminus E(G),$ 
we define  the function ${\sf w}_{{\bf p}}: E(H)\rightarrow \Bbb{N}$ such that 
\begin{eqnarray*}
{\sf w}_{{\bf p}}  & =  & \{(e,1)\mid e\in E(G)\}\cup \{(e,0)\mid E\in E^{0}\}\cup \\
& &  \{(e,1)\mid e\in E^{1}\}\cup\{(e,d+1)\mid E\in E^{\infty}\}.
\end{eqnarray*}
We say that $G$ has {\em $(q,k,d)$-extension} in $H$ if  there is a 3-partition 
${\bf p}=\{E^{0},E^{1},E^{\infty}\}$ of  $E(H)\setminus E(G)$ 
such that  the following conditions hold
	\begin{enumerate}
	\item[A.] There is no path in $H$ with endpoints in $V(G)$ that consists of  edges in $E^{0}$,
	\item[B.] every face $F$ of $G$ contains at most $k$ edges of $E^{1}$,
	\item[C.] $\forall x,y\in V(G), {\sf w}_{\bf p}\mbox{-\bf dist}_{H}(x,y)\leq d$,	and 
	\item[D.] $|E^{1}|\leq q$.
	\end{enumerate}

Given a 3-partition 
${\bf p}=\{E^{0},E^{1},E^{\infty}\}$ of  $E(H)\setminus E(G)$  
we refer to its elements as the {\em 0-edges}, the {\em 1-edges}, 
and the {\em $\infty$-edges} respectively. We also call the edges of $G$ {\em old-edges}.

Our first step towards our algorithm is to reduce {\sc BBFPDC}  to a problem about  $(q,k,d)$-extensions
of $G$.

 Given a plane graph $G$ and an open set $\Lambda$ of $\Bbb{S}_{0}$,
we define $G\langle \Lambda \rangle$ as the graph whose 
edge set consists of  the edges of $G$ that are subsets of $\Lambda$ and 
whose vertex set consists of their endpoints.

 \medskip\noindent{\bf Disjoint paths.} Let $G$ be a graph. We say that two paths in $G$ are  {\em  disjoint} 
if none of the internal vertices of a path is a  vertex of the other.
Given a collection ${\cal P}$ of pairwise disjoint paths of $G$, we define 
$L({\cal P})=\{\{x,y\}\mid \mbox{\ $x$ and $y$ are the endpoints of a path in ${\cal P}$}\}.$

The proofs of the following  proposition can be found in~\cite{ChatzidimitriouGMRTZ15fixe}.

\begin{proposition}
\label{l7yujl}
Let $G$ be a graph $k\in\Bbb{N}_{\geq 1}$ and  let $H$ be a $k$-face completion of $G$.
For every  face $f\in F(G)$, there is  a collection ${\cal P}$ of $k$ disjoint paths in the graph
$G^{(\max\{3,k\})}\langle f\rangle$ 
 such that $E(G\langle f\rangle) =L({\cal P})$.
\end{proposition}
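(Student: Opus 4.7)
Fix a face $f\in F(G)$ and set $k'=\max\{3,k\}$. Let $e_1,\dots,e_m$ with $m\leq k$ be the edges of $H\setminus G$ drawn in the open disk $f$; each $e_i$ has its two endpoints $u_i,v_i$ on the boundary $B_G(f)$. Since $H$ is a plane graph, the $e_i$'s are pairwise non-crossing inside $f$, so the multiset $\{(u_i,v_i)\}_{i\leq m}$ forms a non-crossing chord pattern with respect to the cyclic structure of $B_G(f)=C_1\cup\dots\cup C_\rho$. The task is to realise this pattern by $m$ internally disjoint paths in $G^{(k')}\langle f\rangle$, which will form the required collection $\mathcal{P}$.

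The plan is to exploit the three ``layers'' of the cylindrical enhancement that sit inside $f$. First, at every vertex $v$ of a boundary component $C_j$ one has $k'$ spokes joining $v$ to $k'$ consecutive base vertices of the grid annulus $\Gamma_j$ placed around $C_j$; any path through $v$ may select one of these spokes to enter or leave the annulus. Second, inside each $\Gamma_j$ the cylindrical grid of width $k'$ allows up to $k'$ internally disjoint paths to be routed from chosen base vertices to any pre-assigned block of $k'$ consecutive roof vertices, by following ``ceiling-type'' sequences of edges. Third, when $\rho\geq 2$ the central annulus added inside $f$, together with the interconnecting spokes linking its roof blocks $I_j$ to the roof blocks $J_j$ of the $\Gamma_j$'s, provides the same kind of disjoint-routing capacity between distinct components of $B_G(f)$.

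Concretely, I would process the chords $e_1,\dots,e_m$ in an order compatible with the non-crossing pattern (innermost first) and build each $P_i$ in four stages: \textbf{(a)} pick an unused spoke from $u_i$ into the base of the annulus surrounding the component of $u_i$; \textbf{(b)} route inside that annulus from base to a roof vertex in the block allotted to $e_i$, using a ceiling-style sub-path; \textbf{(c)} if $u_i$ and $v_i$ lie in different components, cross an interconnecting spoke into the central annulus, rotate there to the block corresponding to the component of $v_i$, and cross back out through the matching interconnecting spoke; \textbf{(d)} descend symmetrically through the annulus around $v_i$'s component and exit along an unused spoke of $v_i$. The concatenation of these four segments is a path in $G^{(k')}\langle f\rangle$ from $u_i$ to $v_i$.

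The main obstacle is carrying out stages (a)--(d) coherently for all $m$ chords at once while preserving internal vertex-disjointness. This is exactly where the choice $k'=\max\{3,k\}$ enters: having $k'\geq m$ parallel spokes at each endpoint, cylindrical grids of width $k'$, and interconnecting blocks of $k'$ vertices ensures that the non-crossing pattern of the chords $e_1,\dots,e_m$ can be simulated by a non-crossing pattern of routes in each annular region, and a standard planar Menger-type argument for cylindrical grids then produces internally disjoint realisations of those routes. The bookkeeping is to index the spokes of each boundary vertex and the angular positions in each annulus in the cyclic order inherited from $H$, and then to check that two distinct $P_i,P_j$ cannot share an internal spoke vertex, base/roof vertex, or interior cell of any grid annulus; this routine but delicate verification is the technical crux of the proof.
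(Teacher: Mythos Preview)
The paper does not actually prove this proposition: immediately before the statement it says ``The proofs of the following proposition can be found in~\cite{ChatzidimitriouGMRTZ15fixe}.'' So there is no in-paper argument to compare your proposal against.

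That said, your outline is the natural one and matches the evident purpose of the cylindrical enhancement: the $k'$ parallel spokes at each boundary vertex, the width-$k'$ grid annuli $\Gamma_j$, and the width-$k'$ interconnecting bundles to the central annulus are built precisely so that any non-crossing system of at most $k$ chords in a face can be simulated by internally vertex-disjoint paths. Your four-stage decomposition (spoke in, ascend the annulus, traverse the central annulus if the endpoints lie in different boundary components, descend and spoke out) is the right skeleton, and processing chords in a nesting-compatible order is the correct bookkeeping device.

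One point that deserves more care than you give it: when several chords share an endpoint $v$ or have both endpoints in the same boundary component $C_j$, the paths you build enter and leave $\Gamma_j$ through the \emph{base} on both ends, not base-to-roof. Your stages (b)--(d) as written presume a base-to-roof traversal of each annulus. You need a separate (easier) sub-case for intra-component chords, where the nested non-crossing structure lets the $i$-th innermost chord be routed using only the first $i$ concentric cycles of $\Gamma_j$. Similarly, you should check that the cyclic order in which the inter-component paths reach the roof of $\Gamma_j$ agrees with the cyclic order of the blocks $I_1,\dots,I_\rho$ on the central annulus; this holds because the chord pattern in $H$ is globally non-crossing, but it is the step that actually uses planarity of $H$ and should be stated explicitly rather than absorbed into ``a standard planar Menger-type argument.''
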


\begin{lemma}
\label{equiv}
Let $G$ be a plane graph, with $q\in\Bbb{N}\cup\{\infty\}$, $k\in\Bbb{N}_{\geq 1}$ and $d\in\Bbb{N}_{\geq 1}$. 
Then $(G,q,k,d)$ is a {\sc yes}-instance of  {\sc BBFPDC} if and only if $G$ has a {\em $(q,k,d)$-extension} in  $G^{(\max\{3,k\})}$.
\end{lemma}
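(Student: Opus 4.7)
The plan is to prove each direction of the biconditional separately.

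For the forward direction, start from a completion $H$ of $G$ witnessing the yes-instance. For each face $f$ of $G$, apply Proposition~\ref{l7yujl} to $H$ (viewed as a $k$-face completion of $G$) to obtain a collection $\mathcal{P}_f$ of at most $k$ pairwise internally-disjoint paths in $G^{(\max\{3,k\})}\langle f\rangle$ whose endpoint-pairs are exactly the new $H$-edges inside $f$. Build the 3-partition ${\bf p}=\{E^0,E^1,E^\infty\}$ by labelling, for each path in $\bigcup_f \mathcal{P}_f$, a single arbitrarily chosen edge as a 1-edge and the remaining edges as 0-edges; all other edges of $G^{(\max\{3,k\})}\setminus G$ go into $E^\infty$. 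Conditions (B) and (D) are immediate from counting: one 1-edge per path, at most $k$ paths per face, at most $q$ new $H$-edges in total. Condition (A) holds because the paths are pairwise internally disjoint, so any walk consisting only of 0-edges stays inside a single path and is blocked by its 1-edge. For (C), a shortest $(x,y)$-path in $H$ of length at most $d$ is lifted by replacing each new $H$-edge $e$ by its associated path in $G^{(\max\{3,k\})}$; this substitution preserves the total weight, since each such substituted path carries exactly one 1-edge and some 0-weight 0-edges.

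For the backward direction, given a $(q,k,d)$-extension ${\bf p}$, construct $H$ by adding, for each 1-edge $e=\{u,v\}$ inside some face $f$ of $G$, a new edge $\{x_e,y_e\}$ to $H$ drawn inside $f$, where $x_e$ (resp.\ $y_e$) is the vertex of $V(G)$ on the boundary of $f$ reachable from $u$ (resp.\ $v$) through a path of 0-edges. Condition (A) makes each such anchor unique when it exists, and drawing the new edge along the concatenation of the two 0-edge tails and the 1-edge $e$ keeps the embedding planar since distinct 1-edges use disjoint tails. Conditions (B) and (D) immediately give the $k$-face and $q$-edge bounds for $H$. For the diameter bound, any pair $x,y\in V(G)$ is joined in $G^{(\max\{3,k\})}$ by a ${\sf w}_{\bf p}$-path of total weight at most $d$ by (C); such a path uses no $\infty$-edge (weight $d+1$) and decomposes into consecutive segments, each being either a single old edge of $G$ or a maximal block of the shape (0-edges)$\,\cdot\,$(1-edge)$\,\cdot\,$(0-edges). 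Each segment projects to a single edge of $H$ of weight $1$, yielding a walk of length at most $d$ in $H$.

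The main obstacle is ensuring, in the backward direction, that both anchors $x_e$ and $y_e$ exist for every 1-edge $e$. Condition (A) only forbids two vertices of $V(G)$ from being joined by a 0-edge walk; it does not by itself guarantee that at least one such vertex is reachable from each endpoint of $e$ along 0-edges. I would handle this by first preprocessing ${\bf p}$ into a canonical form in which every 1-edge is doubly anchored: any 1-edge that fails to reach $V(G)$ on one side contributes no useful shortcut for (C) and can be safely relabelled as an $\infty$-edge, while the grid-annulus routing flexibility provided by the cylindrical enhancement inside each face of $G$ allows the surviving 1-edges to be re-routed so that their 0-edge tails reach $V(G)$ without destroying (A)--(D). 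This preprocessing reduces the construction to the doubly-anchored case, where the argument above applies directly.
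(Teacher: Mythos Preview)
Your forward direction coincides with the paper's: invoke Proposition~\ref{l7yujl} face by face, mark one edge per routed path as a $1$-edge and the remaining edges of that path as $0$-edges, and put everything left over into $E^{\infty}$. Conditions (A)--(D) follow as you say.

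For the converse the paper is terser than you: it deletes $E^{\infty}$ from $G^{(\max\{3,k\})}$ and then contracts every edge of $E^{0}$, declaring the result to be $H$. Condition~(A) keeps the vertices of $G$ pairwise distinct under the contraction, and (B), (D), (C) hand over the $k$-face, $q$-edge and diameter bounds. Your anchor construction is exactly what this contraction produces edge by edge, so the two approaches agree in substance; the contraction phrasing simply avoids naming $x_e,y_e$ and gets planarity of $H$ for free (contracting edges in a plane graph yields a plane graph), whereas your ``draw along the concatenation'' step needs extra care when two $1$-edges share part of a $0$-edge tail.

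You are right that anchoring is the only delicate point --- and the paper's ``it is easy to observe'' glosses over the very same issue (after contraction there may remain vertices coming from $E^0$-components that meet no vertex of $V(G)$). However, your proposed fix is slightly off: relabelling an un-anchored $1$-edge as an $\infty$-edge can break~(C), because such an edge may well lie on a shortest path, reached through \emph{another} $1$-edge incident to the same $E^0$-component. The clean repair is to relabel it as a \emph{$0$-edge} instead. If $e=\{u,v\}\in E^1$ and the $E^0$-component of $u$ contains no vertex of $V(G)$, then absorbing $e$ into $E^0$ merges $u$'s component with $v$'s; the merged component still contains at most one vertex of $V(G)$, so (A) survives; the weight of $e$ drops from $1$ to $0$, so (C) only improves; and $|E^1|$ strictly decreases, so (B) and (D) improve and the process terminates. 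After this preprocessing every surviving $1$-edge is doubly anchored, your segment decomposition ``(0-edges)$\cdot$(1-edge)$\cdot$(0-edges)'' becomes valid, and no grid-annulus re-routing is needed.
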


\begin{proof}
Assume first that $(G,q,k,d)$ is a {\sc yes}-instance of  {\sc BBFPDC} and let 
$H$ be a  completion $H$ of $G$ of diameter  at most $d$ that is also a $q$-edge completion and a $k$-face completion. This means that for every $f\in F(G)$, 
the graph $H_{f}=H\langle f\rangle$ contains at most $k$ edges and that the 
graph $H^{\rm new}=\bigcup_{f\in F(H)}H_{f}$ contains at most $q$ edges.
From Proposition~\ref{l7yujl}, there is a collection ${\cal P}_{f}$ of $y_{f}=|E(H_{f})|$ internally disjoint paths 
in $G^{(\max\{3,k\})}$. Let $E^{1}$ be a set of $y=\sum_{f\in F(G)}y_{f}$ edges obtained if, for every $f\in F(G)$,
we pick one edge from each of the paths in ${\cal P}_{f}$.
Let $E^{0}=E(\bigcup_{f\in F(G)}\bigcup_{P\in{\cal P}_{f}})\setminus E^{1}$ and let $E^{\infty}=E(H^{\rm new})\setminus(E^{0}\cup E^{1})$. 
We now observe that ${\bf p}=\{E^{0},E^{1},E^{\infty}\}$ is a  3-partition 
of  $E(H^{\rm new})=E(G^{(\max\{3,k\})})\setminus E(G)$. By its construction, 
${\bf p}$ satisfies conditions 1--4 of the definition of a $(q,k,d)$-extension of $G$ in $G^{(\max\{3,k\})}$
as required.
  
Let now  ${\bf p}=\{E^{0},E^{1},E^{\infty}\}$ is a  3-partition 
of  $E(H^{\rm new})=E(G^{(\max\{3,k\})})\setminus E(G)$ that is a  $(q,k,d)$-extension of $G$ in $G^{(\max\{3,k\})}$. We construct the graph $H$  by removing from  $G^{(\max\{3,k\})}$
all edges in $E^{\infty}$ and then, in the resulting graph, contract all edges in $E^{0}$.
It is easy to observe that $H$ is a completion of $G$ that is 
also an $q$-edge completion and a $k$-face completion
\end{proof}

\section{Structures for dynamic programming }
\label{jhf5tty}

For our dynamic programming algorithm we need a variant  of branchwidth for plane graphs whose middle sets have  additional topological properties.

\medskip\noindent{\bf Sphere-cut decomposition.}
Let $H$ be a plane graph. An arc is a subset $O$ of the plane homeomorphic to a circle
and is called a {\it noose of $H$} if it meets $H$ only in vertices. We also set $V_{O}=V(H)\cap O$.
An {\em arc} of a noose $O$ is a connected  component of $O\setminus V_{O}$ while 
in the trivial case where $V_{O}=\emptyset$, $O$ does not have arcs.
A {\it sphere-cut decomposition} or {\it sc-decomposition} of $H$ is a triple  $(T,\mu,\pi)$
where $(T,\mu)$ is a branch decomposition of $H$ and $\pi$ is a function mapping each $e\in E(T)$ to cyclic orderings of vertices of $H$, such that
for every $e\in E(T)$ there is  a noose $O_e$ of $H$ where the following properties are satisfied.

	\begin{itemize}
	\item $O_e$ meets every face of $H$ at most once,
	\item $H_{1}^{e}$ is contained in one of the closed disks bounded by $O_{e}$ and $H_{2}^{e}$ is contained in the other ($H_{1}^{e}$ and $H_{2}^{e}$ are as in the definition of branch decomposition).
	\item $\pi(e)$ is a cyclic ordering of $V_{O_{e}}$ defined by a clockwise traversal of $O_e$ in the embedding of $H$.
	\end{itemize}

We denote $X_{e}=V_{O_{e}}$ and  we  always assume that its vertices are clockwise enumerated according to $\pi(e)$.
We denote by ${\bf A}_{e}$ the set containing  the arcs of $O_{e}$. 
Also, if $\pi(e)=[a_{1},\ldots,a_{k},a_{1}]$, then  we use the notation ${\bf A}_{e}=\{a_{1,2},a_{2,3},\ldots,a_{k-1,k},a_{k,1}\}$ where the boundary of the arc $a_{i,i+1}$ consists of the vertices $a_{i}$ and $a_{i+1}$.
We  also define $H_{e}^{+}=(V(H),E(H\cup {\bf A}_{e}))$, i.e., $H_{e}^+$ is the embedding occurring if we add in $H$ the 
arcs of $O_{e}$ as edges. A face of $H_{e}^{+}$ is called {\em internal} if it is not incident to an arc in ${\bf A}_{e}$, i.e., it is also a face of $H$.
A face of $H_{e}^{+}$ is {\em marginal} if it is a properly included is some face of $H$.

For our dynamic programming we require to have in hand an optimal sphere-cut decomposition.
This is done combining the main result of~\cite{GuTa08} and~\cite[(5.1)]{SeymourT94call} (see also~\cite{DornPBF10effi}) and is summarized to the following.

\begin{proposition}
\label{makesphere}
There exists an algorithm that, with input a 3-connected plane graph $G$ and  $w\in \Bbb{N}$,  
outputs a sphere-cut decomposition of $G$ of width  at most $w$ or reports that $\bw(G)>w$.
\end{proposition}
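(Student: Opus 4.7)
The plan is to compose two known algorithmic results. First, I would apply the algorithm of Gu and Tamaki~\cite{GuTa08}, which computes, in $O(n^3)$ time, an optimal branch decomposition $(T,\mu)$ of an $n$-vertex planar graph together with its width $\bw(G)$. If $\bw(G) > w$, I output the negative answer and stop; otherwise I have in hand a branch decomposition of width at most $w$, and the remaining task is to upgrade it to a sphere-cut decomposition of the same width.

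For the upgrading step I would invoke the topological characterization of branchwidth for 3-connected plane graphs given by Seymour and Thomas in~\cite[(5.1)]{SeymourT94call}. That result guarantees that, for every edge $e$ of $T$, the middle set ${\bf mid}(e)$ can be realized as the vertex set of a noose $O_e$ of $G$ that separates $H_1^e$ from $H_2^e$ and meets each face of $G$ at most once. The cyclic ordering $\pi(e)$ is then defined by a clockwise traversal of $O_e$. Concretely, the noose can be constructed from a suitable bond in the radial graph of $G$ separating the duals of the two halves of the decomposition, in the spirit of the algorithm described in~\cite{DornPBF10effi}; this step can be carried out in polynomial time independently for each edge of $T$.

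The main obstacle I would expect is ensuring \emph{global} compatibility of the constructed nooses: the nooses assigned to the three edges incident to a common internal node $t$ of $T$ must meet in a topologically coherent way so that the resulting triple $(T,\mu,\pi)$ is a valid sc-decomposition. This is precisely where the 3-connectivity hypothesis is used. In a 3-connected plane graph, every minimum vertex separator of bounded size is essentially unique (it corresponds to a single simple closed curve in the sphere), so the nooses attached to the three edges at $t$ can be chosen canonically, and their pairwise consistency across neighbouring edges of $T$ follows from this uniqueness. Once compatibility is verified, the total running time is dominated by the $O(n^3)$ cost of computing the initial branch decomposition, with the noose-construction stage adding only polynomial overhead.
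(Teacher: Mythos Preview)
Your proposal is correct and follows exactly the approach the paper takes: the paper does not give its own proof of this proposition but simply cites the combination of Gu--Tamaki~\cite{GuTa08} for computing an optimal branch decomposition and Seymour--Thomas~\cite[(5.1)]{SeymourT94call} (with a pointer to~\cite{DornPBF10effi}) for realizing it as a sphere-cut decomposition in the 3-connected case. Your additional discussion of noose compatibility and the role of 3-connectivity is a reasonable elaboration of what these references provide, but it goes beyond what the paper itself spells out.
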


Our next step is to define a series of combinatorial structures that are necessary for our dynamic programming.
Given two sets $A$ and $B$ we denote by $A^B$ the set of all functions 
from $B$ to $A$.

\medskip\noindent{\bf $(d,k,q)$-configurations.} 
Given a set $X$ and a non-negative integer $t$, we say that the pair $({\cal X},\chi)$
is a {\em $t$-labeled partition of $X$}  if ${\cal X}$ is a collection of pairwise disjoint non-empty subsets  of $X$ 
and $\chi$ is a function mapping the integers in $\{1,\ldots,|{\cal X}|\}$ to integers in $\{0,\ldots,t\}$.
In case $X=\emptyset$, a {\em $t$-labeled partition} corresponds to the pair 
$\{\emptyset,\varnothing\}$ where $\varnothing$ is the ``empty" function, i.e. the function whose domain is empty. 
Let $X$ and $A$ be two  finite sets. Given $d,k\in\Bbb{N}$ and $q\in\Bbb{N}\cup\{\infty\}$, we define
a {\em $(d,k,q)$-configuration} of $(X,A)$ as a quintuple  $(({\cal X},\chi),({\cal A},\alpha),({\cal F},{\cal E}),\delta,z)$ where 

\begin{enumerate}
\item $({\cal X},\chi)$  is a $1$-labeled partition of $X$,
\item $({\cal A},\alpha)$ is a  $k$-labeled partition of $A$,
\item $({\cal F},{\cal E})$ is a graph (possibly with loops) where ${\cal F}\subseteq \{0,\ldots,d+1\}^{X}$,
\item $\delta\in \{0,\ldots,d+1\}^{X^2}$, and
\item if $q\in\Bbb{N}$, then $z\leq q$, otherwise $z=\infty$.
\end{enumerate}

\medskip\noindent{\bf Fusions and restrictions.} Let $({\cal X}_1,\chi_1)$
and $({\cal X}_2,\chi_2)$ be two $t$-labeled partitions of the sets $X_{1}$ and $X_{2}$ respectively such that ${\cal X}_{i}=\{X_{1}^i,\ldots,X_{\rho_{1}}^{i}\}, i\in\{1,2\}$. We define ${\cal X}_{1}\oplus {\cal X}_{2}$
as follows: if $x,x'\in X_{1}\cup X_{2}$
we say that $x\sim x'$ if there is a set in ${\cal X}_{1}\cup {\cal X}_{2}$
that contains both of them. Let $\sim_{T}$ be the transitive closure of $\sim$. Then 
${\cal X}_{1}\oplus {\cal X}_{2}$ contains the equivalence classes of $\sim_{T}$.
We now define $\chi_{1}\oplus \chi_{2}$  as follows: let ${\cal X}_{1}\oplus {\cal X}_{2}=\{Y_{1},\ldots,Y_{\rho}\}$. 
Then for each $i\in\{1,\ldots,\rho\}$, we define

$\chi_{1}\oplus \chi_{2}(i)=\min\{t,\sum_{{X_{i'}^1\subseteq  {Y}_{i}}}\chi_{1}(i')+\sum_{{X_{i'}^2\subseteq  {Y}_{i}}}\chi_{2}(i')\}.$
 
The {\em fusion} of  the $t$-labeled partitions  $({\cal X}_1,\chi_1)$
and $({\cal X}_2,\chi_2)$ is the pair $({\cal X}_{1}\oplus {\cal X}_{2},\chi_{1}\oplus \chi_{2})$
that is  a $(t+1)$-labeled partition and is denoted by $({\cal X}_1,\chi_1)\oplus ({\cal X}_2,\chi_2)$.
Given a $t$-labeled partition $({\cal X},\chi)$ of a set $X$ and given a subset $X'$ of $X$
we define the {\em restriction} of  $({\cal X},\chi)$ to $X'$ as the $t$-labeled partition $({\cal X}',\chi')$ of $X'$ where  ${\cal X}'=\{X_{i}\cap X'\mid X_{i}\in {\cal X}\}\setminus \{\emptyset\}$
and $\chi'=\{(i,\chi(i))\mid X_{i}\cap X'\neq\emptyset\}$ and we denote it by $({\cal X},\chi)|_{ X'}$. We also define the intersection of $({\cal X},\chi)$ with $X'$ as the  $t$-labeled partition $({\cal X}',\chi')$ where  ${\cal X}'=\{X_{i}\in {\cal X}\mid X_{i}\cap (X\setminus X')\neq\emptyset\}$
and $\chi'=\{(i,\chi(i))\mid X_{i}\cap X''\neq\emptyset\}$ where $X''=\cup_{X_{i}'\in{\cal X}'}X_{i}$ and we denote it by $({\cal X},\chi)\cap { X'}$. Notice that $({\cal X},\chi)|_{ X'}$ and $({\cal X},\chi)\cap { X'}$ are not always the same.

\section{Dynamic programming}
\label{dmpop}

The following result  is the main algorithmic contribution of this paper.

\begin{lemma}
\label{algdp}
There exists an algorithm that, given  $(G,H,q,k,d,D,b)$ as input where  $G$ and $H$ are 
plane graphs such that $G$ is a  subgraph of $H$, $H$ is 3-connected, 
$q\in \Bbb{N}\cup\{\infty\}, k\in\Bbb{N}$, $d\in\Bbb{N}_{\geq 1}$,  $b\in\Bbb{N}$,
and $D=(T,\mu,\pi)$ is a  sphere-cut decomposition  of $H$
with width at most $b$,
decides whether    $G$ has {\em $(q,k,d)$-extension} in $H$ in
 $(\alpha(q))^{2}\cdot 2^{O(b^2\log d)+2^{O(b\log d)}} \cdot n$ steps.
\end{lemma}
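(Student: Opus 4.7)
The plan is a bottom-up dynamic programming on the tree $T$ of the sphere-cut decomposition $D$. For each edge $e\in E(T)$, the noose $O_{e}$ splits $\Bbb{S}_{0}$ into two closed disks, and $H_{1}^{e}$ lies in one of them. We build a table $\tau_{e}$ listing every $(d,k,q)$-configuration of $(X_{e},A_{e})$ that is realized by some 3-partition of $E(H_{1}^{e})\setminus E(G)$; here $A_{e}$ denotes the set of marginal faces of $H_{e}^{+}$ (the face fragments of $H$ cut by $O_{e}$ that may still receive $1$-edges from the opposite disk). The components of a configuration record exactly the information required by conditions A--D that cannot yet be resolved: $({\cal X},\chi)$ partitions $X_{e}$ by the $0$-edge components in the processed disk, with $\chi$ marking whether a class has already absorbed a $V(G)$-vertex; $({\cal A},\alpha)$ clusters marginal faces belonging to a common face of $G$ and stores their cumulative $1$-edge count capped at $k+1$; $\delta$ records, for each pair $x,x'\in X_{e}$, the shortest ${\sf w}_{\bf p}$-distance already achieved inside the disk; $({\cal F},{\cal E})$ remembers the distance profiles $(\dist(v,x))_{x\in X_{e}}$ of those interior vertices $v$ that may still serve as relays on a future shortest path, together with the edges identifying which combinations of such profiles coexist; finally $z$ counts the $1$-edges placed so far.

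The base case is immediate: a leaf $l$ of $T$ corresponds to a single edge $uv=\mu(l)$; we populate $\tau_{l}$ with the unique configuration if $uv\in E(G)$ and with three configurations---one per label in $\{0,1,\infty\}$---otherwise. At an internal node of $T$ with children producing $\tau_{e_{1}}$ and $\tau_{e_{2}}$, for every pair $(C_{1},C_{2})\in\tau_{e_{1}}\times\tau_{e_{2}}$ we attempt to produce a parent candidate $C$ over $(X_{e},A_{e})$ by (i) taking the fusion $({\cal X}_{1},\chi_{1})\oplus({\cal X}_{2},\chi_{2})$ and restricting to $X_{e}$, rejecting if any fused class with a $V(G)$-vertex acquires label $>1$; (ii) taking $({\cal A}_{1},\alpha_{1})\oplus({\cal A}_{2},\alpha_{2})$, rejecting any fused class now lying fully inside the union of the two processed disks with label $>k$, then restricting to $A_{e}$; (iii) updating $\delta$ by a min-plus convolution of $\delta_{1}$ and $\delta_{2}$ over the vertices of $(X_{e_{1}}\cup X_{e_{2}})\setminus X_{e}$, using the profiles in ${\cal F}_{1},{\cal F}_{2}$ to route through interior relays; (iv) merging the graphs $({\cal F}_{i},{\cal E}_{i})$ by restricting profiles to the coordinates in $X_{e}$ and identifying equal ones; (v) setting $z=z_{1}+z_{2}$, rejecting if $z>q$. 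At the root edge $e_{0}$ the noose contains no vertex, so $X_{e_{0}}=\emptyset$ and acceptance amounts to finding a configuration with $z\leq q$, every $\alpha$-label $\leq k$, and all pairwise $V(G)$-distances witnessed by $\delta$ and $({\cal F},{\cal E})$ bounded by $d$.

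The hard part is steps (iii)--(iv): a shortest $V(G)$-to-$V(G)$ path may enter and leave $O_{e}$ several times and use interior relays on either side, so the distance information carried across the boundary must be rich enough to reconstruct such a path yet compact enough to keep the tables bounded. This is handled by only storing in $({\cal F},{\cal E})$ the pairwise distinct profiles (quantized by the cap $d+1$) of interior vertices that may still matter on the other side; there are at most $(d+2)^{|X_{e}|}$ such profiles, and the graph ${\cal E}$ records which multisets of profiles are simultaneously realized inside one disk, which is exactly what the min-plus convolution in (iii) needs as input to decide whether a threaded shortest path attaining a given length exists.

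For the complexity, the number of $(d,k,q)$-configurations over $(X_{e},A_{e})$ with $|X_{e}|,|A_{e}|=O(b)$ is bounded by the product $(b{+}1)^{O(b)}\cdot(k{+}2)^{O(b)}\cdot(d{+}2)^{O(b^{2})}\cdot 2^{O((d{+}2)^{2b})}\cdot\alpha(q)$, which simplifies to $\alpha(q)\cdot 2^{O(b^{2}\log d)+2^{O(b\log d)}}$. The merge at each internal node iterates over all pairs of configurations from the two child tables; the non-$z$ components determine the parent entry in constant time per pair, while the $z$-components contribute a factor of $(\alpha(q))^{2}$. Since squaring the non-$z$ part yields a bound of the same asymptotic form $2^{O(b^{2}\log d)+2^{O(b\log d)}}$, each internal node is processed in $(\alpha(q))^{2}\cdot 2^{O(b^{2}\log d)+2^{O(b\log d)}}$ steps. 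As $T$ has $O(n)$ edges, the total running time matches the bound stated in the lemma.
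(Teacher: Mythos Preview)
Your high-level plan matches the paper's: bottom-up dynamic programming on the sphere-cut decomposition, tables of $(d,k,q)$-configurations, and a join at each internal node over all pairs from the children. The components $({\cal X},\chi)$, $({\cal A},\alpha)$, $\delta$, and $z$ are described correctly, and your complexity count is essentially the paper's.

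The gap is in $({\cal F},{\cal E})$. You say ${\cal F}$ holds profiles of interior vertices ``that may still serve as relays'' and ${\cal E}$ records ``which multisets of profiles are simultaneously realized''. Neither is right, and neither suffices for Condition~C. In the paper, ${\cal F}$ contains the distance vector $\phi_v=(\min\{{\sf w}_{\bf p}\mbox{-\bf dist}_{H_e}(v,x),d{+}1\})_{x\in X_e}$ of \emph{every} $v\in V_e$, and $\{\phi,\phi'\}\in{\cal E}$ (possibly a loop) means there exist $v,v'$ with those profiles whose ${\sf w}_{\bf p}$-distance in $H_e$ still exceeds $d$: ${\cal E}$ is the set of \emph{unresolved} diameter constraints, not a coexistence relation (which would be redundant with ${\cal F}$). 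Interior vertices are never relays for updating $\delta$; the parent $\delta$ comes from $\delta_l,\delta_r$ alone via min-plus over the forgotten set $X_e^F$. Consequently your step (iv) misses the substance of the merge: the paper takes the disjoint union of the two dependency graphs, deletes every edge $\{\phi_1,\phi_2\}$ now resolved by a path of length $\le d$ threading through $X_e^F$, then \emph{rejects} the pair if some surviving edge has no $x_1,x_2\in X_e$ with $\gamma(\phi_1,x_1)+\gamma(\phi_2,x_2)\le d$ (this feasibility test, the paper's Condition~7, is what actually enforces Condition~C and you omit it), and only then projects profiles to $X_e$ and identifies equal images, keeping loops where identified profiles were joined by a surviving edge. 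Your root test is also off: at $e_r$ one has $X_{e_r}=\emptyset$, so there is a single empty profile and acceptance is precisely ${\cal E}=\emptyset$, not a check of ``pairwise distances witnessed by $\delta$''.
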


\begin{proof}  
We use the notation $E^{\rm old}=E(G)$ and $E^{\rm new}=E(H)\setminus E(G)$,  $V^{\rm old}=V(G)$ and $V^{\rm new}=V(H)\setminus V(G)$.
We 
choose an arbitrary edge $e^*\in E(T)$, subdivide it by adding a new vertex $v_{\text{new}}$ and update $T$ by adding
 a new vertex $r$ adjacent to $v_{\text{new}}$. 
 We then root $T$ at this vertex $r$ and we extend $\mu$ by setting $\mu(r) = \emptyset$.   
 In $T$ we call {\em leaf-edges} all its edges that are 
 incident to its leaves except from the edge $e_{r}=\{r,v_{\rm new}\}$.
 An edge of $T$ that is not a leaf-edge is called {\em internal}. 
 We denote by $L(T)$ the set of the leaf-edges of $T$ and we denote by $I(T)$ the internal edges of $T$. We also call $e_{r}$ {\em root-edge}.
For each $e\in E(T)$, let $T_e$ be the tree of the forest $T\setminus \{e\}$ that does not contain $r$ as a leaf and let $E_e$ be the edges that are images, via $\mu$, of the leaves of $T$ that are also leaves of $T_e$. We denote $H_e=H[E_e]$ and $V_{e}=V(H_{e})$ and observe that $H_{e_r}=H$. For each edge $e\in I(T)$,  we define its children as the two edges that both belong in the connected component of $T\setminus e$
that does not contain the root $r$ and that share a common endpoint with $e$.
Also, for each edge $e\in E(T)$, we 
define $\Delta_{e}$ as the closed disk bounded by $O_{e}$  such that $G\cap \Delta_{e}=H_{e}$. Finally, for each edge $e\in E(T)$, we set $X_{e}={\bf mid}(e)$, $V^{\rm new}_{e}=V_{e}\cap V^{\rm new}$, $V^{\rm old}_{e}=V_{e}\cap V^{\rm old}$, $E^{\rm new}_{e}=E_{e}\cap E^{\rm new}$, and $E^{\rm old}_{e}=E_{e}\cap E^{\rm old}$.

\medskip\noindent{\bf Distance signatures and dependency graphs.}
Let ${\bf p}=\{E^{0}_{e},E^{1}_{e},E^{\infty}_{e}\}$  be a 3-partition of  $E^{\rm new}_{e}$.   
For each vertex $v\in V_{e},$ we define the {\em $(X_{e},{\bf p})$-distance vector} of $v$
as the function $\phi_{v}: X_{e}\rightarrow\{0,\ldots,d+1\}$ such that if $x\in X_{e}$ then $\phi_{v}(x)=\min\{{\sf w}_{\bf p}\mbox{-\bf dist}_{G_{e}}(v,x),d+1\}.$
We 
define the {\em $(e,{\bf p})$-dependency graph} ${\cal G}_{e,{\bf p}}=({\cal F}_{e,{\bf p}},{\cal E}_{e,{\bf p}})$  (that may contain loops)
where   ${\cal F}_{e,{\bf p}}=\{\phi_{v}\mid v\in V_{e}\}$ and
such that two (not necessarily distinct) vertices  $\phi$ and $\phi'$ of ${\cal F}_{e,{\bf p}}$ are connected by an edge in ${\cal E}_{e,{\bf p}}$
if and only if there exist $v,v'\in V_{e}$ such that $\phi=\phi_{v}$, $\phi'=\phi_{v'}$ and  ${\sf w}_{\bf p}\mbox{-\bf dist}_{H_{e}}(v,v')> d$.
Notice that the set $\Phi_{e}=\{{\cal G}_{e,{\bf p}}\mid {\bf p}\mbox{~is a 3-partition of  $E^{\rm new}_{e}$}\}$
has at most $2^{(d+2)^{|X_{e}|}}$ elements because $\{{\cal F}_{e,{\bf p}}\mid {\bf p}\mbox{~is a 3-partition of  $E^{\rm new}_{e}$}\}\subseteq \{0,\ldots,d+1\}^{X_{e}}$  and, to each ${\cal F}_{e,{\bf p}}$, 
assign a unique edge set ${\cal E}_{e,{\bf p}}$.
Intuitively, each ${\cal F}_{e,{\bf p}}$ corresponds to a partition of the elements of $V_{e}$
such that vertices in the same part have the same $(X_{e},{\bf p})$-distance signature.
Moreover the existence of an edge in the $(e,{\bf p})$-dependency graph between two such parts implies that 
they contain vertices, one from each part, whose ${\sf w}_{\bf p}$-distance in $H_{e}$ is bigger than $d$.

\medskip\noindent{\bf The tables.}
Our aim is to give a dynamic programming algorithm running on the  sc-decomposition $T$. For this, we  describe, for each  $e\in E(T)$, a table $\mathfrak{T}(e)$ containing information on partial solutions of the problem for the graph $G_{e}$
in a way that  the table of an edge $e\in E(T)$ can be computed using the tables of the two children of $e$, the size of each table does not depend on $G$  and the final answer can be derived by the table of the root-edge $e_{r}$.

We define the function $\mathfrak{T}$ mapping each $e\in E(T)$ to a collection $\mathfrak{T}(e)$ 
of $(d,k,q)$-configurations of $(X_{e},{\bf A}_{e})$. In particular, $Q=(({\cal X},\chi),({\cal A},\alpha),({\cal F},{\cal E}),\delta,z)$ $\in \mathfrak{T}(e)$ iff   there exists  a 3-partition ${\bf p}=\{E^{0}_{e},E^{1}_{e},E^{\infty}_{e}\}$ of  $E^{\rm new}_{e}$    such that
the following hold:

\begin{enumerate}
\item $C_{1},\ldots,C_{h}$ are the connected components of $(V(H_{e}),E^{0}_{e})$, 
then 
\begin{itemize}
\item ${\cal X}=\{V(C_{1})\cap X_{e},\ldots,V(C_{h})\cap X_{e}\}$
and 
\item $\forall_{i\in\{1,\ldots,h\}}\ \ \chi(i)= 1$ if $C_{i}$ contains  some vertex of $V_{e}^{\rm old}$, otherwise  $\chi(i) = 0$.
\end{itemize}

 {(The pair $({\cal X},\chi)$ encodes the connected components of the $0$-edges that contain vertices of $X_{e}$
and for each of them registers the number (0 or 1) of the vertices in  $V_{e}^{\rm old}$ in them. This 
information is important to control Condition A.)
}

\item ${\cal A}$ is a partition of ${\bf A}_{e}$ such that  two arcs $A,A'\in {\bf A}_{e}$ belong in the same set, say  $A_{i}$ of ${\cal A}$ 
if and only if they are incident to the same marginal face $f_{i}$ of $H_{e}^{+}$. Moreover, for each $i\in\{1,\ldots,|{\cal A}|\}$, 
$\alpha(i)$ is equal to the number of edges in $E^{1}_{e}$
that are inside $f_{i}$.

 {\small(Here $({\cal A},\alpha)$ encodes the ``partial'' faces of the embedding of $G_{e}$ that are inside $\Delta_{e}$. 
To each of them we correspond the number of $1$-edges that they contain in $H_{e}$. This is useful 
in order to guarantee that   during the algorithm, faces that stop being marginal do not contain
more than $k$ 1-edges, as required by Condition B.)

\item   $({\cal F},{\cal E})$ is the $(e,{\bf p})$-dependency graph, i.e., the graph ${\cal G}_{e,{\bf p}}=({\cal F}_{e,{\bf p}},{\cal E}_{e,{\bf p}})$.  }

 {(Recall that ${\cal F}$ is the collection of all the 
different distance vectors of the vertices of $V_{e}$. 
Notice also that  there might be pairs of 
vertices $x,x'\in V_{e}$ whose ${\bf w}_{\bf p}$-distance 
in $G_{e}$ is bigger than $d$. In order for $G$ 
to have a completion of diameter $d$, these two vertices 
should become connected, at some step of the algorithm,  
by paths  passing {\sl outside} $\Delta_{e}$.
To check this possibility, it is enough to know the  distance vectors of $x$ and $x'$ and 
these are encoded in the set ${\cal F}$. 
Moreover the fact that $x$ and $x'$ are still ``far away'' inside $G_{e}$
is certified by the existence  of an edge (or a loop) between their distance vectors in ${\cal F}$.)
}

\item    For each pair $x,x'\in X_{e}$, $\delta(x,x')=\min\{{\sf w}_{\bf p}\mbox{-\bf dist}_{H_{e}}(x,x'),d+1\}$.

 {(This information is complementary to the one stored in ${\cal F}$
and registers the distances of the vertices in $X_{e}$ inside $H_{e}$.
As we will see, ${\cal F}$ and $\delta$ will be used
in order to compute the distance vectors as well as 
their dependencies during the steps of the algorithm. )
}

\item  There is no path in $H_e$ with endpoints in  $V_{e}^{\rm old}$  that consists of  edges in $E^{0}_{e}$.

 {(This ensures that Condition A is satisfied for the current graph $G_{e}$.)
}

\item  Every internal face of $G^+_{e}$ contains at most $k$ edges in $E_{e}^{1}$.

 {(This ensures that  Condition B holds for all the internal faces of $G_{e}$.)
}

\item  $\forall v,v'\in V_{e}$, either ${\sf w}_{\bf p}\mbox{-\bf dist}_{H_{e}}(v,v')\leq d$ or 
there are two vertices $x,x'\in X_{e}$ such that $\phi_{v}(x)+\phi_{v'}(x')\leq d$.

 {(Here we demand that if two vertices $x_1,x_2$ of  
$V_{e}$ are  ``far away'' (have ${\bf w}_{\bf p}$-distance $>d$)
inside $H_{e}$  then they have some chance  to come ``close'' (obtain ${\bf w}_{\bf p}$-distance
$\leq d$)  in the 
final graph, so that Condition C is satisfied. This 
fact is already stored by an edge in ${\cal E}$ between the two distance vectors of 
$x$ and $x'$ and the possibility that $x_1$ and $x_2$ may come close at some step 
of the algorithm, in what concerns the graph $G_{e}$, depends only on these distance vectors 
and not on the vertices $x_1$ and $x_2$ themselves.)
}

 \item  There are at most $z$ edges of $E_{e}^{1}$ inside the internal faces of $G^{+}_{e}$ (clearly, this last condition becomes void when $q=\infty$). 

 {(This information helps us control Condition D during  the algorithm.)}

\end{enumerate}
Notice that in case $X_{e}=\emptyset$ the only graph that can correspond to the 6th step is the graph $(\{\varnothing\},\emptyset)$ which, from now on will be denoted by $G_{\varnothing}$.

\medskip\noindent{\bf Bounding the set of characteristics.}
Our next step is to bound $\mathfrak{T}(e)$ for each $e\in E(T)$. Notice first that $|X_{e}|=|{\bf A}_{e}|\leq b$.
This means that there are $2^{O(b\log b)}$ instantiations of $({\cal X},\chi)$ 
and $2^{O(k+b\log b)}$ instantiations of $({\cal A},\alpha)$. As we previously noticed, the different instantiations 
of $({\cal F},{\cal E})$ are  $|{\Phi}_{e}|= 2^{2^{O(b\log d)}}$. Moreover, there are 
$2^{O(b^2\log d)}$ instantiations of $\delta$ and $\alpha(q)$ instantiations of $z$.
We conclude that there exists a function $f$ such that 
for each $e\in V(T)$, $|\mathfrak{T}(e)|\leq f(k,q,b,d)$. Moreover,  $f(k,q,b,d)=\alpha(q)\cdot 2^{O(b^2\log d)+2^{O(b\log d)}}$.

\medskip\noindent{\bf The characteristic function on the root edge.}
Observe  that $E_{\rm new}$ is $(k,d,q,{\sf w})$-edge colorable
in $H$ if and only if $\mathfrak{T}(e_{r})\neq\emptyset$,~~ i.e., $((\emptyset,\varnothing),(\emptyset,\varnothing),G_{\varnothing},$$\varnothing,z)\in \mathfrak{T}(e_r)$ for some $z\leq q$. Indeed, if this happens, conditions 1--4 become void while  conditions 5, 6, 7, and 8 imply that $H=H_{e}$ satisfies the conditions A, B, C, and D respectively 
in the definition of the $(k,d,q,{\sf w})$-edge colorability of $E^{\rm new}$.

\medskip\noindent{\bf The  computation of the tables.} 
We will now show how to compute  $\mathfrak{T}(e)$ for each $e\in E(T)$.

We now give the definition of $\mathfrak{T}(e)$ in the case where $e$ is a leaf of $T$ is the following:
Given a $q\in\Bbb{N}\cup\{\infty\}$, we define $A(q)=\{\infty\}$ if $q=\infty$, otherwise
$A(q)=\{z\mid z\leq q\}$.

Suppose now that $e_{l}$ is a leaf-edge of $T$ where $\pi(e_{l})=[a_1,a_2,a_1]$ and ${\bf A}_{e_{l}}=\{a_{1,2},a_{2,1}\}$.

\begin{enumerate}
\item If $\{a_{1},a_{2}\}\in E_{e}^{\rm old}$, then
\begin{eqnarray*}
\frak{T}(e_{l}) & =& \{\ \big((\{\{a_1\},\{a_2\}\},\{(1,1),(2,1)\}),\\
& &~~\, (\{\{a_{1,2}\},\{a_{2,1}\}\},\{(1,0),(2,0)\}),\\
& &~~\, \big(\big\{\{(a_{1},0),(a_{2},{\sf w}(\{a_{1},a_{2}\}))\},\{(a_{1},{\sf w}(\{a_{1},a_{2}\})),(a_{2},0)\}\big\},\emptyset\big),\\
& &~~\,  \{((a_{1},a_{2}),{\sf w}(\{a_{1},a_{2}\}))\},z\big)\mid z\in A(q)\},
\end{eqnarray*}

\item if $\{a_{1},a_{2}\}\in E_{e}^{\rm new}$ and $\{a_{1},a_{2}\}\subseteq V_{e}^{\rm old}$, then $\frak{T}(e_{l}) = {\cal Q}^{1}\cup {\cal Q}^{\infty}$  where 
\begin{eqnarray*}
{\cal Q}^{1} & =& \{\  \big(\ (\{\{a_1\},\{a_2\}\},\{(1,1),(2,1)\})       \\
& &~~\,  ( \{\{a_{1,2},a_{2,1}\}\},\{(1,1)\}   )\\
& &~~\,  ( \big\{\{(a_{1},0),(a_{2},1)\},\{(a_{1},1),(a_{2},0)\}\big\},\emptyset)\\
& &~~\,  \{((a_{1},a_{2}),s)\},z \,  \big)\mid z\in A(q)-\{0\} \}\\
\\
{\cal Q}^{\infty} & =&  \{\ \big(\ (\{\{a_1\},\{a_2\}\},\{(1,1),(2,1)\})       \\
& &~~\,  ( \{\{a_{1,2},a_{2,1}\}\},\{(1,0)\}   )\\
& &~~\,  ( \big\{\{(a_{1},0),(a_{2},d+1)\},\{(a_{1},d+1),(a_{2},0)\}\big\},K)\\
& &~~\,  \{((a_{1},a_{2}),d+1)\},z \,  \big)\mid z\in A(q)\}
\end{eqnarray*}
(the set $K$ above contains a single edge that is not a loop),
and if $\{a_{1},a_{2}\}\in E_{e}^{\rm new}$ and $\{a_{1},a_{2}\}\nsubseteq V_{e}^{\rm old}$, then $\frak{T}(e_{l}) = {\cal Q}^{1}\cup {\cal Q}^{\infty}\cup{\cal Q}^{0}$ where 
\begin{eqnarray*}
{\cal Q}^{0} & =&  \{\ \big(\ (\{\{a_{1},a_{2}\}\},\{(1,1-\langle \{a_{1},a_{2}\}\subseteq V_{e}^{\rm new}\rangle)\})        \\
& &~~\,  ( \{\{a_{1,2},a_{2,1}\}\},\{(1,0)\}   )\\
& &~~\,  (\{\{(a_{1},0),(a_{2},0)\}\},\emptyset)\\
& &~~\,  \{((a_{1},a_{2}),0)\},z \,  \big)\mid z\in A(q)\}
\end{eqnarray*}
\end{enumerate}

Assume now that $e$ is a non-leaf edge of $T$ with children $e_{l}$ and $e_{r}$, the collection $\frak{T}(e)$ is given by ${\bf join}(\frak{T}(e_1),\frak{T}(e_2))$ where {\bf join} is a procedure that is depicted below.
 Notice that ${\bf A}_{e}$ is the symmetric difference of ${\bf A}_{e_{l}}$ and ${\bf A}_{e_{r}}$
and $X_{e}$  consists of  the endpoints of the arcs in ${\cal A}_{e}$. We also set $X_{e}^{F}=(X_{e_{l}}\cup X_{e_{r}})\setminus X_{e}$.

\begin{center}
\fbox{\small
\begin{minipage}{12cm}
\begin{tabbing}
Procedure {\bf join}\\
{\sl Input}: \= two collections ${\cal C}_{e_l}$  and  ${\cal C}_{e_r}$  of $(d,k,q)$-configurations of  $(X_{e_{l}},{\bf A}_{e_{l}})$ and  $(X_{e_{r}},{\bf A}_{e_{r}})$.\\
{\sl Output}: a collection ${\cal C}_r$  of $(d,k,q)$-configurations of  $(X_{e},{\bf A}_{e})$\\
(1) \= set ${\cal C}_e=\emptyset$\\
(2) \> for \= every pair $(Q_{e_l},Q_{e_r})\in {\cal C}_{e_l}\times {\cal C}_{e_r}$, if ${\bf merge}(Q_{e_l},Q_{e_r})\ne{\sf void}$,\\
 \> \>   then let ${\cal C}_e\leftarrow {\cal C}_e\cup\{{\bf merge}(Q_{e_l},Q_{e_r})\}$.\\
(3) return ${\cal C}_{e}$
\end{tabbing}
\end{minipage}
}
\end{center}

It remains to describe the routine {\bf merge}. For this, assume that  it receives as inputs the $(d,k,q)$-configurations 
$Q_{l}=(({\cal X}_{l},\chi_{l}),({\cal A}_{l},\alpha_{l}),({\cal F}_{l},{\cal E}_{l}),\delta_{l},z_{l})$ 
and  $Q_{r}=(({\cal X}_{r},\chi_{r}),({\cal A}_{r},\alpha_{r}),$
$({\cal F}_{r},{\cal E}_{r}),\delta_{r},z_{r})$ of $(X_{e_{l}},{\bf A}_{e_{l}})$ and $(X_{e_{r}},{\bf A}_{e_{r}})$ respectively. Procedure 
${\bf merge}(Q_{e_l},Q_{e_r})$ returns a $(d,k,q)$-configuration 
$(({\cal X},\chi),({\cal A},\alpha),({\cal F},{\cal E}),\delta,z)$ of $(X_{e},{\bf A}_{e})$ constructed as follows:

\begin{itemize}
\item[{\bf 1}.] If $z_{r}+z_{r}>q$, then return  {\sf void}, otherwise $z=z_{l}+z_{r}$

\noindent (This controls the number of 1-edges that are now contained in $\Delta_{e}$)

\item[{\bf 2}.]  Let $({\cal X}',\chi')=({\cal X}_{l},\chi_{l})\oplus ({\cal X}_{r},\chi_{r})$ and  if $\chi'^{-1}(2)\neq\emptyset$ then return {\sf void}.

\noindent (This  compute the ``fusion'' of the connected components of $(V(H_{e_{l}},E^{0}_{e_{l}}))$ and $(V(H_{e_{r}},E^{0}_{e_{r}}))$ with vertices in $V_{e_{l}}$ and $V_{e_{r}}$ and makes sure that none of the created components contains 2 or more $0$-vertices.) 

\item[{\bf 3}.] Let $({\cal X},\chi)=({\cal X}_{l}',\chi_{l}')|_{V_{e}}$

\noindent (This  computes the fusion $({\cal X}_{l}',\chi_{l}')$ is  restricted on the boundary $O_{e}$ of $\Delta_{e}$.)

\item[{\bf 4}.]  Let $({\cal A}',\alpha')=({\cal A}_{l},\alpha_{l})\oplus ({\cal A}_{r},\alpha_{r})$ and  if $\alpha'^{-1}(k+1)\neq\emptyset$ then return {\sf void}.

\item[{\bf 5}.] Let $({\cal A},\alpha)=({\cal A}_{l},\alpha_{l})\oplus ({\cal A}_{r},\alpha_{r})|_{{\bf A}_{e}}$.

\item[{\bf 6}.] Compute the function   $\gamma: ({\cal F}_{e_{l}}\cup {\cal F}_{e_{r}}\cup X_{e})\times ({\cal F}_{e_{l}}\cup {\cal F}_{e_{r}}\cup X_{e})\rightarrow \{0,\ldots,d+1\}$, whose description is given latter.

\item[{\bf 7}.]  Take  the disjoint union of  the graphs $({\cal F}_{l},{\cal E}_{l})$ and $({\cal F}_{r},{\cal E}_{r})$ 
 and  remove  from it every edge $\{\phi_{1},\phi_{2}\}$ for which $\gamma(\phi_{1},\phi_{2})\leq d$. Let ${\cal G}^{+}=({\cal F}^{+},{\cal E}^{+})$ be  the obtained  graph. 
 
\item[{\bf 8}.] If  for some edge $\{\phi_{1},\phi_{2}\}\in {\cal E}^+$ it holds that 
 for every $x_{1},x_{2}\in V_{e}$, $\gamma(\phi_{1},x_{1})+\gamma(\phi_{2},x_{2})>d$, then return {\sf void}. 
 
\item[{\bf 9}.] Consider  the function $\lambda: {\cal F}_{l}\cup {\cal F}_{r}\rightarrow \{1,\ldots,d\}^{X_{e}}$ such that
$\lambda(\phi)    =   \{(x,\gamma(\phi,x))\mid x\in X_{e}\}.$

\item[{\bf 10}.] For every $\phi'\in \lambda({\cal F}_{l}\cup {\cal F}_{r})$, do the following for every set ${\sf F}=\lambda^{-1}(\phi')$: identify in ${\cal G}^{+}$ all vertices in ${\sf F}$ and 
if at least one pair of them is adjacent in ${\cal G}^{+}$, then add an loop on the 
vertex created after this identification.
Let  ${\cal G}=({\cal F},{\cal E})$ be the resulting graph (notice that ${\cal F}=\lambda({\cal F}_{l}\cup {\cal F}_{r})$).

\item[{\bf 11}.]  $\delta=\{((x,x'),\gamma(x,x'))\mid x,x'\in V_{e}\}$.
\end{itemize}  

\medskip\noindent{\bf The definition of  function $\gamma$.}
We present here the definition of the function $\gamma$ used in the above description of the tables of the dynamic programming procedure.

Given a non-empty set $X$ and $q\in\{0,1\}$ we define 
\begin{eqnarray*}
{\sf ord}^{q}(X) & = & \{\pi\mid
\exists X'\subseteq X : X'\ne\emptyset \wedge\ |X'|\!\!\!\!\mod 2 = q\\
& & \ \ \ \ \ \ \ \ \ \ \ \ \ \ \ \ \ \ \ \ \ \wedge\ \pi \mbox{\ is an ordering of $X'$}\}
\end{eqnarray*}

Given $\gamma_{l}$ and $\gamma_{r}$, we define $\gamma: ({\cal F}_{e_{l}}\cup {\cal F}_{e_{r}}\cup X_{e})\times ({\cal F}_{e_{l}}\cup {\cal F}_{e_{r}}\cup X_{e})\rightarrow \{0,\ldots,d+1\}$ by distinguishing the following cases:

\begin{enumerate}
\item If  $(x\in X_{e}\setminus X_{e_r} \wedge \phi\in {\cal F}_{e_{l}})$ or $(x\in X_{e}\setminus X_{e_l} \wedge \phi\in {\cal F}_{e_{r}})$,  then
\begin{eqnarray*}
\gamma(\phi,x) & = & \min\big\{\phi(x),\min\{\phi(p_{1})+\sum_{\intv{1}{\rho-1}}\delta_{{\bf s}(i)}(p_{i},p_{i+1})+\\
& &\phantom{ \min\{} \delta_{{\bf s}(\rho)}(p_{\rho},x)\mid {[p_{1},\ldots,p_{\rho}]\in {\sf ord}^{0}(X_{e}^{F})}\}\big\},
\end{eqnarray*}
where
 ${\bf s}(i)=\mbox{\rm ``l''}$ if $\langle x\in X_{e}\setminus X_{e_{l}}\rangle=(i\!\!\! \mod 2)$, otherwise ${\bf s}(i)=\mbox{\rm ``r''}$.
\item If  $(x\in X_{e}\setminus X_{e_l} \wedge \phi\in {\cal F}_{e_{l}})$ or $(x\in X_{e}\setminus X_{e_r} \wedge \phi\in {\cal F}_{e_{r}})$,  then
\begin{eqnarray*}
\gamma(\phi,x) & = &  \min\big\{\phi(p_{1})+ \sum_{\intv{1}{\rho-1}}\delta_{{\bf t}(i)}(p_{i},p_{i+1})+\delta_{{\bf t}(\rho)}(p_{\rho},x)\\
& &\ \ \ \ \  \ \ \ \ \ \ \ \ \ \  \ \ \ \ \  \mid {[p_{1},\ldots,p_{\rho}]\in {\sf ord}^{1}(X_{e}^{F})}\}\big\},
\end{eqnarray*}
where
 ${\bf t}(i)=\mbox{\rm ``l''}$ if $\langle x\in X_{e}\setminus X_{e_{l}}\rangle\neq (i\!\!\! \mod 2)$, otherwise ${\bf t}(i)=\mbox{\rm ``r''}$.
 
\item If $x$ is one of the (at most two) vertices in $(X_{e_{r}}\cap X_{e_{r}})\setminus X_{e}^{F}$ and $\phi\in {\cal F}_{e_{l}}\cup {\cal F}_{e_{r}}$, then 
\begin{eqnarray*}
\gamma(\phi,x) & = & \min\big\{\phi(x),\\
& & \phantom{\min\{}\min\{\phi(p_{1})+ \sum_{\intv{1}{\rho-1}}\delta_{{\bf u}(i)}(p_{i},p_{i+1})+  \delta_{{{\bf u}(q)}}(p_{\rho},x)\\
& &\ \ \ \ \  \ \ \ \ \  \ \ \ \  \ \ \  \ \mid {[p_{1},\ldots,p_{\rho}]\in {\sf ord}^{q}(X_{e}^{F})}\}\mid q\in\{0,1\}\big\}
\end{eqnarray*}
where
 ${\bf u}(i)=\mbox{\rm ``r''}$ if $\langle \phi\in {\cal F}_{e_{l}}\rangle= (i\!\!\! \mod 2)$, otherwise ${\bf u}(i)=\mbox{\rm ``l''}$.

\item If $\phi,\phi'\in  {\cal F}_{l}\cup {\cal F}_{r}$, then
\begin{eqnarray*}
\gamma(\phi,\phi') & = & \min\big\{\phi(p_{1})+ \sum_{\intv{1}{\rho-1}}\delta_{{\bf u}(i)}(p_{i},p_{i+1}) + \phi'(p_{\rho})\\ & & \mid {[p_{1},\ldots,p_{\rho}]\in {\sf ord}^{q}(X_{e}^{F})}\big\}
\end{eqnarray*}
In this equality, $q=1$ if $\phi$ and $\phi'$ belong in different sets in $\{{\cal F}_{l},{\cal F}_{r}\}$, otherwise $q=0$. The function ${\bf u}$ is the same as in the previous case.

\item If $x_{1},x_{2}\in X_{e}\setminus X_{e_{r}}$ or  $x_{1},x_{2}\in X_{e}\setminus X_{e_{l}}$, then
\begin{eqnarray*}
\delta(x_{1},x_{2}) & =  & \min\big\{\delta_{{\bf y}(0,x_{1})}(x_{1},x_{2}), \min\{\delta_{{\bf y}(0,x_{1})}(x_{1},p_{1})+\\
& &    \sum_{i\in\intv{1}{\rho-1}}\delta_{{\bf y}(i,x_{1})}(p_{i},p_{i+1})+\\
& & \ \ \  \ \   \delta_{{\bf y}(0,x_{2})}(p_{\rho},x_{2})\mid  [p_{1},\ldots,p_{\rho}]\in {\sf ord}^{0}(X_{e}^{F}) \} \big\}
\end{eqnarray*}

In this equality   ${\bf y}(i,x)=\mbox{``l''}$ if $\langle x\in X_{e}\setminus X_{e_{r}}\rangle= \langle i\!\!\mod 2=0\rangle$ otherwise ${\bf y}(i,x)=\mbox{``r''}$.

\item If $x_{1},x_{2}$ belong in different sets is $\{X_{e}\setminus X_{e_{r}},X_{e}\setminus X_{e_{l}}\}$, then
\begin{eqnarray*}
\delta(x_{1},x_{2}) & = & \min\big\{\delta_{{\bf y}(0,x_{1})}(x_{1},p_{1}) +\sum_{\intv{1}{\rho-1}}\delta_{{\bf y}(i,x_{1})}(p_{i},p_{i+1})+\\
& & \phantom{\min\{}  \delta_{{\bf y}(0,x_{2})}(p_{\rho},x_{2})\mid  [p_{1},\ldots,p_{\rho}]\in {\sf ord}^{1}(X_{e}^{F}) \big\}
\end{eqnarray*}
The function ${\bf y}$ is the same as in the previous case.

\item If exactly one, say $x_{2}$, of $x_{1},x_{2}$ belongs in $X_{e_{r}}\cap X_{e_{r}})\setminus X_{e}^{F}$, then
\begin{eqnarray*}
\delta(x_{1},x_{2}) & = &  \min\bigg\{\delta_{{\bf y}(0,x_{1})}(x_{1},x_{2}),\\
& &  \   \min \big\{\min\{\delta_{{\bf y}(0,x_{1})}(x_{1},p_{1})+ \sum_{\intv{1}{\rho-1}}\delta_{{\bf y}(i,x_{1})}(p_{i},p_{i+1})+\\
& &  \    \delta_{{\bf y}(0,x_{2})}(p_{\rho},x_{2})\!\mid\!  [p_{1},\ldots,p_{\rho}]\in {\sf ord}^{q}(X_{e}^{F})\}\!\mid\!  q\in\{0,1\}\big\}\bigg\}
\end{eqnarray*}

The function ${\bf y}$ is the same as in the two previous cases.
In case $x_{1}$ belongs in  $X_{e_{r}}\cap X_{e_{r}})\setminus X_{e}^{F}$, then just swap the positions of $x_{1}$ and $x_{2}$ in the above equation.

\item If both $x_{1},x_{2}$ belong in $X_{e_{r}}\cap X_{e_{r}})\setminus X_{e}^{F}$, then
\begin{eqnarray*}
\delta(x_{1},x_{2})& = & \min\big\{\delta_{l}(x_{1},x_{2}),\delta_{r}(x_{1},x_{2}),\\
& & \phantom{\min\{}  \min \{\min\{\delta_{{\bf z}(0,j)}(x_{1},p_{1})+\\
& & \sum_{\intv{1}{\rho-1}}\delta_{{\bf z}(i,j)}(p_{i},p_{i+1})+ \delta_{{\bf z}(q,j)}(p_{\rho},x_{2})\!\mid\\
& & \ \ \ \ \  \ \ \ \    [p_{1},\ldots,p_{\rho}]\in {\sf ord}^{q}(X_{e}^{F})\}\!\mid\!  (q,j)\in\{0,1\}^2\}\big\}
\end{eqnarray*}
In the previous  equality, ${\bf z}(i,j)=\mbox{``l''}$ if $(i+j\!\!\! \mod 2) =0$, otehrwise ${\bf z}(i,x)=\mbox{``r''}$.

\end{enumerate}

\medskip\noindent{\bf Running time analysis.}
It now remains to prove that procedure {\bf join} runs in $(\alpha(q))^{2}\cdot 2^{O(k^2)+2^{O(b\log d)}}$ steps.
Recall that there exists a function $f$ such that
$|\mathfrak{T}(e)|\leq f(k,q,b,d)$. Therefore {\bf merge} will be called in Step (2) at most $(f(k,q,b,d))^2$ times. The first computationally non-trivial step of {\bf merge}  is Step 5, where 
function $\gamma$ is computed. Notice that $\gamma$ has at most $((d+1)^{|X_{e_{l}}|}+(d+1)^{|X_{e_{r}}|}+|X_{e}|)^{2}=2^{O(b\cdot \log d)}$ entries
and each of their values  require running over all permutations of the 
subsets of $X_{e}^{F}$ that are at most $b!=2^{O(b\cdot \log b)}$. These  
facts  imply that the computation of $\gamma$ takes $2^{O(b\cdot \log b)}$ steps. As Steps
6–10 deal with graphs of $2^{O(b\cdot \log d)}$ vertices, the running time of  {\bf join} is the claimed one.
\end{proof}

 We are now in position to prove the main algorithmic result of this paper.

\begin{proof}[Proof of Theorem~\ref{one}]
 Given an input $I=(G,q,k,d)$ of {\sc BBFPDC}, we 
 consider the 
 graph $H=G^{(\max\{3,k\})}$ whose construction takes $O(k^2n)$
 steps, because of Lemma~\ref{lowke}.
 Then run  the algorithm of Proposition~\ref{makesphere} with $(H,w)$ as 
 input, where $w=c_{1}\cdot c_{2}\cdot k\cdot d$.
 If the answer is that $\bw(H)>w$, then, From Proposition~\ref{lowke},
 $\tw(G)>c_{1}\cdot d$, therefore, from Lemma~\ref{btw}, we can safely report that 
 $I$ is a {\sc no}-instance.
 If the  algorithm of Proposition~\ref{makesphere} outputs a sphere-cut decomposition 
 $D=(T,\mu)$ of width at most $w=O(k\cdot d)$ then 
 we call the dynamic programming algorithm of Lemma~\ref{algdp}, with 
 input $(G,H,q,k,d,D,b)$. This, from Lemma~\ref{equiv}, provides an answer to {\sc BBFPDC}
 for the instance $I$ in $(\alpha(q))^{2}\cdot 2^{O((kd)^2\log d)+2^{O((kd)\log d)}} \cdot n=(\alpha(q))^{2}\cdot 2^{2^{O((kd)\log d)}} \cdot n$ steps and this completes the proof of the theorem.
\end{proof}

 \section{{\sf NP}-hardness proofs}
 \label{npouiwe3d}

In this section we show that the \textsc{Bounded Budget Plane Diameter Completion} and {\sc Bounded Budget/Face  Plane Diameter Completion} problems are {\sf NP}-complete. 

Here we consider $\Bbb{R}^{2}$-plane graphs, i.e., graphs embedded in the plane $\Bbb{R}^{2}$. 
Each $\Bbb{R}^{2}$-plane graph has exactly one unbounded face, called the \emph{outer} face, and all 
other faces are called \emph{inner faces}. 
Take in mind that every $\Bbb{S}_0$-plane graph has as many embeddings in $\Bbb{R}^{2}$ as the number of its faces (each correspond on which face of the embedding in $\Bbb{S}_{0}$ will be chosen to be the outer face in $\Bbb{R}^{2}$). All our problems can be equivalently restated on $\Bbb{R}^{2}$-plane graphs. We choose such embeddings because they facilitate the presentation of the result of this section.

We also need some additional terminology.
A  \emph{walk} in a graph $G$  of is a sequence $P=v_0,e_1,v_1,e_2,\ldots,e_s,v_s$ of vertices and edges of $G$ such that $v_0,\ldots,v_s\in V(G)$, $e_1,\ldots,e_s\in E(G)$,
the edges $e_1,\ldots,e_s$ are pairwise distinct, and 
for $i\in\{1,\ldots,s\}$, $e_i=\{v_{i-1},v_i\}$; $v_0,v_s$ are the \emph{end-vertices} of the walk. A walk is \emph{closed} if its end-vertices are the same.
The \emph{length} of a walk $P$ is the number of edges in $P$. 
For a walk $P$ with end-vertices $u,v$, we say that $P$ is a $(u,v)$-walk.
A walk is a \emph{path} if $v_0,\ldots,v_s$ and $e_1,\ldots,e_s$ are pairwise distinct with possible exception $v_0=v_s$, and a \emph{cycle} is a closed path.
We write $P=v_0\ldots v_s$ to denote a walk $P=v_0,e_1,\ldots,e_s,v_s$ omitting edges.

Recall that the \textsc{$3$-Satisfiability} problem for a given Boolean formula $\phi=C_1\wedge\ldots\wedge C_m$ with clauses $C_1,\ldots,C_m$ with 3 literals each over variables $x_1,\ldots,x_n$, asks whether $x_1,\ldots,x_n$ have an assignment that satisfies $\phi$. 
We write that a literal $x_i\in C_j$ ($\overline{x}_i\in C_j$ resp.) if this interval is in $C_j$.
For an instance $\phi$ of  \textsc{$3$-Satisfiability}, we define the graphs $G_\phi$ and $G_\phi'$ as follows. The vertex set of $G_\phi$ is $\{x_1,\ldots,x_n\}\cup\{C_1,\ldots,C_m\}$, and for $i\in\{1,\ldots,n\}$ and $j\in \{1,\ldots,m\}$, $\{x_i,C_j\}\in E(G_\phi)$ if and only if $C_j$ contains either $x_i$ or $\overline{x}_i$.  Respectively, $V(G_\phi')=\{x_1,\overline{x}_1,\ldots,x_n,\overline{x}_n\}\cup\{C_1,\ldots,C_m\}$ and $E(G_\phi')=\{\{x_i,\overline{x}_i\}|1\leq i\leq n\}\cup\{\{x_i,C_j\}|x_i\in C_j, 1\leq i\leq n,1\leq j\leq m\}\cup
\{\{\overline{x}_i,C_j\}|\overline{x}_i\in C_j, 1\leq i\leq n,1\leq j\leq m\}$.

Let $\phi$ over variables $x_1,\ldots,x_n$  be an instance of \textsc{$3$-Satisfiability} such that $G_\phi'$ is planar, and let $G'$ be a plane embedding of $G_\phi'$. 
Let also $R_\phi=\{\{x_i,\overline{x}_i\}|1\leq i\leq n\}\subseteq E(G')$. We define the bipartite graph $H(G')$ as the graph with the vertex set $R_\phi\cup F(G')$ and the edge set
$\{\{e,f\}|e\in R_\phi,f\in F(G')\text{ such that }e\text{ is incident to }f\}$.  

We consider the following special variant of \textsc{Satisfiability}.

\begin{center}
\fbox{\small\begin{minipage}{12.7cm}
\noindent{\sc Plane Satisfiability with Connectivity of Variables}\\
{\sl Input}: A Boolean formula $\phi=C_1\wedge\ldots\wedge C_m$ with clauses  $C_1,\ldots,C_m$ with at most 3 literals each over  variables
$x_1,\ldots,x_n$ such that $G'_\phi$ is planar, and a plane embedding 
 $G'$ of $G'_\phi$ such that $H(G')$ is connected.\\
{\sl Output}:  Is it possible to satisfy $\phi$?
\end{minipage}}
\end{center}

 We show that this problem is hard.

\begin{lemma}\label{lem:sat}
\textsc{Plane Satisfiability with Connectivity of Variables} is {\sf NP}-complete.
\end{lemma}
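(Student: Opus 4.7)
The plan is to prove {\sf NP}-hardness by a polynomial-time reduction from \textsc{Planar $3$-Satisfiability}, shown {\sf NP}-hard by Lichtenstein~\cite{Lichtenstein82}; membership in {\sf NP} is immediate since any satisfying assignment is a polynomial certificate.

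Given an instance $\phi$ of Planar 3-SAT with a planar embedding of $G_\phi$, the reduction first performs the standard literal-split: each variable vertex $x_i$ of $G_\phi$ is replaced by two vertices $x_i$ and $\bar{x}_i$ connected by a literal edge, and the clause-edges of $x_i$ are distributed between $x_i$ and $\bar{x}_i$ according to polarity, respecting the cyclic order around $x_i$ in the given embedding. This produces a planar embedding $G^\circ$ of $G'_\phi$ such that $\phi$ is satisfiable iff the resulting instance is.

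The second stage of the reduction augments $\phi$ with polynomially many auxiliary variables and tautological clauses whose sole purpose is to make $H(G^\circ)$ connected. The strategy is iterative: while $H$ has more than one connected component, one can always locate two faces $f_1, f_2$ of the current embedding that lie in different components of $H$ and are adjacent in the dual of $G^\circ$---this follows because $G^\circ$ is connected, hence its dual is connected, so along any dual path between a face from one component and a face from another some dual edge must cross between the two components. One then inserts a constant-size gadget consisting of a new variable $w$ together with an always-true clause such as $(w \vee \bar{w})$, realized in the plane so that the new literal edge $\{w, \bar{w}\}$ is incident to both $f_1$ and $f_2$; this merges their components of $H$. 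Each insertion strictly decreases the number of components of $H$, and the initial number is $O(n+m)$, so only polynomially many insertions are needed. Because every inserted clause is a tautology, satisfiability is preserved throughout.

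The main obstacle will be constructing the bridging gadget so that the new literal edge is genuinely incident to both target faces while the embedding remains a valid planar drawing of the graph $G'_{\phi^+}$ of some extended formula $\phi^+$. Only literal--negation edges and literal--clause edges are available, so one cannot simply subdivide an existing edge of $G^\circ$; instead, the gadget must be anchored at a vertex lying on the shared boundary of $f_1$ and $f_2$ and rely on the rotation around that vertex to place the two endpoints of the new literal edge on opposite sides of the boundary. Verifying that such a planar realization always exists, and that the bipartite incidence graph $H$ is updated exactly by the addition of one new literal-edge vertex adjacent to the two merged face vertices, is the main technical content; once established, {\sf NP}-hardness follows by induction on the number of inserted gadgets together with the correctness of the literal split in the first stage.
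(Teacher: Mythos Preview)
Your first stage has a genuine gap: the ``literal split''---replacing each variable vertex $x_i$ of $G_\phi$ by an edge $\{x_i,\bar x_i\}$ and distributing the incident clause-edges by polarity---does \emph{not} preserve planarity in general. If the positive and negative occurrences of $x_i$ alternate in the cyclic order around $x_i$ in the given embedding of $G_\phi$, the split creates a $K_{3,3}$ subdivision. Concretely, if $x_i$ sees $C_1,C_2,C_3,C_4$ in this cyclic order with polarities $+,-,+,-$, and the $C_j$ are linked in that cyclic order by paths through other variables (which a planar embedding of $G_\phi$ certainly allows), then after the split the bipartition $\{C_1,C_3,\bar x_i\}$ versus $\{C_2,C_4,x_i\}$ realises $K_{3,3}$. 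So the output of your first stage need not be an instance of the target problem at all, and the rest of the plan never gets off the ground.

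This is precisely the difficulty the paper's construction is designed to absorb. Instead of a single split, it replaces each variable $x_i$ by $2p(i)$ fresh variables $x_{i,1},\dots,x_{i,2p(i)}$ chained by 2-literal implication clauses $\bar x_{i,k-1}\vee x_{i,k}$ into a cycle, and reroutes the $k$-th clause containing $x_i$ to the copy $x_{i,2k-1}$. This wheel gadget is planar regardless of the polarity pattern, and---crucially---every literal edge $\{x_{i,k},\bar x_{i,k}\}$ lies on the wheel and is therefore incident to the wheel's inner face $f_i$; connectivity of $H(\hat G')$ then follows in one stroke from connectivity of the dual, with no iterative bridging needed. Your second stage is in any case incomplete as written: an isolated triangle on $w,\bar w,(w\vee\bar w)$ shares no vertex with $G^\circ$, so its literal edge cannot be made incident to two prescribed faces of $G^\circ$; you flag this as the main obstacle but do not resolve it, and grafting $w$ onto an existing clause would in general push that clause beyond three literals.
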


\begin{proof}
It is straightforward to see that \textsc{Plane Satisfiability with Connectivity of Variables} is in {\sf NP}. To show {\sf NP}-hardness, we reduce \textsc{Planar $3$-Satisfiability}, i.e. the \textsc{$3$-Satisfiability} problem restricted to instances $\phi$ such that $G_\phi$ is planar. This problem was shown to be {\sf NP}-complete by Lichtenstein in~\cite{Lichtenstein82}.

Let  $\phi=C_1\wedge\ldots\wedge C_m$ over variables $x_1,\ldots,x_n$ be an instance of \textsc{Planar $3$-Satisfiability}. For the plane graph $G_\phi$, we construct its plane embedding $G$.
It is well known that it can be done in polynomial time, e.g.,  by the classical algorithm of Hopcroft and Tarjan~\cite{HopcroftT74} or by the algorithm of Boyer and Myrvold~\cite{BoyerM04}.
We consequently consider variables $x_1,\ldots,x_n$ and 
modify $\phi$ and $G$.

Suppose that a variable $x_i$ occurs in the clauses $C_{j_1},\ldots,C_{j_{p(i)}}$. Without loss of generality we assume that the edges $\{x_i,C_{j_1}\},\ldots, \{x_i,C_{j_{p(i)}}\}$ are ordered clockwise in $G$ as shown in Fig.~\ref{fig:modif} a). We perform the following modifications of $\phi$ and $G$. 
\begin{itemize} 
\item Replace $x_i$ by $2p(i)$ new variables $x_{i,1},\ldots,x_{i,2p(i)}$.
\item For $k\in\{1,\ldots,p(i)\}$, replace $x_i$ in $C_{j_k}$ by $x_{i,2k-1}$.
\item Construct $2p(i)$ clauses $C_i^1,\ldots,C_i^{2p(i)}$ where $C_i^k=\overline{x}_{i,k-1}\vee x_{i,k}$ for $k\in\{1,\ldots2p(i)\}$; we assume that $x_{i,0}=x_{i,2p(i)}$.
\item Modify the current plane graph as it is shown in Fig.~\ref{fig:modif}.\footnote{Here and further we demonstrate constructions of plane embeddings in figures instead of long technical formal descriptions.}
\end{itemize}
Denote the obtained Boolean formula and plane graph by $\hat{\phi}$ and $\hat{G}$ respectively. By the construction, $\hat{G}$ is a plane embedding of $G_{\hat{\phi}}$.

\begin{figure}[ht]
\centering\scalebox{0.65}{\input{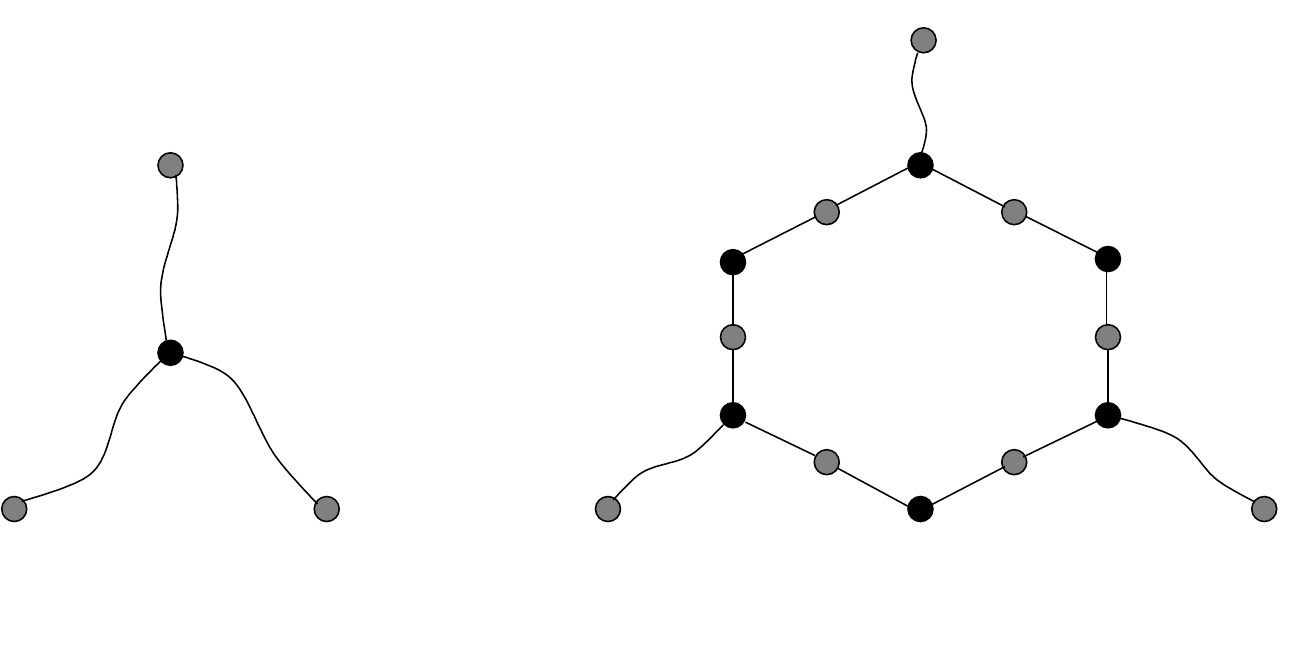_t}}
\caption{Modification of $\phi$ and $G$: a) before the modification and b) after; $p=p(i)$.
\label{fig:modif}}
\end{figure}

We show that $\phi$ can be satisfied if and only if $\hat{\phi}$ has a satisfying assignment. Suppose that the variables have assigned values such that $\phi=true$. For each $i\in\{1,\ldots,n\}$, we assign the same value as $x_i$ for all the variables $x_{i,1},\ldots,x_{i,2p(i)}$ that replace $x_i$ in $\hat{\phi}$. It is straightforward to verify that $\hat{\phi}=true$ for this assignment. Assume now that $\hat{\phi}=true$ for some values of the variables. Observe that for each $i\in\{1,\ldots,n\}$, the variables $x_{i,1},\ldots,x_{i,2p(i)}$ that replace $x_i$ should have the same value to satisfy 
$C_i^1,\ldots,C_i^{2p(i)}$. It remains to observe that if each $x_i$ has the same value as $x_{i,1},\ldots,x_{i,2p(i)}$, then $\phi=true$ by the construction of $\hat{\phi}$.

\begin{figure}[ht]
\centering\scalebox{0.65}{\input{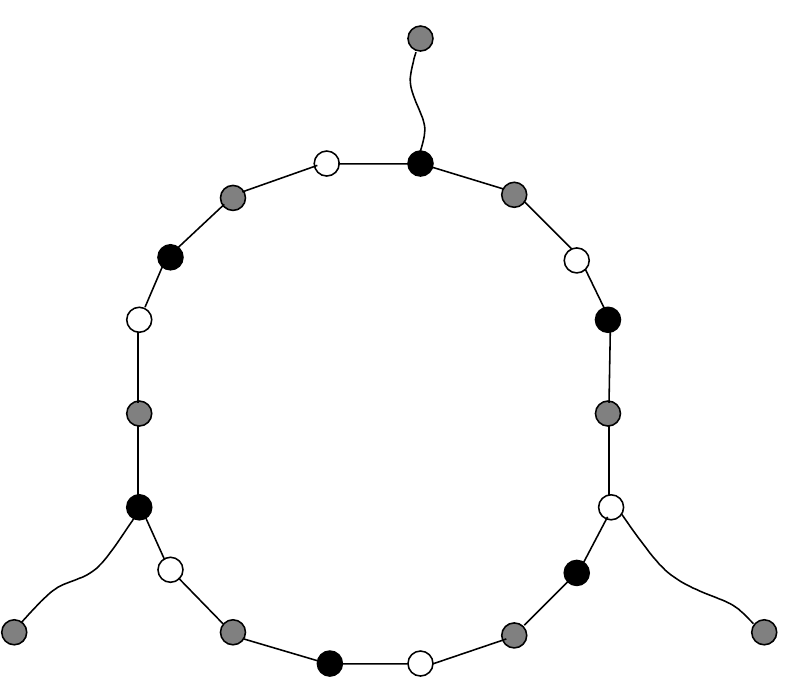_t}}
\caption{Construction of $\hat{G}'$; it is assumed that $C_{j_1}$ contains $\overline{x}_i$, $C_{j_2}$ contains $x_i$ and $C_{j_{p(i)}}$ contains $\overline{x}_i$, and $p=p(i)$.
\label{fig:modif-hat}}
\end{figure}

Observe that each variable $x_{i,k}$ in $\hat{\phi}$ occurs in at most 3 clauses, and it occurs at least once in positive and at least once with negation. It implies that a plane embedding $\hat{G}'$ of $G_{\hat{\phi}}'$ can be constructed from $\hat{G}$ by ``splitting'' the variable vertices as shown in Fig.~\ref{fig:modif-hat}.  Clearly,   $\hat{G}'$ can be constructed in polynomial time.

We claim that $H(\hat{G}')$ is connected. To see it, observe that $\hat{G}'$ is constructed from $G$ by replacing each variable-vertex $x_i$ by the cycle  $L_i=C_i^1x_{i,1}\overline{x}_{i,1}C_i^2\ldots C_i^1$ (see Fig.~\ref{fig:modif-hat}). Respectively, this graph has $n$ new faces that are inner faces of these cycles. All other faces correspond to the faces of $G$. Denote by $f_i$ the inner face of $L_i$ for $i\in\{1,\ldots,n\}$. 
 Notice that $R_{\hat{\phi}}$ contains edges from the cycles $L_i$. It follows that each vertex of $R_{\hat{\phi}}$ is adjacent to some vertex $f_i$ in  $H(\hat{G}')$. Hence, to prove the connectivity of  $H(\hat{G}')$, it is sufficient to show that for any two vertices $h_1,h_2\in F(\hat{G}' )$,  $H(\hat{G}')$ has a $(h_1,h_2)$-walk.

Consider the dual $G^*$ of $\hat{G}'$. Recall that $V(G^*)=F(\hat{G}')$ and two vertices of $G^*$ are adjacent if and only if the corresponding faces of $\hat{G}'$ are adjacent. It is straightforward to observe that the dual of any plane graph is always connected. Hence, to show that for any two vertices $h_1,h_2\in F(\hat{G}' )$ of $H(\hat{G}')$,  $H(\hat{G}')$ has a $(h_1,h_2)$-walk, it is sufficient to prove that it holds for any two $h_1,h_2$ that are adjacent vertices of $G^*$, i.e., adjacent faces of $\hat{G}'$. 
Suppose that $h_1=f_i$ for some $i\in\{1,\ldots,n\}$. Then $h_2$ is a face corresponding to a face $h_2'$ of $G$ such that the vertex $x_i$ lies on the boundary of $h_2'$. Then by the construction of  $\hat{G}'$, there is an edge $e=\{x_{i,j},\overline{x}_{i,j}\}$  of $\hat{G}'$ that lies on the boundaries of $h_1$ and $h_2$. Because $e$ is a vertex of  $H(\hat{G}')$ adjacent to $h_1,h_2$,  there is a $(h_1,h_2)$-walk in $H(\hat{G}')$.
Assume now that $h_1,h_2$ are faces of $\hat{G}'$ distinct from $f_i$ for $i\in\{1,\ldots,n\}$. Because $h_1,h_2$ are adjacent in $G^*$, the faces $h_1,h_2$ correspond to faces $h_1',h_2'$ of $G$ such that $h_1',h_2'$ has a common vertex $x_i$ on their boundaries. It implies that $h_1,h_2$ are adjacent to $f_i$ in $G^*$. We already proved that $H(\hat{G}')$ has $(f_i,h_1)$ and $(f_i,h_2)$-walks. Therefore, $H(\hat{G}')$ has an $(h_1,h_2)$-walk. 

It completes the proof of connectedness of  $H(\hat{G}')$ and the proof of the lemma.
\end{proof}

For the proof of our main result, we need some special gadgets. We introduce them and prove their properties that will be useful further.  

Let $r\geq 3$ be a positive integer. We construct the graph $W_r(v_1,\ldots,v_r)$ as follows (see Fig.~\ref{fig:web}). 
\begin{itemize}
\item Construct vertices $v_1,\ldots,v_r$ and a vertex $u$.
\item For $i\in\{1,\ldots,r\}$, construct a $(v_i,u)$ path $x_0^i\ldots x_r^i$  of length $r$, $v_i=x_0^i$, $u=x_r^i$.
\item For $j\in\{1,\ldots,r-1\}$, construct a cycle $x_j^1\ldots x_j^rx_j^1$.
\item For $i\in\{1,\ldots,r\}$ and $j\in\{1,\ldots,r-1\}$, construct an edge $\{x_{j-1}^{i-1},x_j^i\}$; we assume that $x_j^0=x_j^r$ for $j\in \{0,\ldots,r\}$.
\end{itemize}
We say that the vertices of $V(W_r(v_1,\ldots,v_r))\setminus\{v_1,\ldots,v_r\}$ are the \emph{inner} vertices of the gadget.

\begin{figure}[ht]
\centering\scalebox{0.65}{\input{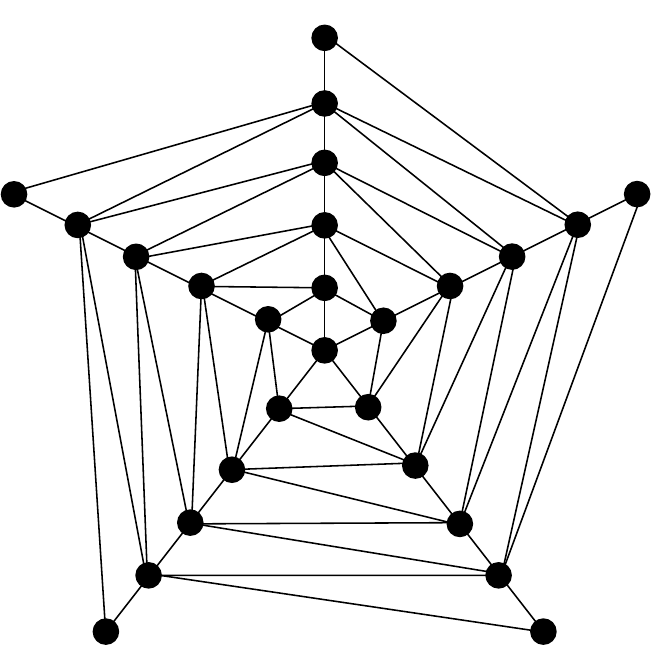_t}}
\caption{Construction of $W_5(v_1,\ldots,v_5)$.
\label{fig:web}}
\end{figure}

Let $G$ be a plane graph with a face $f$, and let $v_1\ldots v_rv_1$, $r\geq 3$, be a facial walk for $f$. We say that $G'$ is obtained from $G$ by \emph{attaching a web} to $f$ if $G'$ is constructed 
 by adding a copy of  $W_r(v_1,\ldots,v_r)$, where the vertices $v_1,\ldots,v_r$ of the gadget are identified with the vertices with the same names in the facial walk, and embedding  $W_r(v_1,\ldots,v_r)$ if $f$ as is shown in Fig.~\ref{fig:web-att}. Notice that some vertices in the facial walk can occur several times.   

\begin{figure}[ht]
\centering\scalebox{0.65}{\input{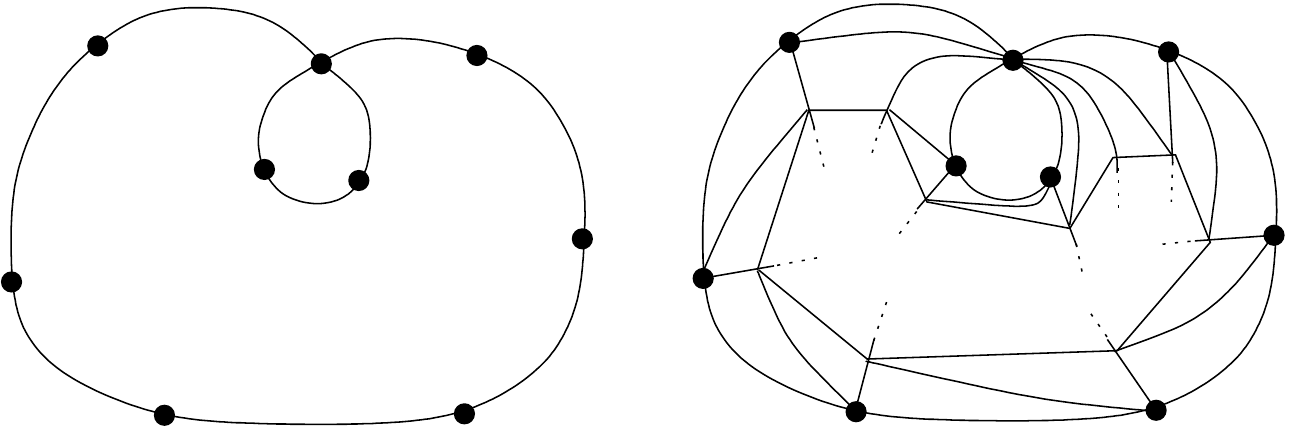_t}}
\caption{Attachment of a web.
\label{fig:web-att}}
\end{figure}

\begin{lemma}\label{lem:web}
Let $G$ be a plane graph with a face $f$ that has a facial walk of length $r\geq 3$, and let $G'$ be a plane graph obtained from $G$ by attaching a web to $f$. 
\begin{itemize}
\item[i)] For any two vertices $u,v\in V(G)$, $\dist_{G'}(u,v)=\dist_G(u,v)$. Moreover, any shortest $(u,v)$-path in $G'$ has no inner vertices of $W_r(v_1,\ldots,v_r)$ attached to $f$. 
\item[ii)] For any vertex $v\in V(W_r(v_1,\ldots,v_r))$, there is a vertex $u\in V(G)$ such that $\dist_{G'}(u,v)\leq r$.
\end{itemize}
\end{lemma}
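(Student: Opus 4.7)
My plan is to reduce both parts of the statement to a single key inequality inside the web $W_r(v_1,\ldots,v_r)$: for distinct $i,j\in\{1,\ldots,r\}$,
\[
\dist_{W_r}(v_i,v_j)\;\geq\;d_{\mathrm{cyc}}(i,j)+1,
\]
where $d_{\mathrm{cyc}}(i,j)=\min\{|i-j|,\,r-|i-j|\}$. To prove this, I would assign to every non-central vertex $x_j^i$ the angular index $i\in\mathbb{Z}/r\mathbb{Z}$. Each edge of $W_r$ whose endpoints both differ from the centre $u$ changes this index by at most one in absolute value: vertical edges by $0$, horizontal cycle edges by $\pm 1$, and diagonal edges by $\pm 1$ coupled in sign with their $\pm 1$ layer change. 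Let $P$ be a walk from $v_i$ to $v_j$ of length $L$ and set $k=d_{\mathrm{cyc}}(i,j)$. If $P$ visits $u$, then since $u$ lies at layer $r$ and $v_i,v_j$ at layer $0$, we already get $L\geq 2r\geq k+1$. Otherwise the cumulative angular change of $P$ equals $\pm k\pmod r$, so at least $k$ edges contribute $\pm 1$ to it and $L\geq k$. If equality $L=k$ held, every edge of $P$ would have to contribute $+1$ (or all $-1$) to the angular count, ruling out vertical edges entirely; but then every diagonal edge of $P$ would be traversed upward and contribute $+1$ to the layer, so the constraint of zero net layer change would force the number of diagonals to vanish, making every edge of $P$ horizontal. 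Horizontal edges exist only inside the cycles at layers $1,\ldots,r-1$, contradicting the fact that $P$ starts at $v_i$ in layer $0$. Hence $L\geq k+1$.

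For part (i), the upper bound $\dist_{G'}(u,v)\leq \dist_{G}(u,v)$ is immediate from $G\subseteq G'$. For the lower bound and the ``no inner vertex'' assertion, I would argue by contradiction: take a shortest $(u,v)$-path $P$ in $G'$ and suppose it visits some inner vertex of $W_r$; let $Q$ be a maximal subwalk of $P$ consisting of edges of $W_r$, whose endpoints $v_i,v_j$ necessarily lie in $\{v_1,\ldots,v_r\}$. If $v_i=v_j$ as vertices of $G$ (which is possible when the facial walk repeats a vertex), deleting $Q$ from $P$ yields a strictly shorter walk, contradicting the minimality of $P$. Otherwise the key inequality gives $|Q|\geq d_{\mathrm{cyc}}(i,j)+1$, while walking along the facial walk of $f$ in the shorter direction provides a $(v_i,v_j)$-walk in $G$ of length at most $d_{\mathrm{cyc}}(i,j)$; replacing $Q$ by this walk strictly shortens $P$, contradiction. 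Hence $P$ uses no inner vertex of $W_r$, so $P$ is a walk in $G$ and $\dist_{G}(u,v)\leq |P|=\dist_{G'}(u,v)$.

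For part (ii), I would exhibit a vertical path in $W_r$ of length at most $r$ from each vertex of the web to a vertex of $G$. For $v=v_i$ the distance is $0$; for $v=x_j^i$ with $1\leq j\leq r-1$ the path $x_0^i,x_1^i,\ldots,x_j^i$ in $W_r$ has length $j\leq r-1$ and connects $v_i\in V(G)$ to $v$; and for $v=u$ the path $x_0^1,x_1^1,\ldots,x_r^1$ has length exactly $r$. Since all these paths lie in $G'$, the desired inequality $\dist_{G'}(v_i,v)\leq r$ follows.

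The main obstacle is the case analysis ruling out a walk of length exactly $k$ in the key inequality: it requires simultaneously exploiting the sign coupling between layer and angular changes along diagonal edges and the absence of horizontal edges at layer $0$. Once that step is secured, both items (i) and (ii) follow by routine rerouting and explicit path constructions, respectively.
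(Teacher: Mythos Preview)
Your proof is correct and follows exactly the approach of the paper: reduce part~(i) to the observation that every $(v_i,v_j)$-path in $W_r$ is strictly longer than the shorter arc of the facial walk, then reroute; and read off part~(ii) from the explicit radial paths. The paper merely \emph{asserts} the key inequality $\dist_{W_r}(v_i,v_j)>d_{\mathrm{cyc}}(i,j)$ as ``sufficient to observe'', whereas you actually prove it via the layer/angular potential argument with the sign-coupling on diagonals --- so your write-up is strictly more detailed than the paper's, but not a different route.
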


\begin{proof}
Let $v_1\ldots v_rv_0$ be a facial walk for $f$. To prove i), it is sufficient to observe that for all $v_i,v_j$, the length of any $(v_i,v_j)$-path in  $W_r(v_1,\ldots,v_r)$ is greater that the length of a shortest $(v_i,v_j)$-path in $G$ that lies on the boundary of $f$. The definition of  $W_r(v_1,\ldots,v_r)$ immediately implies ii).
\end{proof}

Let $h$ be a positive integer.
The graph $M_h(u_1,u_2,u_3)$ is defined as follows (see Fig.~\ref{fig:mast}). 
\begin{itemize}
\item Construct vertices $u_1,u_2,u_3$ and $v_1,v_2,v_3$.
\item For $i\in\{1,2,3\}$, construct a $(u_i,v_i)$ path $x_0^i\ldots x_r^i$  of length $\ell$, $u_i=x_0^i$, $v_i=x_h^i$.
\item For $j\in\{1,\ldots,h\}$, construct a cycle $x_j^1\ldots x_j^rx_j^1$.
\item For $j\in\{1,\ldots,h\}$, construct  edges $\{x_{j-1}^1,x_j^2\}$, $\{x_{j-1}^1,x_j^3\}$ and $\{x_{j-1}^2,x_j^3\}$.
\end{itemize}
We say that the vertices of $V(M_h(u_1,u_2,u_3))\setminus\{u_1,u_2,u_3\}$ are the \emph{inner} vertices of the gadget.
We also say that $u_1$ is the \emph{root} and $v_1$ is the \emph{pole} of $M_\ell(u_1,u_2,u_3)$.

\begin{figure}[ht]
\centering\scalebox{0.65}{\input{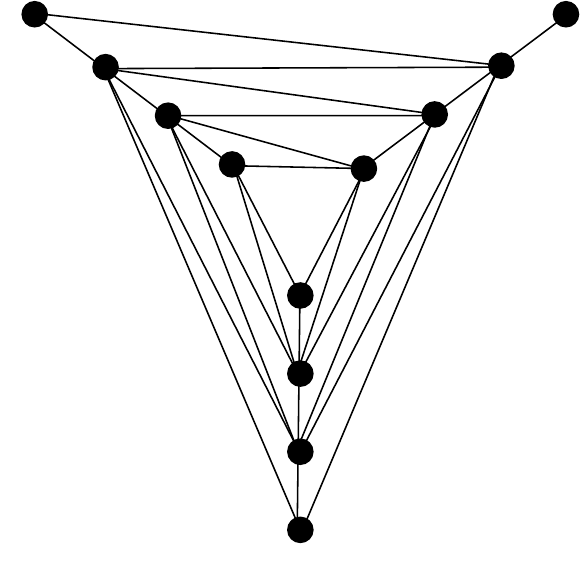_t}}
\caption{Construction of $M_3(u_1,u_2,u_3)$.
\label{fig:mast}}
\end{figure}

Let $G$ be a plane graph, and let $u_1\in V(G)$ be a vertex incident to a face $f$ with a triangle facial walk $u_1u_2u_3u_1$. 
Let also $\ell$ be a positive integer.
We say that $G'$ is obtained from $G$ by \emph{attaching a mast of height $h$} rooted in $u_1$ to $f$ if $G'$ is constructed
 by adding a copy of  $M_h(u_1,u_2,u_3)$, where the vertices $u_1,u_2,u_3$ of the gadget are identified with the vertices with the same names in the facial walk, and embedding  $M_h(u_1,u_2,u_3)$ in $f$. We need the properties summarized in the following straightforward lemma.

\begin{lemma}\label{lem:mast}
Let $\ell$ be a positive integer. Let $G$ be a plane graph, and let $u_1$ be a vertex of $G$ incident to a face $f$ with a triangle facial walk $u_1u_2u_3u_1$.
Let also  $G'$ be a plane graph obtained from $G$ by attaching a mast of height $h$ rooted in $u_1$ to $f$. 
\begin{itemize}
\item[i)] For any two vertices $u,v\in V(G)$, $\dist_{G'}(u,v)=\dist_G(u,v)$. Moreover, any shortest $(u,v)$-path in $G'$ has no inner vertices of $M_h(u_1,u_2,u_2)$ attached to $f$. 
\item[ii)] For any vertex $v\in V(M_h(u_1,u_2,u_2))$,  $\dist_{G'}(u_1,v)\leq h$.
\item[iii)] If $v$ is the pole of $M_h(u_1,u_2,u_2)$, then $\dist_{G'}(u_1,v)=h$ and $\dist_{G'}(u_2,v)> h, \dist_{G'}(u_3,v)> h$.
\item[iv)] For any inner vertices $x,y$ of $M_h(u_1,u_2,u_2)$, $\dist_{G'}(x,y)\leq h$. 
\end{itemize}
\end{lemma}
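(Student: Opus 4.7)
The plan is to exploit the following simple structural feature of $M_h(u_1,u_2,u_3)$: every edge of the mast either stays at the same ``level'' (the three cycle edges $x_j^1x_j^2$, $x_j^2x_j^3$, $x_j^3x_j^1$) or increases the level by exactly one (the column edges $x_{j-1}^ix_j^i$ and the extra edges $x_{j-1}^1x_j^2$, $x_{j-1}^1x_j^3$, $x_{j-1}^2x_j^3$). Thus any walk between vertices at levels $j$ and $j'$ has length at least $|j'-j|$, and equality forces the walk to use only level-up edges. I would establish this ``level'' lemma first, then derive the four items.

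For (i), I would observe that the mast shares with $G$ exactly the three attachment vertices $u_1,u_2,u_3$, which lie on a common triangle face and are therefore pairwise adjacent in $G$. Any $(u,v)$-path in $G'$ that uses an inner vertex of the mast must enter and leave the mast through this three-vertex boundary, so it decomposes into a prefix in $G$, a nontrivial sub-walk through the mast between two of $u_1,u_2,u_3$, and a suffix in $G$. Using the level lemma applied to the level-$0$ vertices $u_1,u_2,u_3$ shows that any $u_i$-to-$u_{i'}$ walk inside the mast has length at least $2$, while the direct triangle edge in $G$ contributes length $1$. Hence replacing the detour by the direct edge yields a path entirely in $G$ that is no longer, proving both the distance equality and the absence of inner vertices on shortest paths.

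For (ii), the crucial observation is that $u_1=x_0^1$ is adjacent, via the extra edges $\{x_0^1,x_1^2\}$ and $\{x_0^1,x_1^3\}$, to \emph{every} level-$1$ column. Combined with the column paths $x_j^i\to x_{j+1}^i$, a greedy walk ``one step to the target column at level $1$, then straight up column $i$'' reaches $x_j^i$ in exactly $j\leq h$ steps, and $u_1$ itself has distance $0$. For (iii), combine (i) and the level lemma: $v_1=x_h^1$ has level $h$, so $\dist_{G'}(u_1,v_1)\geq h$, and the column-$1$ path attains $h$, giving equality. For $u_2$ (and symmetrically $u_3$), any mast walk to $v_1$ reaches level $h$ using $\geq h$ level-up edges, but since no level-up edge out of a column-$2$ or column-$3$ vertex ends in column $1$, at least one cycle edge must also be used, yielding length $\geq h+1$. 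Any walk that leaves the mast into $G$ must re-enter through $u_1$ or $u_3$; in either case one pays at least one edge in $G$ to move between $u_2$ and another attachment vertex plus at least $h$ more edges in the mast, so the total is again $\geq h+1$.

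For (iv), given inner vertices $x_j^i$ and $x_{j'}^{i'}$ with $j\le j'$, I would exhibit an explicit walk of length at most $h$. If $i=1$ use an extra edge to go $x_j^1\to x_{j+1}^{i'}$ and then climb column $i'$, obtaining length $j'-j\leq h-1$. If $i\in\{2,3\}$ and the target column is reachable from column $i$ by a level-up edge, the same argument gives length $j'-j$; otherwise prepend one cycle edge at level $j$ to switch columns, for length $j'-j+1\leq h$ (since $j\geq 1$ and $j'\leq h$). The main obstacle I anticipate is (iii), because there one must rule out \emph{all} detours through $G$, not just through the mast; the cleanest way is to note that only $u_1,u_2,u_3$ bridge the two sides and to compare the mast-side cost (governed by the level lemma and the asymmetric direction of the extra edges) against the $G$-side triangle edge of length $1$, as above.
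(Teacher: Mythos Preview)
The paper does not prove this lemma; it is introduced only as ``the properties summarized in the following straightforward lemma.'' Your level-based argument is the natural one and is essentially correct. Two small points to tighten. In (iii), your claim that a mast walk from $u_2$ to $v_1$ with exactly $h$ level-up edges must use ``at least one cycle edge'' is not literally forced: one can also reach column~1 from columns~2 or~3 via a level-\emph{down} edge (e.g.\ $x_j^2\to x_{j-1}^1$). However, any down step forces at least two additional edges (one down and one compensating up), so the length is then $\geq h+2$, and your conclusion $\dist>h$ still stands. In (iv), your explicit walk assumes $j<j'$; when $j=j'$ the two inner vertices lie on the same 3-cycle at level $j\geq 1$ and are at distance at most $1\leq h$, so the claim is immediate there.
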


Now we are ready to prove the main result of the section.

\begin{proof}[Proof of Theorem~\ref{thm:NP-c}]
It is straightforward to see that {\sc BPDC} and {\sc BFPDC}  are  in {\sf NP}. To show {\sf NP}-hardness, we reduce \textsc{Plane Satisfiability with Connectivity of Variables} that was shown to be {\sf NP}-complete in Lemma~\ref{lem:sat}.

First, we consider {\sc BPDC}.

Let $(\phi,G')$ be an instance of \textsc{Plane Satisfiability with Connectivity of Variables}, where 
$\phi=C_1\wedge\ldots\wedge C_m$ is a Boolean formula with clauses $C_1,\ldots,C_m$ with at most 3 literals each over  variables $x_1,\ldots,x_n$ such that $G'_\phi$ is planar,
and $G'$ is a plane embedding of $G'_\phi$ such that $H(G')$ is connected. 
Recall that $H(G')$  is the bipartite graph with the bipartition of the vertex set $(R_\phi, F(G'))$, where $R_\phi=\{\{x_i,\overline{x}_i\}|1\leq i\leq n\}\subseteq E(G')$, and $F(G')$ is the set of faces of $G'$, and 
for $e\in R_\phi$ and $f\in F(G')$,
$\{e,f\}\in E(H(G'))$ if and only if the edge $e$ is incident to the face $f$ in $G'$. Notice that $\deg_{H(G')}(e)\leq 2$ for any $e\in R_\phi$.

We select an arbitrary vertex $r\in F(G')$ of $H(G')$. Using the connectedness of $H(G')$, we find in polynomial time a tree $T$ of shortest $(r,e)$-paths for $e\in R_\phi$ by the breadth-first search. We assume that $T$ is rooted in $r$ and it defines the parent-child relation on $T$. Let $L\subseteq R_\phi$ be the set of leaves of $T$, and let
$s=\max\{\dist_T(r,e)|e\in L\}$.

\begin{figure}[ht]
\centering\scalebox{0.65}{\input{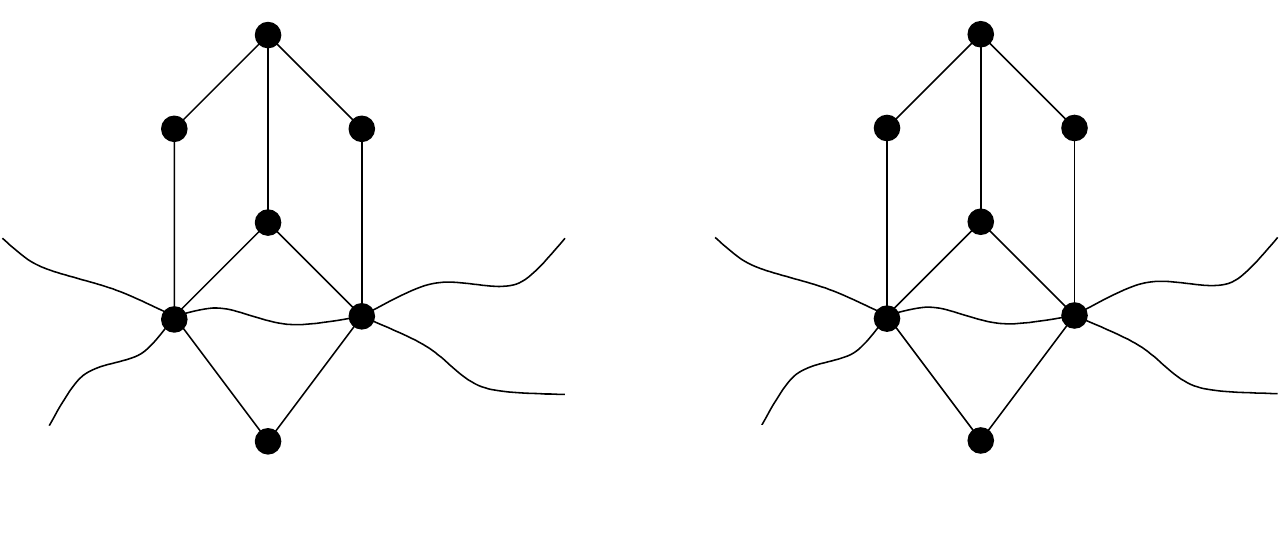_t}}
\caption{Construction of gadgets for $\{x_i,\overline{x}_i\}$.
\label{fig:edge}}
\end{figure}

We construct the plane graph $\hat{G}$ as follows.
\begin{itemize}
\item[i)] Construct a copy of $G'$. 
\item[ii)] For each vertex $f\in V(T)$ such that $f\in F(G')$, crate a vertex $v_f$ embedded in the face $f$.  
\item[iii)] For each $e=\{x_i,\overline{x_i}\}\in R_\phi\setminus L$, denote by $p$ its parent and by $c$ its child in $T$, construct vertices $y_i,\overline{y}_i,z_i$ and edges 
$\{x_i,y_i\}, \{y_i,v_p\},\{x_i,v_c\}, \{x_i,z_i\}$, $\{\overline{x}_i,\overline{y}_i\}, \{\overline{y}_i,v_p\},\{\overline{x}_i,v_c\},  \{\overline{x}_i,z_i\},\{z_i,v_p\}$ and embed them as is shown in Fig.~\ref{fig:edge} a).
Denote by $f_i$ the inner face of the cycle $x_iy_iv_pz_ix_i$ and by $\overline{f}_i$  the inner face of the cycle $\overline{x}_i\overline{y}_iv_pz_i\overline{x}_i$.
\item[iv)] For each $e=\{x_i,\overline{x_i}\}\in L$, denote by $p$ its parent in $T$, construct vertices $y_i,\overline{y}_i,z_i,w_i$ and edges 
$\{x_i,y_i\}, \{y_i,v_p\},\{x_i,w_i\}, $ $\{x_i,z_i\}$, $\{\overline{x}_i,\overline{y}_i\},$ $ \{\overline{y}_i,v_p\},\{\overline{x}_i,w_i\}, $ $ \{\overline{x}_i,z_i\},\{z_i,v_p\}$ and embed them as is shown in Fig~\ref{fig:edge} b).
Denote by $f_i$ the inner face of the cycle $x_iy_iv_pz_ix_i$ and by $\overline{f}_i$  the inner face of the cycle $\overline{x}_i\overline{y}_iv_pz_i\overline{x}_i$.
\item[v)] For each $i\in\{1,\ldots,n\}$ and $j\in\{1,\ldots,m\}$, if $\{x_i,C_j\}\in E(G')$ ($\{\overline{x}_i,C_j\}\in E(G')$ resp.), replace this edge by a $(x_i,C_j)$-path (by $(\overline{x}_i,C_j)$-path resp.) of length $2s-\dist_{T}(r,\{x_i,\overline{x}_i\})$.
\end{itemize}
We denote the constructed at this stage graph by $\hat{G}_1$. Observe that $\hat{G}_1$ is connected. Hence, each face has a facial walk. Denote by $\ell$ the length of a longest facial walk in $\hat{G}_1$.  Now we proceed with construction of $\hat{G}$.
\begin{itemize}
\item[vi)] For each face $f\in F(\hat{G}_1)$ distinct from the faces $f_i,\overline{f}_i$ for $i\in\{1,\ldots,n\}$, attach a web to $f$.
\end{itemize}
Denote the constructed at this stage graph by $\hat{G}_2$. Notice that $\hat{G}_2$ is 3-connected due to attached webs.
\begin{itemize}
\item[vii)] For $j\in\{1,\ldots,m\}$, select a face $f$ of the obtained graph such that $C_j$ is incident to $f$ and attach a mast of height $\ell+2s$ rooted in $C_j$ to $f$ (notice that the boundary of $f$ is a triangle because of attached webs).
\item[viii)]  For each $e=\{x_i,\overline{x_i}\}\in L$, attach a mast of height $\ell+4s-1-\dist_T(r,e)$ rooted in $w_i$ to the face with the facial walk $w_ix_i\overline{x}_iw_i$.
\item[ix)] For the vertex $v_r$, select a face $f$ with a triangle boundary such that $v_r$ is incident to $f$ (such a face always exists due to attached webs) and attach a mast of height $\ell+8s$ rooted in $v_r$ to $f$.
\end{itemize}
Notice that the obtained graph $\hat{G}$ is 3-connected  because $\hat{G}_2$ is 3-connected and attachments of masts cannot destroy 3-connectivity. Also only the faces $f_i,\overline{f}_i$ for $i\in\{1,\ldots,n\}$ have degree 4, and all other faces have degree 3.

To complete the construction of an instance of {\sc BPDC}, we set $q=n$ and $d=2\ell+12s$.

We show that $(\phi,G')$ is a yes-instance of \textsc{Plane Satisfiability with Connectivity of Variables} if and only if $(\hat{G},q,d)$ is a yes-instance of  {\sc BPDC}.

Suppose that $(\phi,G')$ is a yes-instance of \textsc{Plane Satisfiability with Connectivity of Variables}. Assume that the variables $x_1,\ldots,x_n$ have values such that $\phi=true$.  For $i\in\{1,\ldots,n\}$, if $x_i=true$, then we add an edge $\{x_i,v_p\}$ for the parent $p$ of $\{x_i,\overline{x}_i\}$ in $T$ and embed this edge in $f_i$. Respectively, 
we add an edge $\{\overline{x}_i,v_p\}$  and embed this edge in $\overline{f}_i$ if $x_i=false$. Denote the obtained graph by $\hat{G}'$. We show that $\diam(\hat{G}')\leq d$.

By the construction of $\hat{G}_1$, for any vertex $v\in V(\hat{G}_1)$, $\dist_{\hat{G}_1}(v_r,v)\leq 3s$. 
By Lemma~\ref{lem:web}, any vertex  $v\in V(\hat{G}_2)$ is at distance at most $\ell$ from a vertex of $\hat{G}_1$ in $\hat{G}_2$. Hence, for any vertex $v\in V(\hat{G}_2)$, $\dist_{\hat{G}_2}(v_r,v)\leq \ell+3s$. Observe also that for any $e=\{x_i,\overline{x_i}\}\in L$, $\dist_{\hat{G}'}(v_r,w_i)=\dist_T(r,e)+1$.
To show that for any $u,v\in V(\hat{G}')$, $\dist_{\hat{G}'}(u,v)\leq d$, we consider
five cases.

\noindent
{\bf Case 1.} $u,v\in V(\hat{G}_2)$. Because $\dist_{\hat{G}_2}(v_r,u)\leq \ell+3s$ and $\dist_{\hat{G}_2}(v_r,v)\leq \ell+3s$, $\dist_{\hat{G}'}(u,v)\leq \dist_{\hat{G}_2}(u,v)\leq 2\ell+ 6s\leq d$.

\noindent
{\bf Case 2.} $u,v$ are vertices of the same mast attached to a face of $\hat{G}_2$. By Lemma~\ref{lem:mast}, $\dist_{\hat{G}'}(u,v)$ is at most the height of the mast, and we have that 
 $\dist_{\hat{G}'}(u,v)\leq \ell+8s\leq d$.

\noindent
{\bf Case 3.} $u\in V(\hat{G}_2)$ and $v$ is a vertex of a mast attached to a face of $\hat{G}_2$. By Lemma~\ref{lem:mast}, $\dist_{\hat{G}'}(u,v_r)\leq \ell+8s$ if the mast is rooted in $v_r$.
Suppose that this mast is rooted in some other vertex $z$, i.e., $z=w_i$ or $z=C_j$ for some $i\in\{1,\ldots,n\}$, $j\in \{1,\ldots,m\}$. Then
$\dist_{\hat{G}'}(u,v_r)\leq \ell+4s-1+\dist_{\hat{G}_1}(z,r)\leq \ell+8s$.
Because $\dist_{\hat{G}'}(v_r,v)\leq\dist_{\hat{G}_2}(v_r,v)\leq \ell+3s$, $\dist_{\hat{G}'}(u,v)\leq 2\ell+11s\leq d$.

\noindent
{\bf Case 4.} $u,v$ are vertices of distinct masts attached to faces of $\hat{G}_2$ that are rooted in $z,z'\neq v_r$ respectively. If $z=w_i$ for some $i\in\{1,\ldots,n\}$, then 
$\dist_{\hat{G}'}(u,v_r)\leq \ell+4s-1-\dist_T(r,e)+\dist_{\hat{G}'}(v_r,w_i)\leq (\ell+4s-1-\dist_T(r,e))+(\dist_T(r,e)+1)\leq \ell+4s$ where $e=\{x_i,\overline{x}_i\}$.
If $z=C_j$ for some $j\in\{1,\ldots,m\}$, then  $\dist_{\hat{G}'}(u,v_r)\leq \ell+2s+\dist{\hat{G}_1}(C_j,v_r)\leq \ell+5s$. Clearly, the same bounds hold for $\dist_{\hat{G}'}(v,v_r)$.
We have that  $\dist_{\hat{G}'}(u,v)\leq \dist_{\hat{G}'}(u,v_r)+\dist_{\hat{G}'}(v_r,v)\leq 2\ell+10s\leq d$.

It remains to consider the last case.

\noindent
{\bf Case 5.} $u,v$ are vertices of masts attached to faces of $\hat{G}_2$ such that $u$ is in the mast rooted in $v_r$ and $v$ is in a mast rooted in $z\neq v_r$. 
Suppose that $z=w_i$ for  some $i\in\{1,\ldots,n\}$. Then $e=\{x_i,\overline{x}_i\}\in L$. We have that 
 $\dist_{\hat{G}'}(u,v)\leq \dist_{\hat{G}'}(u,v_r)+\dist_{\hat{G}'}(v_r,w_i)+\dist_{\hat{G}'}(w_i,v)\leq
(\ell+8s)+(\dist_T(r,e)+1)+(\ell+4s-1-\dist_T(r,e))\leq 2\ell+12s\leq d$. Assume that $z=C_j$ for $j\in\{1,\ldots,m\}$. Then the clause $C_j$ in $\phi$ contains a literal that has the value $true$.
Let $x_i$ be such a literal (the case when $C_j$ contains some $\overline{x}_i=true$ is symmetric).  Notice that if $x_i=true$, then for the vertex $x_i\in V(\hat{G}')$, 
$\dist_{\hat{G}'}(x_i,v_r)=\dist_T(e,r)$ for $e=\{x_i,\overline{x}_i\}$ by the construction of $\hat{G}$ and the selection of the added edges. Then, 
 $\dist_{\hat{G}'}(u,v)\leq \dist_{\hat{G}'}(u,v_r)+\dist_{\hat{G}'}(v_r,x_i)+\dist_{\hat{G}'}(x_i,C_j)+\dist_{\hat{G}'}(C_j,v)\leq
(\ell+8s)+\dist_T(r,e)+(2s-\dist_T(r,e))+(\ell+2s)\leq 2\ell+12s\leq d$.

Suppose now that $(\hat{G},q,d)$ is a yes-instance of  {\sc BPDC}. Let $A$ be a set of at most $q$ edges such that the graph $\hat{G}'$ obtained from $\hat{G}$ by the addition of $A$ has diameter at most $d$. Because only the faces $f_i,\overline{f}_i$ for $i\in\{1,\ldots,n\}$ have degree 4 and all other faces have degree 3, each edge of $A$ has its end-vertices in the boundary of some $f_i$ or $\overline{f}_i$ and is embedded in this face. Using this observation, denote by $\hat{G}_1'$ and $\hat{G}_2'$ the graphs obtained from $\hat{G}_1$ and $\hat{G}_2$ respectively by the addition of $A$.  Let $v_r'$ be the pole of the mast rooted in $v_r$. Because $\diam(\hat{G}')\leq d$, for any $u\in V(\hat{G}')$, $\dist_{\hat{G}'}(v_r',u)\leq d$ and, in particular, it holds for poles of other masts.

Consider masts rooted in $w_i$ for $e=\{x_i,\overline{x}_i\}\in L$. For a mast rooted in $w_i$, denote by $w_i'$ its pole. By Lemma~\ref{lem:mast}, $\dist_{\hat{G}'}(v_r',w_i')=
\dist_{\hat{G}'}(v_r',v_r)+\dist_{\hat{G}'}(v_r,w_i)+\dist_{\hat{G}'}(w_i,w_i')=(\ell+8s)+ \dist_{\hat{G}'_2}(v_r,w_i)+(\ell+4s-1-\dist_T(r,e))$, and by Lemma~\ref{lem:web},
$\dist_{\hat{G}'_2}(v_r,w_i)=\dist_{\hat{G}'_1}(v_r,w_i)$. We conclude that $\dist_{\hat{G}'_1}(v_r,w_i)\leq \dist_T(r,e)+1$. Because $\dist_T(r,e)+1\leq s+1$, a shortest $(v_r,w_i)$-path in 
$\hat{G}_1'$ does not contain the vertices $C_j$ for $j\in\{1,\ldots,m\}$. We obtain that for every edge $e'=\{x_h,\overline{x}_h\}$ that lies on the unique $(r,e)$-path in $T$, $\{x_i,v_p\}\in A$ or $\{\overline{x}_i,v_p\}\in A$ where $p$ is the parent of $e'$ in $T$. This holds for each leaf of $T$.
Because $R_\phi\subseteq V(T)$ and $k=n$, we have that for each $h\in\{1,\ldots,n\}$, either
$\{x_i,v_p\}\in A$ or $\{\overline{x}_i,v_p\}\in A$ where $p$ is the parent of $\{x_h,\overline{x}_h\}$ in $T$. 
For $h\in \{1,\ldots,n\}$, we let the variable $x_h=true$ if $\{\overline{x}_i,v_p\}\in A$ and $x_h=false$ otherwise. 
We show that this assignment satisfies $\phi$. 

Consider a clause $C_j$ for $j\in\{1,\ldots,m\}$. To simplify notations, 
assume that $C_j$ contains literals $x_{i_1},x_{i_2},x_{i_3}$ (the cases when $C_j$ contains two literals and/or some literals are negations of variables are considered in the same way). Let $C_j'$ be the pole of the mast rooted in the vertex $C_j$. We have that $\dist_{\hat{G}'}(v_r',C_j')\leq d$. 
By Lemma~\ref{lem:mast},$\dist_{\hat{G}'}(v_r',C_j')=
\dist_{\hat{G}'}(v_r',v_r)+\dist_{\hat{G}'}(v_r,C_j)+\dist_{\hat{G}'}(C_j,C_j')=(\ell+8s)+ \dist_{\hat{G}'_2}(v_r,C_j)+(\ell+2s)$,
 and by Lemma~\ref{lem:web}, $\dist_{\hat{G}'_2}(v_r,C_j)=\dist_{\hat{G}'_1}(v_r,C_j)$.
Therefore, $\dist_{\hat{G}'_1}(v_r,C_j)\leq 2s$. Let $e_h=\{x_{i_h},\overline{x}_{i_h}\}$ for $h\in\{1,2,3\}$. By the construction of $\hat{G}'$, 
$\dist_{\hat{G}'_1}(v_r,C_j)=\min\{\dist_{\hat{G}'_1}(v_r,x_{i_h})+(2s-\dist_T(r,e_h))|1\leq h\leq 3 \}$.
Let $\dist_{\hat{G}'_1}(v_r,C_j)=\dist_{\hat{G}'_1}(v_r,x_{i_h})+(2s-\dist_T(r,e_h))$ for $h\in\{1,2,3\}$.
It follows that $\dist_{\hat{G}'_1}(v_r,x_{i_h})\leq \dist_T(r,e_h)$, and this immediately implies that $\{v_p,x_{i_h}\}\in A$ where $p$ is the parent of $e_h$ in $T$. By the definition, $x_{i_h}=true$ and, therefore, $C_j=true$. This holds for each $C_j$ for $j\in\{1,\ldots,m\}$, and we conclude that $\phi=true$.

To complete  the proof of the {\sf NP}-hardness of {\sc BPDC}, it remains to observe that $\hat{G}$ can be constructed in polynomial time.

To show {\sf NP}-hardness of {\sc BFPDC}, we use similar arguments. 

Let $(\phi,G')$ be an instance of \textsc{Plane Satisfiability with Connectivity of Variables}, where 
$\phi=C_1\wedge\ldots\wedge C_m$ is a Boolean formula with clauses $C_1,\ldots,C_m$ with at most 3 literals each over  variables $x_1,\ldots,x_n$ such that $G'_\phi$ is planar,
and $G'$ is a plane embedding of $G'_\phi$ such that $H(G')$ is connected. 
As before, we pick an arbitrary vertex $r\in F(G')$ of $H(G')$ and  find  a tree $T$ rooted in $r$ of shortest $(r,e)$-paths for $e\in R_\phi$ 
with the set of leaves  $L\subseteq R_\phi$. 
Let $s=\max\{\dist_T(r,e)|e\in L\}$.

\begin{figure}[ht]
\centering\scalebox{0.65}{\input{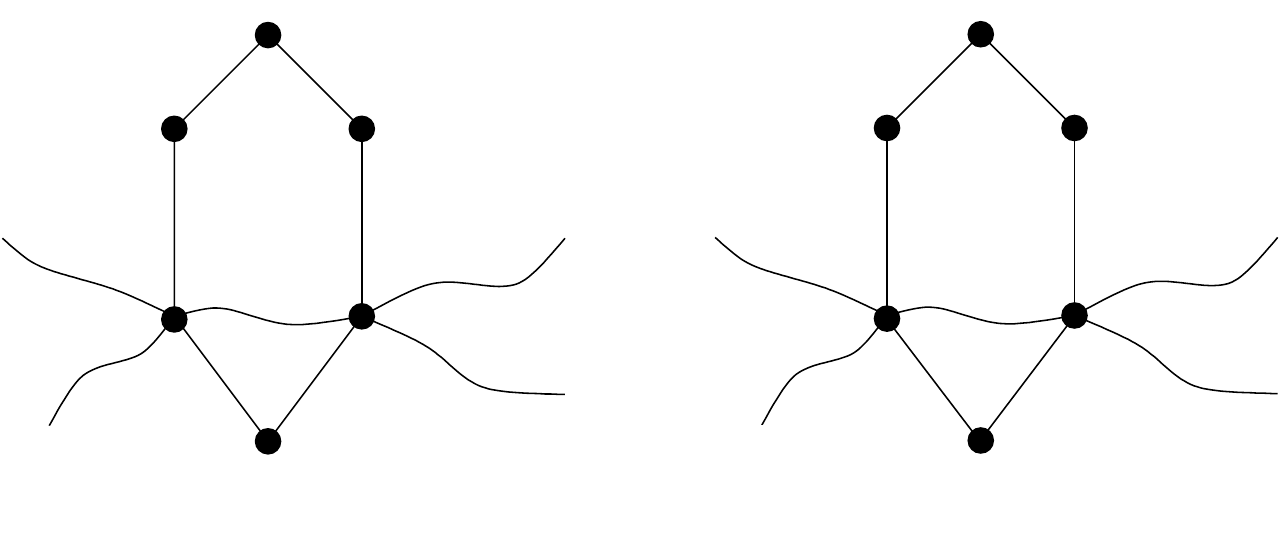_t}}
\caption{Construction of gadgets for $\{x_i,\overline{x}_i\}$.
\label{fig:edge-ii}}
\end{figure}

We construct the plane graph $\tilde {G}$ similarly to the construction of $\hat{G}$ above. The only difference is that Steps iii) and iv) are replaced by the following steps iii$^*$) and iv$^*$).
\begin{itemize}
\item[iii$^*$)] For each $e=\{x_i,\overline{x_i}\}\in R_\phi\setminus L$, denote by $p$ its parent and by $c$ its child in $T$, construct vertices $y_i,\overline{y}_i$ and edges 
$\{x_i,y_i\}, \{y_i,v_p\},\{x_i,v_c\}$, $\{\overline{x}_i,\overline{y}_i\}, \{\overline{y}_i,v_p\},\{\overline{x}_i,v_c\}$ and embed them as is shown in Fig.~\ref{fig:edge-ii} a).
Denote by $f_i$ the inner face of the cycle $x_iy_iv_p\overline{y}_i\overline{x}_ix_i$.
\item[iv$^*$)] For each $e=\{x_i,\overline{x_i}\}\in L$, denote by $p$ its parent in $T$, construct vertices $y_i,\overline{y}_i,w_i$ and edges 
$\{x_i,y_i\}, \{y_i,v_p\},\{x_i,w_i\}$, $\{\overline{x}_i,\overline{y}_i\}, \{\overline{y}_i,v_p\},\{\overline{x}_i,w_i\}$ and embed them as is shown in Fig~\ref{fig:edge-ii} b).
Denote by $f_i$ the inner face of the cycle $x_iy_iv_p\overline{y}_i\overline{x}_ix_i$.
\end{itemize}
Observe that $\tilde{G}$ can be obtained from $\hat{G}$ by the deletion of the vertices $z_1,\ldots,z_n$, and for any $u,v\in V(\tilde{G})$, $\dist_{\tilde{G}}(u,v)=\dist_{\hat{G}}(u,v)$.
Notice that the obtained graph $\tilde{G}$ is 3-connected,  the faces $f_1,\ldots,f_n$ have degree 5, and all other faces have degree 3.
To complete the construction of an instance of {\sc BFPDC}, we set $k=1$ and $d=2\ell+12s$.

We show that $(\phi,G')$ is a yes-instance of \textsc{Plane Satisfiability with Connectivity of Variables} if and only if $(\tilde{G},k,d)$ is a yes-instance of  {\sc  BFPDC}.

Suppose that $(\phi,G')$ is a yes-instance of \textsc{Plane Satisfiability with Connectivity of Variables}. Assume that the variables $x_1,\ldots,x_n$ have values such that $\phi=true$.  For $i\in\{1,\ldots,n\}$, if $x_i=true$, then we add an edge $\{x_i,v_p\}$ for the parent $p$ of $\{x_i,\overline{x}_i\}$ in $T$ and embed this edge in $f_i$. Respectively, 
we add an edge $\{\overline{x}_i,v_p\}$  and embed this edge in $f_i$ if $x_i=false$. Denote the obtained graph by $\tilde{G}'$. By exactly the same arguments as for 
the proof of the inequality $\diam(\hat{G}')\leq d$, we have that $\diam(\tilde{G}')\leq d$.

Suppose now that $(\hat{G},k,d)$ is a yes-instance of {\sc  BFPDC}. Let $A$ be a set of edges such that the graph $\tilde{G}'$ obtained from $\tilde{G}$ by the addition of $A$ has diameter at most $d$. Because only the faces $f_1,\ldots,f_n$  have degree 5 and all other faces have degree 3, each edge of $A$ has its end-vertices in the boundary of some $f_i$ and is embedded in this face. 
Because $k=1$, at most one edge of $A$ is embedded in $f_i$ for $i\in\{1,\ldots,n\}$.  Let $v_r'$ be the pole of the mast rooted in $v_r$. Because $\diam(\tilde{G}')\leq d$, for any $u\in V(\tilde{G}')$, $\dist_{\tilde{G}'}(v_r',u)\leq d$ and, in particular, it holds for poles of other masts.
Consider masts rooted in $w_i$ for $e=\{x_i,\overline{x}_i\}\in L$. For a mast rooted in $w_i$, denote by $w_i'$ its pole. 
Because $\dist_{\tilde{G}'}(v_r',w_i\rq{})\leq d$, by the same arguments that were used above in the proof of the {\sf NP}-hardness of {\sc BPDC}, we obtain that it implies that for each $h\in\{1,\ldots,n\}$, either
$\{x_i,v_p\}\in A$ or $\{\overline{x}_i,v_p\}\in A$ where $p$ is the parent of $\{x_h,\overline{x}_h\}$ in $T$. 
For $h\in \{1,\ldots,n\}$, we let the variable $x_h=true$ if $\{\overline{x}_i,v_p\}\in A$ and $x_h=false$ otherwise. 
To prove that this assignment satisfies $\phi$, we again use the same arguments as above: it follows from the fact that for 
each clause $C_j$, $\dist_{\tilde{G}'}(v_r',C_j')\leq d$ where $C_j'$ is the pole of the mast rooted in the vertex $C_j$.

To complete  the proof of the {\sf NP}-hardness of {\sc BPDC}, it remains to observe that $\tilde{G}$ can be constructed in polynomial time.
\end{proof}

We proved that {\sc BPDC} is {\sf NP}-complete for 3-connected planar graphs. By the Whitney's theorem (see, e.g., \cite{Diestel12}), any two plane embeddings of a 3-connected plane graphs are equivalent. It gives the following corollary.

\begin{center}
\fbox{\small\begin{minipage}{12.7cm}
\noindent{\sc Bounded Budget Planar Diameter Completion}\\
{\sl Input}: A planar graph $G$, non-negative integers $k$ and $d$.\\
{\sl Output}:  Is it possible to obtain a planar graph $G'$ of diameter at most $d$ from $G$ by adding at most $k$ edges?
\end{minipage}}
\end{center}

\begin{corollary}\label{cor:planar-NPc}
\textsc{ Bounded Budget Planar Diameter Completion} is {\sf NP}-complete for 3-connected planar graphs.
\end{corollary}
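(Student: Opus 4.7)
The plan is to reduce from the 3-connected case of {\sc BPDC} established in Theorem~\ref{thm:NP-c}, using Whitney's theorem as the bridge between the plane (embedded) setting and the embedding-free planar setting. Membership in {\sf NP} is immediate: a certificate consists of at most $k$ additional edges, and one can verify planarity and compute the diameter of the resulting graph in polynomial time.

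For hardness, I would feed an instance $(\hat{G},q,d)$ of {\sc BPDC} as produced by the construction in the proof of Theorem~\ref{thm:NP-c} — crucially, that construction already guarantees that $\hat{G}$ is 3-connected — and output the triple $(\hat{G},q,d)$ viewed now as an instance of {\sc Bounded Budget Planar Diameter Completion}, i.e.\ forgetting the embedding of $\hat{G}$. The forward direction of the equivalence is trivial: any plane completion of the plane graph $\hat{G}$ is, in particular, a planar supergraph of $\hat{G}$ of diameter at most $d$ with at most $q$ added edges.

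The substantive direction is to argue that if there exists a planar graph $H \supseteq \hat{G}$ with $|E(H)|-|E(\hat{G})| \leq q$ and $\diam(H) \leq d$, then $H$ admits an embedding that restricts to the given embedding of $\hat{G}$, making it a valid plane completion. Since $\hat{G}$ is 3-connected, so is its supergraph $H$, and Whitney's theorem then guarantees that $H$ has a unique embedding in $\Bbb{S}_{0}$ up to homeomorphism. Restricting that embedding to $\hat{G}$ yields the unique embedding of $\hat{G}$, so every edge of $E(H) \setminus E(\hat{G})$ is drawn inside a face of the fixed embedding of $\hat{G}$. This exhibits the required plane completion and closes the equivalence.

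The only mild subtlety is verifying that a supergraph of a 3-connected graph is 3-connected, which is immediate since adding edges cannot decrease vertex-connectivity. Polynomial-time computability of the reduction is inherited from Theorem~\ref{thm:NP-c}, completing the argument.
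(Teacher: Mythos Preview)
Your proposal is correct and matches the paper's approach: the paper simply remarks that since $\hat{G}$ is 3-connected, Whitney's theorem makes the embedded and non-embedded problems equivalent, and states the corollary without further proof. One minor observation: you do not actually need $H$ to be 3-connected; it suffices that $\hat{G}$ is 3-connected, since any plane embedding of $H$ restricts to a plane embedding of $\hat{G}$, which by Whitney is equivalent to the given one---but your route via the 3-connectivity of $H$ is equally valid.
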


\section{Discussion} 

We remark that our algorithm still works for the  classic 
{\sc PDC}  problem 
when the  face-degree of the input graph is bounded. For this we define the following problem:
\begin{center}
\fbox{\small\begin{minipage}{12.7cm}
\noindent{\sc Bounded Face BDC (FPDC)}\\
{\sl Input}: a plane graph $G$ with face-degree at most $k\in \mathbb{N}_{\geq 3}$,  and $d\in \mathbb{N}$ \\
{\sl Question}: is it possible to add edges in $G$ such that the resulting embedding remains plane and has diameter  at most $d$?
\end{minipage}}
\end{center}
We directly have the following corollary of Theorem~\ref{one}.

\begin{theorem}
\label{two}
It is possible to construct an $O(n^{3})+2^{2^{O((kd)\log d)}} \cdot n$-step algorithm for {\sc FPDC}.
\end{theorem}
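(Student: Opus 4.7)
The plan is to deduce Theorem~\ref{two} as a direct consequence of Theorem~\ref{one}, by observing that {\sc FPDC} is essentially the instance of {\sc BBFPDC} obtained by setting the global edge budget $q$ to $\infty$ and the per-face budget to a value determined by the face-degree bound of the input.

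The key observation is combinatorial: if $G$ is a plane graph whose face-degree is at most $k$, then in any plane completion $H$ of $G$ at most $k-3$ edges can be inserted inside any face of $G$. Indeed, inside a face bounded by a simple cycle of length $f\leq k$, any set of non-crossing chords (added as simple edges, without creating multi-edges or crossings) has size at most $f-3$, as witnessed by triangulating the face. A short case analysis handles faces whose boundaries contain bridges or several connected components; in those cases the bound is still $O(k)$. Consequently, every plane completion of $G$ is automatically a $k'$-face completion for some $k'=O(k)$.

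I would then reduce an instance $(G,d)$ of {\sc FPDC} with face-degree at most $k$ to the instance $(G, q=\infty, k', d)$ of {\sc BBFPDC}. By the above observation the two instances admit the same \yes/\no answer: any plane completion of $G$ of diameter at most $d$ trivially satisfies the $q=\infty$ budget and, thanks to the face-degree bound, also satisfies the $k'$-face budget; conversely, such a completion of {\sc BBFPDC} is a plane completion of diameter at most $d$. Invoking the algorithm of Theorem~\ref{one} on this instance then yields a running time of
\[
O(n^{3}) + 2^{2^{O((k'd)\log d)}}\cdot (\alpha(\infty))^{2}\cdot n \;=\; O(n^{3}) + 2^{2^{O((kd)\log d)}}\cdot n,
\]
since $\alpha(\infty)=1$ and $k'=O(k)$, which is exactly the bound claimed in Theorem~\ref{two}.

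No additional algorithmic machinery is required, and the only technical point is the elementary combinatorial bound relating the per-face edge budget to the face-degree of the input; this is the ``main obstacle'', but it is entirely routine and does not involve the dynamic programming from Section~\ref{dmpop}.
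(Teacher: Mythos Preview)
Your proposal is correct and follows exactly the route the paper intends: the paper presents Theorem~\ref{two} as a direct corollary of Theorem~\ref{one} with no further argument, and your reduction of {\sc FPDC} to {\sc BBFPDC} with $q=\infty$ and per-face budget $k'=O(k)$ (using that a face of degree at most $k$ has at most $k$ boundary vertices, hence admits only $O(k)$ non-crossing new edges) is precisely the implicit reasoning behind that corollary. The only thing you add beyond the paper is spelling out the elementary combinatorial bound on the number of edges insertable per face, which the paper leaves tacit.
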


To construct an {\sf FPT}-algorithm for {\sc PDC} when parameterized by $d$
remains an insisting open problem. The reason why our approach 
does not apply (at least directly) for {\sc PDC}  is that, 
as long as a completion may  add an arbitrary number of edges in each face,
we cannot guarantee  that our dynamic programming algorithm will be applied 
on a graph of bounded branchwidth. We believe that our approach and, in particular, the 
machinery of our dynamic programming algorithm,
might be useful for further 
investigations on this problem.

All the problems in this paper are defined on plane graphs. However, one may
also consider the ``non-embedded'' counterparts of the  problems 
{\sc PDC} and {\sc BPDC} by asking that their input is a planar combinatorial graphs (without a particular embedding). Similarly, such a counterpart can also be defined 
for the case of {\sc BFPDC} if we ask whether the completion has an embedding with at most 
$k$ new edges per face. Again, all these parameterized problems are known
to be (non-constructively) in  {\sf FPT}, because of the results in~\cite{RobertsonS-XX,RobertsonS95b}. However,  our approach fails to design the corresponding 
algorithms as it strongly requires an embedding of the input graph. For this reason 
we believe that even the non-embedded versions of {\sc BPDC}  and {\sc BFPDC} 
are as challenging as the general {\sc Planar Diameter Completion} problem.\\

\noindent\textbf{Acknowledgement}. We would like to thank the anonymous referees of an earlier version of this paper for their  remarks and suggestions that improved the presentation of the paper.
%\bibliography{/Users/sedthilk/Dropbox/bib/complete}

%
%\bibliography{complete}
% 
%  

\end{document}